\newtheorem*{theorem*}{Theorem}
\newtheorem{theorem}{Theorem}[section]
\newtheorem{lemma}[theorem]{Lemma}
\newtheorem{proposition}[theorem]{Proposition}
\newtheorem{corollary}[theorem]{Corollary}
\theoremstyle{definition}
\newtheorem{definition}[theorem]{Definition}
\newtheorem{claim}[theorem]{Claim}
\crefname{claim}{Claim}{Claims}
\newtheorem{example}[theorem]{Example}
\newenvironment{algo}{\vspace{0.07in} \noindent \begin{minipage}	{\textwidth}%\begin{quote}
\hrule\vspace{0.01in}\hrule \vspace{0.05in}}{\vspace{-0.07in}\hrule\vspace{0.01in}\hrule \vspace{0.07in}%\end{quote} 
\end{minipage}}%
\newcommand{\edit}[1]{{\color{black}#1}}
\newcommand{\E}{\mathbb{E}}
\newcommand{\R}{\mathbb{R}}
\newcommand{\Rp}{{\R}_{+}}
\newcommand{\buyers}{{[n]}}
\newcommand{\items}{{[m]}}
\newcommand{\dist}{D}
\newcommand{\demand}{\mathcal{D}}
\newcommand{\alloc}{A}
\newcommand{\sample}{s}
\newcommand{\sampleprime}{s'}
\newcommand{\realvalue}{r}
\newcommand{\bundle}{B}
\newcommand{\allocprofile}{\mathbf{A}}
\newcommand{\allocations}{\mathcal{A}}
\newcommand{\ALG}{\mathsf{ALG}\xspace}
\newcommand{\GRD}{\mathsf{G}\xspace}
\newcommand{\OPT}{\mathsf{OPT}\xspace}
\newcommand{\GRDTWO}{\mathsf{\hat{G}}\xspace}
\newcommand{\bv}{\mathbf{v}} 
\newcommand{\bp}{\mathbf{p}}
\newcommand{\bs}{\mathbf{s}}
\newcommand{\br}{\mathbf{r}}
\newcommand{\bV}{\mathbf{V}}
\newcommand{\bS}{\mathbf{S}}
\newcommand{\bq}{\mathbf{q}}
\newcommand{\bw}{\mathbf{w}}
\newcommand{\EXP}[2][]{
    \ifthenelse{\equal{#1}{}}
    {\mathbb{E}\left[#2\right]}
    {\mathop{\mathbb{E}}_{#1}\left[#2\right]}
}
\newcommand{\PRO}[2][]{
    \ifthenelse{\equal{#1}{}}
    {\mathrm{Pr}\left[#2\right]}
    {\mathop{\mathrm{Pr}}_{#1}\left[#2\right]}
}
\newcommand{\base}{b}
\begin{document}

%% Title
\title{Online Combinatorial Allocations and Auctions with Few Samples}
%% Authors
\author{ Paul D\"utting\footnote{Google Research, Zurich, Switzerland. Email: duetting@google.com} \and Thomas Kesselheim\footnote{Institute of Computer Science and Lamarr Institute for Machine
Learning and Artificial
Intelligence, University of Bonn, Germany. Email:  thomas.kesselheim@uni-bonn.de}  \and Brendan Lucier\footnote{Microsoft Research New England, Cambridge, MA, USA. Email:  brlucier@microsoft.com} \and \and Rebecca Reiffenh\"auser\footnote{Institute for Logic, Language and Computation, University of Amsterdam, The Netherlands. Email:  r.e.m.reiffenhauser@uva.nl} \and Sahil Singla\footnote{School of Computer Science, Georgia Tech, Atlanta, GA, USA. Email: ssingla@gatech.edu. Supported in part by NSF award CCF-2327010.}}

\date{}

\maketitle

%%%%%ABSTRACT%%%%%
\begin{abstract}
In online combinatorial allocations/auctions, $n$ bidders sequentially arrive,  each with a combinatorial valuation (such as submodular/XOS) over subsets of $m$ indivisible items. The aim is to immediately allocate a subset of the remaining items  to maximize the total \emph{welfare}, defined as the sum of bidder valuations. 
A long line of work has studied this problem when the bidder valuations come from known \edit{independent} distributions.  In particular, for submodular/XOS valuations, we know $2$-competitive algorithms/mechanisms that set a fixed price for each item and the arriving bidders take their favorite subset of the remaining items given these prices.  However, these algorithms traditionally presume the availability of the underlying distributions as part of the input to the algorithm. 
Contrary to this assumption, practical scenarios often require the learning of distributions, a task complicated by limited sample availability. This paper investigates the feasibility of achieving $O(1)$-competitive algorithms under the realistic constraint of having access to only a limited number of samples from the underlying bidder distributions.

Our first main contribution shows that a mere single sample from each bidder distribution is sufficient to yield an $O(1)$-competitive algorithm for submodular/XOS valuations.  This result leverages a novel extension of the secretary-style analysis, employing the sample to have the algorithm compete against itself.  Although online,  this first approach does not provide an online truthful mechanism.  Our second main contribution shows that a polynomial number of samples suffices to yield a $(2+\epsilon)$-competitive online truthful mechanism for  submodular/XOS valuations and any constant $\epsilon>0$.   This result is based on a generalization of the median-based algorithm for the single-item prophet inequality problem to combinatorial settings with multiple items.  
\end{abstract}

\newpage

%\setcounter{tocdepth}{1}
% {\small
% %\begin{spacing}{0}
%    \tableofcontents
% %\end{spacing}
% }

%%%%%INTRODUCTION%%%%%
\section{Introduction}

Online combinatorial allocation is a fundamental problem in stochastic optimization, capturing a wide variety of resource allocation problems.  In such problems, requests arrive online for a common pool of indivisible  items/resources. 
Each request $i$ can be assigned some  combination of items/resources. The satisfaction of the request is  modeled by a valuation function $v_i$ (e.g. unit-demand, submodular, or XOS), drawn \edit{independently} from some  probability distribution $D_i$, quantifying the value of different  combinations. 
A decision-maker must immediately and irrevocably allocate to each request  a subset of the available items/resources, while maximizing  the total sum of valuations. The challenge is that allocation decisions must be made online without perfect foreknowledge of future requests.  Such online allocation problems capture wide-ranging applications, from cloud computation scheduling to e-commerce, and have attracted substantial attention from the theory community \cite[e.g.,][]{FeldmanGL15,DuettingFKL17,DuttingKL20,CorreaC23}.

Importantly, the performance achievable by the decision-maker depends crucially on how much information they have about the request distributions $D_i$. If nothing is known (i.e., the distributions are adversarial) then no bounded approximation to the offline optimal solution is possible even for allocating a single item. {On the other end of the spectrum, when valuations are drawn independently from \emph{known} distributions (the so-called prophet setting), sweeping positive results are possible. For example, in a celebrated result,  Feldman, Gravin, and Lucier \cite{FeldmanGL15} have shown that when the valuations $v_i$ are submodular (or even XOS), it is possible to achieve a (tight) factor $2$-approximation against the expected offline optimum.}

In the prophet setting, many known solutions for online combinatorial allocation have the additional desirable property of being \emph{truthful}, meaning that a self-interested requester (or bidder) can obtain no benefit by strategically manipulating their reported valuation $v_i$.  This is a non-trivial property even when the distributions are fully known in advance, since classical solutions from economics such as the VCG auctions \cite{Vickrey61,Clarke71,Groves73} do \emph{not} work for online bidders. The result of 
Feldman, Gravin, and Lucier \cite{FeldmanGL15}{, for example, achieves truthfulness by taking the form of} a ``posted-price mechanism.'' {This type of mechanism uses the advance knowledge of the valuation distributions to compute fixed item prices, and achieves truthfulness by letting each arriving bidder buy a utility-maximizing bundle of items.}

{
The two extremes discussed so far, no information about the distributions or full information, are respectively too pessimistic and overly optimistic. Indeed, in real-world scenarios, one would expect to have some (but not perfect) distributional knowledge through observational data.  
Motivated by this, Azar, Kleinberg, and Weinberg \cite{AzarKW14} suggest taking a sample complexity perspective and seek guarantees that apply when the underlying distributions are known only through a limited number of samples. 
This has led to a sequence of efforts~\cite{RubinsteinWW20,KaplanNR22,CDFFLLP-SODA22} expanding sample-based prophet inequalities for allocation problems, but fairly little was known about general submodular (and, more generally, XOS) valuations.}
{To address this gap, we ask:}

\begin{quote}
\emph{{For submodular (or even XOS) valuations,} how many samples from the distributions suffice to obtain constant-competitive online allocations?  How many samples are needed if we additionally require the algorithm to be truthful?}
\end{quote}

Given existing work that solves the online allocation problem for known distributions, a natural approach involves first learning the distribution, followed by algorithm creation tailored to the learned distribution. Yet, given the inherent imperfection in distribution learning, this strategy often culminates in suboptimal performance.  Indeed, our investigation indicates that prevailing posted-price mechanisms 
are intrinsically reliant on precise knowledge of distributions. This dependency, in essence, stems from the need to learn the ``mean'' of certain random variables, a feat unattainable with a limited sample set unless we adopt stringent assumptions, such as bounded support for valuations.

An alternative approach, which is rooted in the realm of Data-Driven Algorithm Design \cite{Balcan-Chp20},  proposes bypassing distribution learning and instead directly learning algorithms from samples.  
This approach, especially when applied to online settings with combinatorial assignments, introduces significant complexities even when we set aside truthfulness and concentrate solely on constructing an online allocation algorithm to optimize welfare.

In response to the above questions,  our contributions are twofold:
\begin{enumerate}
\item 
\emph{Online Combinatorial Allocation:} In the non-strategic setting with request valuations that are submodular (or even XOS), we establish that even a single sample from each distribution suffices to obtain an $O(1)$-competitive online allocation algorithm.  {This presents as significant advancement, given that prior single-sample results} were limited to specific submodular valuations like unit-demand or budget-additive \cite{KaplanNR22,CDFFLLP-SODA22}.

\item \emph{Posted-Price Mechanisms}: We prove that, given bidder valuations that are submodular (or more generally XOS),  a polynomial number of samples  can effectively inform the choice of fixed item prices that yield $(2+\epsilon)$-competitive posted-price mechanisms for any constant $\epsilon>0$.  
Given the intrinsic unlearnability of a random variable's ``mean'', this achievement is anchored in pioneering a ``median'' pricing strategy for combinatorial settings---potentially a topic of independent interest. {Prior to this work, it was only known how to obtain truthful, constant-factor approximations with poly-many samples under stringent assumptions on the  distributions, such as boundedness \cite{FeldmanGL15}.}
\end{enumerate}

\subsection{Median Pricing and Secretary-Style Analysis for Single Item}

To build intuition { for our results},  we examine the scenario of a single-item allocation/auction, analogous to the single-choice prophet inequality problem \cite{KrengelS77,KrengelS78}.  Here, we encounter $n$ bidders arriving sequentially, each with a valuation $v_i$ for the item drawn from an independent distribution.   We need to design an online algorithm to maximize the value of the bidder who receives the item.   A well-known strategy achieves a $2$-competitive ratio by assigning the item to the first bidder whose valuation  exceeds half the expected maximum value \cite{KleinbergW12}. The posted-price mechanisms introduced by Feldman, Gravin, and Lucier \cite{FeldmanGL15} extend this mean-based philosophy to the multi-item, combinatorial auction framework.

However, the landscape changes when we possess only samples of the bidders' valuation distributions. Learning the mean of such distributions is infeasible with a sample-based approach, particularly for distributions with small, yet pivotal probabilities. {For instance,  consider an instance with $n=2$ bidders where $v_1 = 1$ w.p.  $1$  and $v_2 = 1/\epsilon$ w.p.  $\delta$ and is $0$ otherwise,  where $\epsilon,\delta$ are   small  unknown constants.    We need at least $\approx 1/\delta$ samples to see a non-zero $v_2$;  however,  $\epsilon,\delta$ could be arbitrarily small,  so we cannot (approximately) learn $\E[\max \{v_1,v_2\}]$ for this distribution. }

Nonetheless, even with limited samples, achieving \(O(1)\)-competitive algorithms is plausible through two principal methods, both circumventing the need to ascertain the distribution means and instead focusing on constant probability selection of bids.

\medskip
\noindent \textbf{Secretary-Style Pricing.}
One methodology is to select a threshold based on the maximum value among a single sample from each bidder's distribution \cite[e.g.,][]{AzarKW14,RubinsteinWW20}. 
The reason this algorithm is $O(1)$-competitive is due to a Secretary-Style argument: 
  among all samples  and real values,  with constant probability,  the maximum is among the real values and the second-highest is among the samples.  Although it has a simple analysis for single-item,  the difficulty of extending this to combinatorial valuations is unclear since there is no natural notion of ``maximum'' and ``second-highest'' due to item dependencies.  Our single-sample $O(1)$-competitive online combinatorial allocation algorithms are based on generalizing this secretary-style approach.

\medskip
\noindent \textbf{Median Pricing.}
Alternatively,  we could  pivot from mean to median valuation estimations.
A classic result of Samuel-Cahn~\cite{SamuelCahn84} shows that selecting the first bidder with value above the median $\tau$  is $2$-competitive,  i.e.,  $\tau$ is the solution to $\mathbb{P}\big[\max v_i \geq \tau \big] = 1/2$.  
With a finite sample set, a threshold \(\hat{\tau}\) can be inferred, which aligns the maximum valuation probability within an acceptable \(\epsilon\) range of the ideal half, thus yielding a \(2+O(\epsilon)\)  competitive algorithm. However, the  challenge in extending median-based approaches to combinatorial auctions is that it's not obvious how to define median for combinatorial valuations due to item dependencies.  
Our polynomial-sample $O(1)$-competitive posted-price mechanism for online combinatorial auctions is  based on generalizing this median pricing approach.

%------------------------------------------------

%------------------------------------------------
\subsection{Online Combinatorial Allocation}

Our first result extends existing single-sample methodologies for the online allocation problem (such as the one for a single item discussed above) all the way to submodular (and even XOS) valuations.  \edit{Note that a single sample from the product distribution over bidder valuations corresponds to drawing one sample from each of the bidder's (independent) valuation distributions.}

\begin{theorem} \label{thm:mainSingleSample}
\edit{Given a single sample from any product distribution over submodular (or even XOS) bidder valuations, 
there exists an $O(1)$-competitive online algorithm for the combinatorial allocation problem.} 
\end{theorem}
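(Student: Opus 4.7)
The plan is to extend the single-item secretary-style pricing to \textsf{XOS} valuations by using the sample profile to synthesize a posted-price mechanism, following the Feldman--Gravin--Lucier (\textsf{FGL}) balanced-prices template but with distributional expectations replaced by a single realization. Concretely, I would draw one sample profile $\bs=(s_1,\dots,s_n)$, compute an offline welfare-maximizing allocation $(B_1^\bs,\dots,B_n^\bs)$ on $\bs$, and for each bidder $i$ extract an \textsf{XOS} supporting additive function $a_i^\bs:\items\to\Rp$ with $v_i^\bs(B_i^\bs)=\sum_{j\in B_i^\bs}a_i^\bs(j)$ and $v_i^\bs(T)\ge\sum_{j\in T}a_i^\bs(j)$ for all $T$. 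For each item $j\in\bigcup_i B_i^\bs$, let $\sigma(j)$ denote its (unique) sample-owner and set the posted price $p_j=\alpha\,a_{\sigma(j)}^\bs(j)$ for a small constant $\alpha\in(0,1)$ chosen at the end; items outside $\bigcup_i B_i^\bs$ are priced at $+\infty$ (or equivalently withheld). Upon arrival, each real bidder $i$ takes a utility-maximizing bundle from the remaining items.

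The analysis would follow the two-sided FGL decomposition of welfare into collected revenue and the residual utility claimed by each arriving bidder. For revenue, the key exchangeability observation is that $\bs$ and the realized profile $\br$ are i.i.d.\ from the same product distribution, so $\EXP{\sum_i v_i^\bs(B_i^\bs)}=\EXP{\OPT(\br)}$, and hence the total ``posted value'' satisfies $\EXP{\sum_j p_j}=\alpha\cdot\EXP{\OPT}$. The challenge is to argue that a constant fraction of this posted value actually materializes as revenue on $\br$; here I would invoke a Secretary-style coupling on $(\bs,\br)$ to assert that, with constant probability over the random prices, the real-realization collects revenue comparable to $\sum_j p_j$. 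For the utility side, I would show that for each real bidder $i$, conditioning on a constant-probability event over $\bs$ and the history, a suitable ``target bundle'' drawn from the real-\OPT\ allocation $B_i^{*r}$ is still largely available when $i$ arrives, so that the utility-maximizer $i$ picks yields value at least $v_i^r(B_i^{*r})-\sum_{j\in B_i^{*r}}p_j$ in expectation up to constants. Adding the two lower bounds and tuning $\alpha$ gives the $O(1)$-competitive ratio.

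The main obstacle is precisely the step that \textsf{FGL} handles cleanly when true distributions are available but becomes delicate with a single sample: bounding the probability that each item $j$ in the target bundle is still unsold when bidder $i$ arrives. In the classical analysis one uses $p_j=\tfrac{1}{2}\EXP{\text{marginal contribution of }j\text{ to OPT}}$ and a Markov argument on expected expenditure. With only one sample, the prices $p_j$ are themselves random and correlated across items, and ``$j$ is still unsold'' depends on the XOS demand choices of all earlier bidders under this same random price vector. To overcome this, I expect to need a per-bidder swap argument: for bidder $i$, condition on the values of every other bidder and swap $s_i\leftrightarrow r_i$, so by symmetry a constant-probability event identifies a bundle in $B_i^{*r}$ whose sample-prices are ``small'' relative to its real value. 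Combined with a Markov-style bound on the aggregate expenditure before bidder $i$, this should yield that a constant fraction of the intended items from $B_i^{*r}$ survive to time $i$, and using the subadditive/XOS inequality on the surviving subset recovers a constant fraction of $v_i^r(B_i^{*r})$, summing to $\Omega(\EXP{\OPT})$.
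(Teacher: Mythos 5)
Your proposed approach—compute an offline optimum on the single sample, extract XOS supporting prices, post a constant fraction of them as item prices, and run an FGL-style revenue-plus-utility accounting—is precisely one of the candidate approaches the paper explicitly rules out. Section~\ref{sec:simpleAlgos} (``Balanced Prices from a Single Sample'') gives a counterexample: take $n=m$ and a distribution over unit-demand profiles where player $n$ values a uniformly random item $j^*$ at $1$ and every other item at $\epsilon \ll 1/n$, while each other player values every item at $\epsilon$. On the sample, your procedure places price $\alpha$ on the single item that was ``special'' for player $n$ in the sample (call it $j_s$), and price $\alpha\epsilon$ on all other items. On the real realization, player $n$'s special item $j_r$ is uniformly random and hence differs from $j_s$ with probability $1-1/m$. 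In that event $j_r$ is priced at $\alpha\epsilon$ and is bought long before player $n$ arrives, while the one expensive item $j_s$ remains but gives player $n$ negative utility. The algorithm's total welfare is $O(n\epsilon + 1/m)$ against $\OPT \approx 1$, so the competitive ratio is unbounded. Crucially, this is not a gap in the analysis of your mechanism but a failure of the mechanism itself: the posted value $\sum_j p_j$ is $\Theta(1)$, but only a $\Theta(1/m)$ fraction of it is ever collected, so no secretary-style coupling or Markov bound on expenditure can rescue the decomposition.

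The conceptual issue is that a single sample gives you a welfare-calibrated \emph{total} price mass (your $\EXP{\sum_j p_j}=\alpha\cdot\EXP{\OPT}$ observation is correct), but it utterly misallocates that mass across items: the item that ``should'' carry a high price in the real realization carries a low sample price, and vice versa. FGL's Markov argument works because their prices are calibrated \emph{per item} to expected contributions, a quantity you cannot learn from one sample. Your proposed per-bidder swap of $s_i\leftrightarrow r_i$ doesn't help because it is measure-preserving: after the swap the roles of $j_s$ and $j_r$ simply exchange, and in either configuration player $n$'s valuable item is underpriced and gets sniped. Also note your algorithm is a truthful posted-price mechanism; the paper's \Cref{thm:mainSingleSample} makes no truthfulness claim, and the paper explicitly remarks that its single-sample algorithm is \emph{not} truthful (the truthful result, \Cref{thm:mainPolySamples}, needs polynomially many samples). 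If your argument were correct you would have proven a strictly stronger statement, which should itself be a warning sign.

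The paper's actual proof takes a quite different route. It runs buyer-wise greedy with powers-of-two prices on the sample (rather than computing an offline OPT and halving supporting prices), and then—effectively via a second sample obtained by the $k$-to-$1$ sample reduction of \Cref{thm:reduction}—sets an additional per-item ``base price'' equal to the largest supporting bid for that item in the second sample. An arriving real bidder selects a demand bundle at the greedy prices but is only given the items whose supporting bid strictly beats the base price. This is where the secretary-style argument lives: with probability $\ge 1/4$, the item's top bid is real and its second-highest is in the sample, so exactly the top bidder clears the base price (\Cref{lem:ALG_LB}). The remaining work is the ``max vs.\ sum'' martingale lemma (\Cref{lem:MaxVsSum}), made possible by the powers-of-two price updates, which lets the per-item max contribution stand in for the total greedy contribution. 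Your proposal has none of these ingredients—no base prices, no rounding of prices, no non-truthful ``withhold part of the demand bundle'' step—and cannot be patched into them.
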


We actually prove an even stronger result, showing that the same guarantee can be obtained in the Game of Googol model, introduced in \cite{CorreaCES20}, with $k = 1$ sample per buyer (\Cref{thm:main-single-sample}). 
In this model, an adversary inscribes \(k+1\) valuation functions on \(k+1\) facets of a die for each buyer. The die is then tossed, revealing \(k\) valuation functions to the online algorithm, while the \((k+1)\)-st—the true valuation—remains hidden.  The algorithm must then make online allocations that are competitive with the best possible allocation in expectation over the die rolls. 
Any guarantee made in this model naturally extends to the independent samples model. Moreover, via a reduction from \cite{CDFFLLP-SODA22}, this framework implies the first \(O(1)\)-competitive ``order-oblivious'' secretary algorithm for XOS combinatorial auctions: here, after receiving a constant fraction of the online elements, chosen uniformly at random, as a sample, the leftover elements arrive in adversarial order.

\begin{corollary}\label{cor:xos-oos} There exists an $O(1)$-competitive  order-oblivious secretary algorithm  for the combinatorial allocation problem with submodular (and even XOS) buyers.
\end{corollary}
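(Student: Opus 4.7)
The plan is to obtain the corollary as a black-box consequence of \Cref{thm:main-single-sample} (the Game of Googol version of \Cref{thm:mainSingleSample}) combined with the reduction of \cite{CDFFLLP-SODA22}. No new analysis of XOS valuations is needed; the work has already been done inside the single-sample algorithm, and all that remains is to translate its guarantee into the order-oblivious secretary (OOS) model.

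First, I would invoke \Cref{thm:main-single-sample}, which supplies an $O(1)$-competitive online combinatorial allocation algorithm $\mathcal{A}$ in the Game of Googol model with $k=1$ per buyer. Recall that in this model, for each buyer the adversary inscribes two valuation functions on a two-faced die; after the toss, one face is revealed to $\mathcal{A}$ as a sample while the other is the true valuation. The crucial symmetry property is that, conditional on the (unordered) pair, the two faces are exchangeable.

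Second, I would apply the OOS-to-single-sample reduction of \cite{CDFFLLP-SODA22}. In the OOS setting, a uniformly random subset $R \subseteq [n]$ of bidders (of constant expected fraction) is first revealed in its entirety as a sample, and then the remaining bidders $[n]\setminus R$ arrive in adversarial order. The reduction couples this setting with the Game of Googol by randomly pairing each arriving bidder with a sample bidder and feeding each such pair to $\mathcal{A}$ as a (sample, real) pair. Because the random choice of $R$ makes which half of each pair is revealed exchangeable, the pair distribution that $\mathcal{A}$ sees is statistically indistinguishable from a Game-of-Googol input (up to conditioning on the realized pair), and the OOS benchmark is preserved up to a constant factor in expectation over the random subset. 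Composition with $\mathcal{A}$ then yields an $O(1)$-competitive OOS algorithm.

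Third, I would verify that the reduction is agnostic to the valuation class: it manipulates only which bidders are exposed as samples versus real arrivals, while the combinatorial structure of submodular/XOS valuations lives entirely inside $\mathcal{A}$. The main obstacle, and the reason I would cite \cite{CDFFLLP-SODA22} directly rather than re-derive, is checking that the coupling between ``random subset revealed'' in OOS and ``random face shown'' in the Game of Googol preserves the welfare benchmark up to a constant factor even when the bidder distributions are arbitrary and heterogeneous; this is exactly the content of their reduction in the combinatorial-auction form we need.
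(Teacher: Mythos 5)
Your proposal is correct and takes essentially the same route as the paper, which proves the corollary in a single sentence by combining \Cref{thm:main-single-sample} (the pointwise $O(1)$-competitive $1$-SPI in the Game of Googol model) with Theorem~6.2 of \cite{CDFFLLP-SODA22}. One small note: what makes the composition work is that \Cref{thm:main-single-sample} gives a \emph{pointwise} (Game-of-Googol) guarantee rather than a distributional one — this is the hypothesis the \cite{CDFFLLP-SODA22} reduction actually requires, since OOS valuations are adversarial — and you correctly invoke that stronger form.
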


{
Before describing our techniques,  we discuss why simple approaches don't work. We outline the hardness examples here and discuss them in  detail in \Cref{sec:simpleAlgos}.

\medskip
\noindent\textbf{Limitations of Prior Approaches.}
Probably the most natural idea, inspired from single-item analysis, is to   set the largest observed  value in sample $S$ as a threshold, i.e.,    item $j$'s  price  equals $\max_{i\in S}v_i(j)$. This fails when a single unit-demand buyer has a large value for every item but  has only a small contribution to the optimum welfare.  
A second idea would be to compute the offline optimum allocation for the sample, then price each item according to its ``supporting price'' under that allocation.\footnote{For unit-demand buyers, the supporting price of an item $j$ allocated to agent $i$ is $v_i(j)$.  See Section~\ref{sec:prelims} for the general definition.}
This  doesn't work (even for deterministic valuations) since a unit-demand buyer that contributes most of the welfare might now purchase a suboptimal item at such a high price. 

Motivated by balanced pricing \cite{FeldmanGL15,DuettingFKL17}, a third idea is to instead compute an offline optimal allocation on the single sample, and then set each item's price to be half of its supporting price.  This also does not work.  Suppose $n=m$ and consider a distribution $D$ over profiles of unit-demand valuations, such that player $n$ has value $1$ for one of the items (uniformly at random), and value $\epsilon \ll 1/n$ for the remaining items.  Each other player has value $\epsilon$ for every item. Then this pricing approach will result in price $1/2$ for one of the items (uniformly at random), say item $j$, and price $\epsilon/2$ for all other items.  On the real valuation, the optimal allocation has value $1 + O(n\epsilon)$  but unless $j = i$, player $n$ will receive no items, so the total value obtained will be $O(n\epsilon)$.

\medskip

Given the limitations of prior approaches, we take a totally different approach by first considering a simplified setting where items can be reallocated to the current bidder.
}

\medskip
\noindent\textbf{Greedy with Reallocations.} 
Consider a simplified setting in which we are permitted to reassign items previously allocated to the \(i\)-th bidder. This relaxation substantially simplifies the problem, allowing a straightforward constant approximation \emph{bidder-wise greedy} algorithm, known since at least \cite{DobzinskiNS05}.  Here, each $i$-th bidder may claim any item but must compensate for the value lost by the item's prior owner. In essence, the $i$-th bidder encounters item prices $\bp^{(i)}$, with each item's price reflecting its current contribution to the welfare as determined by its supporting XOS valuation, and may acquire any item (including previously allocated) by remunerating this price. This approach is effective because any highly-valued item can always be repurchased, ensuring that either the current or the previous bidder holds significant valuation for it.

\medskip
\noindent\textbf{Two-Samples Algorithm.} 
 Implementing the buyer-wise greedy strategy online without the luxury of reallocating items poses the principal challenge. Let's approach this by presuming the availability of two samples from each distribution. 
We start by using the first sample to establish item prices, employing the buyer-wise greedy algorithm on this sample to set the prices $\bp^{(i)}$ that the $i$-th real bidder  faces.  The underlying thought is that the `marginal' value ascribed by both the $i$-th sample and the real $i$-th bidder should be analogous, since they face the same prices and are drawn from the same distribution.   However, this alone does not preclude the possibility of item reallocation. Ideally, we wish to allocate the $i$-th bidder a subset of these items that are unlikely to be reassigned subsequently.

\medskip
\noindent\textbf{Competing with Yourself: Secretary-Style Argument.} 
Our foremost insight is the conceptualization of the algorithm as \emph{competing against itself}. While still utilizing the first sample to set item prices \(\bp^{(i)}\) for the \(i\)-th bidder, we observe the price dynamics on the \emph{second sample} with reallocations.  Specifically, we determine which bidder secures each item \(j\) post-reallocations in the second sample, and use the supporting value of this bidder  to set a `base price' \(b_j\) for every item \(j\). As the actual \(i\)-th bidder arrives, they aim to select their optimal bundle \(\bundle_i\) based on the initial sample prices \(\bp^{(i)}\) (assuming full item availability), yet we only assign them a subset of these items that surpass the base prices \(b_j\) and remain unclaimed.  
The intuition is that since the algorithm on the second-sample and the real-sample are facing the same prices,  a secretary-style argument  should imply that with constant probability only the max-bidder for an item $j$ will be allocated the item.

\medskip
\noindent\textbf{Max vs. Sum.} 
Although utilizing the second-sample to set base prices averts reallocation, it poses a risk of diminishing welfare relative to the buyer-wise greedy model.    More precisely,  our \cref{lem:ALG_LB} only implies that the expected contribution of any item $j$ to the welfare of the algorithm  is at least a constant fraction of the \emph{maximum} of the max $j$-bid in the real valuations and in the second sample..  Conversely,  we only know that the \emph{sum} of the $j$-bids in the real valuations and the second sample  is at least a constant fraction of  the buyer-wise greedy algorithm.  
Our key technical contribution, encapsulated in \cref{lem:MaxVsSum}, demonstrates that rounding prices to powers-of-two allows the max and the sum of bids to be comparable. The lemma's proof employs a carefully constructed Martingale argument to circumvent dependencies that might emerge between the samples and actual valuations, such as those arising from base pricing.

Finally,  we extend this proof to single sample  by observing that any $k$-sample result can be turned into a single sample result by losing only an $O(k)$ term (\Cref{thm:reduction}).

%------------------------------------------------
\subsection{Posted-Price Mechanisms}

Our second main result addresses the design of truthful posted-price mechanisms using only sample access to bidder distributions. 
 As mentioned earlier,  posted-price mechanisms are highly desirable due to their practical appeal.  
 Existing posted-price mechanism designs depend on mean-based pricing, which is untenable to implement with unknown and potentially unbounded distributions.
{ Our single-sample approach from the previous section is inherently non-truthful since it provisionally assigns a utility-maximizing set of items to a bidder assuming all items are available, but ultimately only assigns them a subset of the items that are still available.  }

 We bypass these issues by establishing a median-based pricing system for combinatorial settings. 
 In the single-item case, the median price $\tau$ satisfies the condition \(\mathbb{P}[\max v_i \geq \tau] = 1/2\). In combinatorial settings, we will likewise seek prices such that each item sells with probability $1/2$.  We then use median-based pricing to design posted-price mechanisms from samples.

\begin{theorem} \label{thm:mainPolySamples}
Given polynomially many samples from 
any product distribution 
\edit{over submodular (or even XOS) bidder valuations and a}  constant $\epsilon>0$, there exists a $(2+\epsilon)$-competitive truthful posted-price mechanism  for online combinatorial auctions. 
\end{theorem}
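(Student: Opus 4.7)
The plan is to generalize the single-item Samuel--Cahn median-threshold argument to combinatorial allocations with XOS valuations. First I would identify the right notion of ``median prices'': a price vector $\bp = (p_1,\ldots,p_m)$ such that when the posted-price mechanism is run against the true product distribution, each item $j$ is sold with probability close to $1/2$. I would show that such prices exist via a monotonicity / fixed-point argument: scaling a coordinate $p_j$ from $0$ to $\infty$ monotonically lowers the selling probability of $j$, so a coordinate-descent (tâtonnement) or Brouwer-style argument produces a price vector that simultaneously balances all coordinates.

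Next I would prove these prices yield a $2$-approximation to the expected offline optimum. The argument is a combinatorial analogue of Samuel--Cahn: fix a realization $\bv$, let $(\bundle_i^*(\bv))_i$ be an offline optimum with XOS-supporting additives $a_i$, and decompose welfare as revenue plus bidder utility. Each bidder's utility is at least $a_i(\bundle_i^* \cap \mathrm{avail}_i) - \sum_{j\in\bundle_i^*\cap\mathrm{avail}_i} p_j$ by the utility-maximizing posted-price choice, and additivity of $a_i$ lets the sum over bidders be rearranged per item. Letting $i^*(j)$ denote the bidder with $j \in \bundle_{i^*(j)}^*$ and $V_j = a_{i^*(j)}(j)$, the Samuel--Cahn identity
\[
\E[V_j] \le \E[(V_j - p_j)^+] + p_j,
\]
combined with $\Pr[j \text{ available at } i^*(j)] \ge 1 - \Pr[j \text{ sold}] \ge 1/2$, controls the utility and revenue terms against $\E[V_j]$ per coordinate, so summing yields $\E[\ALG] \ge \E[\OPT]/2$.

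Third, I would turn exact median prices into sample-based prices. For any fixed price vector the empirical selling probability of each item concentrates (Hoeffding) with polynomially many samples, and by uniform convergence or a discretization argument over a polynomial-size grid of candidate prices (e.g.\ built from observed sample values), one finds $\hat\bp$ with every item's true selling probability in $[1/2 - \delta, 1/2 + \delta]$ for $\delta = \mathrm{poly}(\epsilon)$. A robustness calculation shows that an $O(\delta)$ perturbation of the median condition degrades the analysis above by at most $\epsilon \cdot \E[\OPT]$, giving a $(2+\epsilon)$-approximation. Truthfulness is automatic: the mechanism is posted-price with fixed item prices.

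I expect the main obstacle to be step two. In the single-item case ``$j$ available'' coincides with ``$j$ not yet sold,'' but in the combinatorial setting availability of $j$ at bidder $i^*(j)$ is correlated with the selling events of other items, and the XOS-supporting additives $a_i$ are themselves random and coupled to $\bv$. Carefully handling this coupling to recover the tight constant $2$ (rather than a looser $4$) requires telescoping the revenue gain from ``sold'' items against the ``discount'' from available OPT items without double-counting. A secondary obstacle in step three is that finding sample-based prices satisfying the median condition for all items \emph{simultaneously} is more delicate than in single item, because the selling probability of $j$ depends on prices of other items through the mechanism's allocation; this likely requires a joint iterative procedure (analyzed on samples via a union bound) rather than coordinate-wise thresholding.
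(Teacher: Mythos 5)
Your high-level roadmap matches the paper's: define median prices as a vector $\mathbf{p}$ under which each item sells with probability close to $1/2$, establish existence by a fixed-point argument exploiting monotonicity of the sale probabilities, prove an $\alpha$-approximation by splitting welfare into revenue (bounded below via $\Pr[j\text{ sold}]\ge\alpha$) plus buyer surplus (bounded below via availability $\ge\alpha$), and learn approximate median prices from polynomially many samples. So the overall architecture is right.

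However, there is a genuine gap precisely where you flag uncertainty, and your proposed fix (``telescoping the revenue gain against the discount from available OPT items'') is not the resolution. The obstruction you identify is that $B_i^*(\mathbf{v})$, the supporting prices $a_i$, the identity of $i^*(j)$, and the event $\{j\text{ available}\}$ are all determined by the same realization $\mathbf{v}$, so you cannot factor $\mathbb{E}\bigl[(V_j - p_j)^+\cdot\mathbf{1}[j\text{ avail at }i^*(j)]\bigr]$ as $\mathbb{E}[(V_j-p_j)^+]\cdot\Pr[j\text{ avail}]$. The paper breaks this correlation with the standard \emph{hallucinated valuations} device from balanced pricing: agent $i$ targets not $A^*_i(\mathbf{v})$ but $A^*_i(v_i,\mathbf{v}'_{-i})$ for a \emph{fresh independent draw} $\mathbf{v}'_{-i}\sim D_{-i}$. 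Because the target bundle (and its supporting prices) now depends only on $(v_i,\mathbf{v}'_{-i})$, which is independent of $\mathbf{v}_{-i}$, the availability probability $\Pr_{\mathbf{v}_{-i}}[j\in M_i(\mathbf{p},\mathbf{v})]\ge\alpha$ can be pulled out of the expectation, and a relabeling of $(\mathbf{v},\mathbf{v}')$ restores $\sum_{j\in A^*_i(\mathbf{v})}(w_{ij}(\mathbf{v})-p_j)^+$ on the right-hand side. This is the missing idea; without it, the per-item Samuel--Cahn decoupling does not go through in the combinatorial setting.

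There is also a secondary gap in your learning step. You propose a polynomial-size grid of candidate prices built from observed sample values plus a Hoeffding/union bound. In the single-item case this works because the sale event is a one-dimensional threshold; but here the demand correspondence of each agent changes across exponentially many hyperplanes in price space (one for every pair of bundles $S,S'\subseteq[m]$ per agent), so a grid built on observed values does not control sale probabilities between grid points, and a naive union bound over the grid does not yield uniform convergence over all $\mathbf{p}\in\mathbb{R}_+^m$. The paper instead bounds the VC dimension of the family $\{S_j^{\mathbf{p}}\}$ (valuation profiles on which item $j$ sells at prices $\mathbf{p}$) by $O(m^2 + m\log n)$, using a hyperplane-arrangement count in price space (following Balcan et al.), which is what actually gives uniform convergence from $\tilde{O}(m^2/\epsilon^2)$ samples. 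A further technicality you do not anticipate is tie-breaking for non-generic distributions, which forces the paper to augment the price vector with a tie-breaking vector and handle existence and learnability jointly for the pair.
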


We note that the approximation factor in this theorem is nearly tight. This is because even when considering the sale of a single item with known product distributions, online combinatorial auctions capture  single item prophet inequality, where the tight competitive ratio is $2$.

An important subtlety in posted-price mechanisms is the handling of agent indifferences. Even with a single item, if the value distributions have atoms, the exact probability of sale may vary based on an agent's decision-making when the price matches her valuation. To navigate the complexities tied to these tie-breaking scenarios, we initially consider \emph{generic} valuation distributions. This term refers to a generalization of the atomless condition, ensuring that for any specified price vector $\bp$, occurrences of indifference are extremely unlikely. This assumption is practical and realistic, as all distributions become generic with the introduction of even minimal noise. Nonetheless, we later extend our results to include general distributions. For the sake of clarity, we will temporarily set aside the issue of tie-breaking.

\medskip
\noindent\textbf{Combinatorial Median Prices. }
The median price in single-item sales ensures that the item is sold exactly half the time, and unsold for the remainder. We extend this concept to define a median price vector $\mathbf{p}$ for combinatorial settings such that using $\mathbf{p}$ in a posted-price mechanism ensures every item $j$ is sold with a probability of exactly one-half.  The existence of such a median price vector is not immediately apparent, but in Section~\ref{sec:medianPrices} we apply the Kakutani fixed-point theorem to affirm that median price vectors always exist.
%\snote{for generic distributions}.
The reason median prices are interesting is because they possess several amazing properties that we describe next.

\medskip
\noindent\textbf{Posted-Pricing with Median Prices. } The first nice property of median prices that we discover is that they  yield a posted-price mechanism that is $O(1)$-competitive for auctions with submodular (and even XOS) buyers.  
This result greatly generalizes  the classical Optimal Stopping Theory result of Samuel-Cahn \cite{SamuelCahn84}   for the case of a single-item.  The high-level intuition why median prices help   is that they ensure any item $j$ will be available (not sold before) for its  optimal bidder $i$  with at least $1/2$ probability.  Thus,  if item $j$'s median price is low then bidder $i$ can easily purchase it to imply good welfare.  Else,  when $j$'s median price is very high,  we can again use the median-price property that $j$ gets sold with  at least $1/2$ a probability,  so the revenue (and hence welfare)  from $j$ should be large.   
The adaptation of this intuitive understanding to a formal proof requires careful consideration; e.g.,  since the bidders are  stochastic,  the optimal bidder $i$ for any fixed item $j$ is a random bidder and we cannot directly apply the above proof argument.

\medskip
\noindent\textbf{Learning Median Prices. } 
Another beneficial characteristic of median prices is their learnability in combinatorial auction settings with submodular/XOS valuations. We establish that a polynomial number of samples are sufficient to simultaneously approximate the sale probability for each item associated with every price vector \(\bp\) within a small margin of $\pm \epsilon$. This proof employs uniform convergence principles and involves bounding the VC dimension for learning the performance of all price vectors. Our argument for bounding the VC dimension builds on the techniques of \cite{BalcanSV18,BalcanDDKSV21}. Consequently, we demonstrate that the median prices based on empirical data closely estimate those for the underlying unknown distributions.

Finally, we shift our focus to the computational aspects of implementing posted-price mechanisms with combinatorial median prices. To illustrate the usefulness of our approach, we present efficient fully polytime algorithms for unit-demand bidders.

%--------------------------------------------------------------------------------------------------
\subsection{Further Related Work}

\noindent\textbf{Prophet Inequality and Combinatorial Extensions.} Our work and questions are closely related to the literature on prophet inequalities. 
{  The classic prophet inequality problem corresponds to online combinatorial auction with a single item (i.e., $m=1$).}
Prophet inequalities were greatly studied in the Optimal Stopping community in the late 70s early 80s, and the basic prophet inequality problem was solved by  Krengel and Sucheston (who also credit Garling) \cite{KrengelS77,KrengelS78} by giving a tight $2$-approximation. Later,  Samuel-Cahn \cite{SamuelCahn84} showed that this optimal guarantee can be obtained with simple price-based algorithms.

Pioneering work in Computer Science by \cite{HajiaghayiKS07} and  \cite{ChawlaHMS10} demonstrated the usefulness of the prophet inequality paradigm for algorithmic mechanism design.
The connection motivated extensions of the prophet inequality paradigm to combinatorial settings such as matroids \cite{KleinbergW12,FeldmanSZ16} or combinatorial auctions \cite{FeldmanGL15,DuettingFKL17}. These extensions also established the two main frameworks for proving prophet inequalities, namely \emph{online contention resolution} \cite{FeldmanSZ16} and \emph{balanced pricing} \cite{KleinbergW12,FeldmanGL15,DuettingFKL17}. 

Most relevant to our work is the aforementioned work of \cite{FeldmanGL15}, who give a tight $2$-approximation for combinatorial auctions with XOS buyers. This algorithm is based on static item prices, and is therefore truthful. The same approach yields a $O(\log m)$ approximation for subadditive buyers.  Two recent breakthroughs for combinatorial auctions with subadditive buyers, first improved this bound to $O(\log \log m)$ \cite{DuttingKL20}, and then to $O(1)$ \cite{CorreaC23}. 

{ Very recently, \cite{Banihashem-HKKO24} introduced a general approach for turning any prophet inequality algorithm into a pricing-based truthful mechanism. 
For combinatorial auctions, given a non-truthful algorithm as input, this implies a bundle-pricing algorithm (instead of item pricing). This reduction requires a polynomial number of samples  in the size of  outcome space, which is exponential in $m$ for combinatorial auctions.
}

\medskip
\noindent\textbf{Prophet Inequalities with Few Samples.} The study of prophet inequalities from samples was pioneered in \cite{AzarKW14}. Their main result is an approximation-factor preserving reduction from single-sample prophet inequalities to  \emph{order-oblivious} secretary problems. 
See the book chapter \cite{GS-Book20} for further discussion on order-oblivious secretary algorithms. Using this, \cite{AzarKW14}  obtain constant-sample $O(1)$-competitive prophet inequalities for several (restricted) matroid settings and for constant-degree bipartite matching with edge arrivals. They also obtain a $4$-competitive single-sample algorithm for the single choice problem. 
An $e$-competitive secretary algorithm for the online combinatorial allocation problem with XOS buyers was given in \cite{KRTV-ESA13}. However, prior to \Cref{cor:xos-oos}, no order-oblivious secretary algorithm was known for this problem.

{ Starting with \cite{CorreaDFS19} and \cite {RubinsteinWW20} a significant amount of work has examined the single-choice/single-item prophet inequality problem with samples. Key findings of this line of work include that the factor-$2$ single-choice prophet inequality can be attained with only a single sample per buyer (even in the stronger Game of Googol model) \cite{RubinsteinWW20}, while a constant number of samples suffice to get arbitrarily close to the worst-case approximation guarantees for the iid version, the random order, and free order variant of the problem \cite{RubinsteinWW20,GuoEtAl21,CristiZilliotto24}.

In independent work,  \cite{KaplanNR20} introduced the \emph{competitive analysis with a sample} framework. In this framework, an adversary first writes down $m = h + n$ numbers. Afterwards, a randomly chosen subset of size $h$ of these numbers is revealed to the algorithm, who then either observes the remaining $n$ numbers in adversarial order or in random order. The goal is to be competitive with the expected maximum value in this second set of numbers. For the case where $h/n \geq 1$, they give a tight $2$-approximation for the adversarial order version of the problem. For the case where the actual values arrive in random order they give a $(e+1)/e$ approximation.}

Closely related to our work, \cite{CDFFLLP-SODA22} study the single-sample problem for combinatorial settings. They present a general framework for deriving single sample results from greedy algorithms and obtain $O(1)$-competitive prophet inequalities for matchings with vertex or edge arrivals. 
Beyond this, they obtain improved single-sample constant-factor guarantees for a variety of (restricted) matroids and for combinatorial auctions with budget additive bidders. Budget-additive valuations are a strict subclass of submodular valuations, which are themselves a strict subclass of XOS valuations.
In another closely related independent paper, \cite{KaplanNR22} (also see the arXiv version for improved bounds) show $O(1)$-competitive algorithms across a range of the aforementioned settings.

For the general case of online combinatorial auctions with XOS buyers, an $O(1)$-approximation up to an additive error term of $\epsilon$ can be obtained with $\mathsf{poly}(n,m,1/\epsilon)$ samples \cite{FeldmanGL15,DuettingFKL17}. This multiplicative/additive approximation guarantee  turns into a purely multiplicative guarantee only under stringent boundedness assumptions.  {In addition, a \emph{non-truthful} purely multiplicative constant-factor approximation for XOS bidders with $O(n)$ samples per bidder was known (e.g., \cite{CorreaC23}).} The bounds of \cite{AS-FOCS19,AKS-SODA21} for XOS bidders arriving in a random order can be interpreted as giving a truthful $poly(\log\!\log m)$-approximation with a single sample. However, it's unclear how to extend them to an $O(1)$-approximation, even with more more samples.
Finally, \cite{DuttingK19} study prophet inequalities and posted-pricing when we are given distributions that are close to the actual distributions (in some metric).

%%%%%PRELIMINARIES%%%%%
\section{Preliminaries}
\label{sec:prelims}

\noindent\textbf{Online Combinatorial Auctions.} 
We have a set of buyers $\buyers$, a set of items $\items$, and valuation functions $v_i: 2^{\items} \rightarrow \Rp$ that assign each buyer $i \in \buyers$ a value for each subset of items. We assume that the valuation functions are normalized, so that $v_i(\emptyset) = 0$ for all $i \in \buyers$, and monotone increasing, i.e., for every $i \in \buyers$, $S, T \subseteq \items$ with $S \subseteq T$ it holds that $v_i(S) \leq v_i(T)$.
We write $\bv = (v_1, v_2, \ldots, v_n)$ for a valuation profile. 
We further assume that each valuation function $v_i$ is independently drawn from an according, unknown distribution $\dist_i$, and call the tuple of all $\dist_i$ distribution $\dist$.

An assignment of the items to the buyers is a collection of sets $\alloc_i \subseteq \items$ such that $\alloc_i \cap \alloc_j = \emptyset$ for all $i,j \in \buyers$ with $i \neq j.$ 
We write $\allocprofile$ for the tuple of $\alloc_i$ with $i \in \buyers$, and $\allocations$ for all possible allocations. 
The (social) welfare of an assignment is $\sum_{i \in \buyers} v_i(\alloc_i)$.

Our goal is to find a welfare-maximizing assignment of items to buyers, when items are available offline but buyers arrive online in some (previously known, but adversarially chosen) order. 
We assume for convenience that buyer $i$ will arrive at time $i$, for $i\in \buyers$.

\medskip\noindent\textbf{Competitive Ratio.} 
Our focus in this work is on online algorithms $\ALG$ for this problem that have sample access to $\dist$. For this we will assume that $\ALG$ receives $k$ sample valuation functions for each buyer $i \in \buyers$, sampled independently from $\dist_i$. We will typically refer to these as $s_i^1, s_i^2, \ldots, s_i^k$, and use $v_i$ for the actual valuation function. (We will also sometimes write $S_i^1, S_i^2, \ldots, S_i^k$ and $V_i$ to emphasize that these are random variables.)

Let's denote by $\ALG_i(\bv; \bs^1, \ldots, \bs^k)$ the allocation to buyer $i \in \buyers$ on input $\bv = (v_1, v_2, \ldots, v_n)$ and samples $\bs^\ell = (v_1^\ell, \ldots, v_n^\ell)$ for $\ell \in [k]$. We say that $\ALG$ is $\alpha$-competitive if, for all distributions $\dist$, it holds that
\begin{align*}
\EXP[\bv \sim \dist, (\bs^1, \ldots, \bs^k) \sim \dist^k]{\sum_i v_i(\ALG_i(\bv; \bs^1, \ldots, \bs^k))} \geq \alpha \cdot \EXP[\bv \sim \dist]{\max_{\allocprofile \in \allocations} \sum_{i \in \buyers} v_i(\alloc_i)}.
%\label{eq:competitive-ratio}
\end{align*}
The existence of such an online algorithm $\ALG$ establishes an \emph{$\alpha$-competitive $k$-sample prophet inequality} (or $\alpha$-competitive $k$SPI).

\medskip\noindent\textbf{Game of Googol.} We will also employ a somewhat stronger model,  
the Game of Googol with a $(k+1)$-faceted dice. 
In this model an adversary, for each buyer $i \in \buyers$, writes down $k+1$ valuation functions $v_i^1, v_i^2, \ldots, v_i^{k+1}$, and then these valuation functions are assigned to $S_i^1, S_i^2, \ldots, S_i^k$ and $V_i$ uniformly at random. The $S_i^j$ are the samples, and $V_i$ is the actual valuation. The goal is then to show that for any possible choice of the adversary,
\begin{align*}
\EXP{\sum_i v_i(\ALG_i(\bV; \bS^1, \ldots, \bS^k)} \geq \alpha \cdot \EXP{\max_{\allocprofile \in \allocations} \sum_{i \in \buyers} v_i(\alloc_i)}.
%\label{eq:competitive-ratio}
\end{align*}
All other aspects of the problem, including the fixed (worst-case) arrival order, remain untouched.
We refer to such a result as an $\alpha$-competitive pointwise $k$-sample prophet inequality (or an $\alpha$-competitive P-$k$SPI). 
Observe that the existence of an $\alpha$-competitive P-$k$SPI implies an $\alpha$-competitive $k$SPI (simply because instead of choosing the valuations adversarially, we could also have drawn them from distributions $\dist$).

\medskip\noindent\textbf{Classes of Valuations.} All our valuation functions $v: 2^{[m]} \rightarrow \Rp$  are monotone, i.e., for any item bundles $A \subseteq B \subseteq [m]$, we have $v(A)\leq v(B)$. 
We mostly focus on XOS valuation functions \cite{Feige09}, but also briefly touch upon the subclasses of submodular and unit-demand valuations: 
\begin{itemize}
\item \textbf{XOS valuations:} Valuation function $v: 2^\items \rightarrow \Rp$ is XOS if there exist a collection of additive functions $\{a_{\ell}: 2^\items \rightarrow \Rp\}_{\ell=1,\ldots,k}$ such that for each bundle of items $B \subseteq \items$ it holds that $v(B) = \max_{\ell = 1, \ldots, k} a_\ell(B) = \max_{\ell = 1, \ldots, k} \sum_{j \in B} a_{\ell}(j)$.\\
For an XOS valuation function $v$ we refer to the additive function that defines $v$ on the set of items $B \subseteq \items$ as the additive supporting function of $v$ on $B$. Let's denote this function by $a_B$. Note that this function satisfies $\sum_{j \in B'} a_B(j) \leq v(B')$ for all bundles $B' \subseteq \items$, and $\sum_{j \in B'} a_B(j) = v(B')$ for $B' = B$. We will also require that $a_B(j) = 0$ for $j \not \in B$.

\item \textbf{Submodular valuations:} A  valuation function $v: 2^\items \rightarrow \Rp$ is submodular if  for any two item bundles  $A \subseteq B \subseteq \items$ and any item $j \in  \items \setminus B$, we have
$v(A \cup j) - v(A) \geq v(B \cup j) - v(B)$.

\item \textbf{Unit-demand valuations:} A unit-demand valuation function $v: 2^\items \rightarrow \Rp$ is such that for each bundle of items $B \subseteq \items$, we have $v(B) = \max_{j\in B} v(j)$. 
\end{itemize}

We will denote the contribution of item $j$ to the value of a bundle of items $B \subseteq \items$ given valuation function $v$ as $a_j(v, B)$, and define this to be the value assigned to $j$ by the additive supporting function of $v$ on $B$. 

\medskip\noindent\textbf{Prices, Utilities, and Demand.} For a vector of item prices $\bp = (p_1, \ldots, p_m) \in \Rp^m$ and a set of items $M \subseteq [m]$, we write $\demand(v,\bp,M)$ for the set of bundles $B \subseteq M$ that maximizes the utility $u^v(B,\bp) = v(B) - \sum_{j \in B} p_j$. We refer to $\demand(v,\bp,M)$ as the demand at valuation $v$ given prices $\bp$ when the set of available items is $M$. We use the shorthand $\demand(v,\bp)$ when $M = [m]$, or when $M$ is clear from the context. (Note that we can always simulate that some item $j \in [m]$ is not available by setting $p_j$ high enough.)

\medskip\noindent\textbf{Oracle Access.} Since any explicit description of a XOS valuation function $v: 2^\items \rightarrow \Rp$ would have size that is exponential in $m$, we consider three types of oracle access:
\begin{itemize}
\item \textbf{Value oracle:} A value oracle takes as input a valuation function $v$ and a bundle of items $B \subseteq \items$, and returns $v(B)$. 
\item \textbf{Demand oracle:} A demand oracle takes as input a valuation function $v$ and a vector of item prices $\bp = (p_1, \ldots, p_m) \in \Rp^m$, and returns a bundle of items $B \in \demand(v,\bp)$.
\item \textbf{XOS oracle (only defined for XOS valuation functions):} A XOS oracle takes as input a valuation function $v$ and a bundle of items $B \subseteq \items$, and returns the additive supporting function of $v$ on $B$.
\end{itemize}

\medskip\noindent\textbf{Posted-Price Mechanisms and Truthfulness.} We say that an online algorithm is a posted-price algorithm if for each buyer $i \in \buyers$, it proceeds as follows: Let $M_i \subseteq \items$ be the set of items that are still available when buyer $i$ arrives. The algorithm defines a vector of prices $\bp^{i} = (p^{i}_1,\ldots,p^{i}_m) \in \Rp^m$, and lets buyer $i$ choose a bundle  
of items $B \subseteq M_i$ that maximizes 
their utility,  
i.e., a set $B \in \demand(v_i,\bp^i,M_i)$.

Posted-price  algorithms where the vector of prices $\bp^{i}$ that bidder $i$ sees does not depend on buyer $i$'s valuation $v_i$ (but may depend on the valuations $v_{i'}$ of buyers $i' < i$, as well as on the distributions $D = (D_1, \ldots, D_n)$ of all buyers) are known to be dominant strategy incentive compatible (DSIC) --- or truthful \cite[e.g.,][]{FeldmanGL15}. Hence, they are also called sequential posted-price mechanisms.

%%%%%ONLINE COMBINATORIAL ALLOCATION WITH A SINGLE SAMPLE%%%%%
\section{Online Combinatorial Allocation with a Single Sample}

In this section we design $O(1)$-competitive algorithms for online combinatorial auctions with XOS buyers, given a single sample from the distributions of the bidders. We will in fact show a slightly stronger result, namely that this can be achieved in the Game of Googol model. I.e., our algorithm is a \emph{pointwise} $O(1)$-SPI.

\begin{theorem}[Restatement of \Cref{thm:mainSingleSample}]\label{thm:main-single-sample}
The problem of online combinatorial auctions with XOS buyers admits a pointwise $576$-competitive single-sample prophet inequality.
\end{theorem}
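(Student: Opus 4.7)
The plan is to first prove a pointwise two-sample prophet inequality with constant competitive ratio, and then apply the general reduction (\Cref{thm:reduction}) that turns any pointwise $k$-sample guarantee into a pointwise single-sample one at the cost of a factor $O(k)$. Thus the core task is to design an $O(1)$-competitive algorithm in the Game of Googol model with $k=2$ samples per bidder. The benchmark I would use is the offline \emph{bidder-wise greedy with reallocations}: bidders arrive in order, bidder $i$ faces XOS supporting prices $\bp^{(i)}$ reflecting each item's current owner's supporting value, takes any demand bundle at these prices, and reassigns items from earlier owners as needed. Standard XOS duality shows this yields a constant fraction of $\OPT$ on every valuation profile, and so will serve as my intermediate target.

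For the two-sample algorithm, I would use the first sample $\bs^1$ to simulate bidder-wise greedy on $\bs^1$ and thereby define the price schedule $\bp^{(i)}$ that real bidder $i$ faces online. Independently, I would run bidder-wise greedy on the second sample $\bs^2$ and record, for each item $j$, the supporting value of its final owner in that run, calling it the \emph{base price} $b_j$. When real bidder $i$ arrives, she selects her favorite bundle $B_i \in \demand(v_i,\bp^{(i)})$ as if all items were free to buy, and the algorithm actually allocates to her exactly the items $j \in B_i$ whose real supporting value $a_j(v_i,B_i)$ exceeds $b_j$ and which have not already been taken. The analysis then proceeds item-by-item. First, because $\bp^{(i)}$ depends only on $\bs^1$, the real values and the second-sample values are exchangeable at this step, and a secretary-style argument (the analogue of \Cref{lem:ALG_LB}) shows that with constant probability the contribution of item $j$ to $\ALG$ is a constant fraction of the coordinate-wise \emph{maximum} over the real and second-sample supporting values for $j$. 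Second, summing real and second-sample supporting values across items lower bounds a constant fraction of bidder-wise greedy, and hence of $\OPT$, so to close the argument it suffices to compare the per-item maximum with the per-item sum.

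The main obstacle is exactly this max-versus-sum step, since the base prices $b_j$ are themselves functions of the second sample and therefore correlated with the supporting values that I wish to compare; symmetry alone is not enough. This is addressed by the key technical \Cref{lem:MaxVsSum}: rounding all prices to powers of two segregates the supporting values into geometric levels, and conditioning on the first-sample randomness one can build a martingale over the second-sample/real swap whose increments are controlled one bucket at a time, side-stepping the correlation. Combining the two bounds gives a pointwise $O(1)$-competitive two-sample prophet inequality, and invoking \Cref{thm:reduction} with $k=2$ transfers this to the single-sample setting; a careful accounting of the constants coming from the greedy approximation factor, the $1/2$ from the secretary-style step, the max-vs-sum loss, and the reduction pins the final competitive ratio down to $576$.
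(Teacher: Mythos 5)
Your high-level plan tracks the paper's closely: run (modified) bidder-wise greedy on the first sample to set the price schedule $\bp^{(i)}$; use the second sample to set per-item base prices $b_j$; let each real bidder compute her demand at $\bp^{(i)}$ and allocate only those items whose real supporting value exceeds $b_j$; then prove a secretary-style lemma (the analogue of \Cref{lem:ALG_LB}) plus a max-vs-sum lemma (the analogue of \Cref{lem:MaxVsSum}) and finally invoke \Cref{thm:reduction}. However, there is a concrete error in how you define the base prices, and it breaks the secretary-style argument you then appeal to.

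You say to ``run bidder-wise greedy on the second sample $\bs^2$ and record, for each item $j$, the supporting value of its final owner in that run.'' This runs greedy on $\bs^2$ with its own, self-updating price trajectory. In the paper, by contrast, the second-sample bidder $s'_i$ computes a demand bundle $A'_i \in \demand(s'_i, \bp^{(i)})$ at the \emph{same frozen prices} $\bp^{(i)}$ set by the first sample, and then $b_j := \max_{i:\, j\in A'_i} a_j(s'_i, A'_i)$ is the maximum supporting value over \emph{all} second-sample demanders of $j$. These definitions genuinely differ, and the difference is load-bearing. The secretary argument (Proof of \Cref{lem:ALG_LB}) fixes $\bs$, hence fixes all $2n$ potential $j$-bids $a_j(v_i^k, B_i(\bs_{<i}, v_i^k))$ at prices $\bp^{(i)}$; because the split of $(v_i^1,v_i^2)$ into $(\bs',\br)$ is by independent fair coins, with probability at least $1/4$ the top bid lands in $\br$ and the second-highest lands in $\bs'$, in which case $b_j$ equals the second-highest bid and \emph{only} the top real bidder clears the threshold. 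This is exactly what forces item $j$ to go to the maximizer and not to some earlier low-value real bidder. If instead $b_j$ is the final greedy price from a separate greedy run on $\bs^2$, it has no such relationship to the second-highest bid at $\bp^{(i)}$: it could be far too low (so multiple real bidders clear $b_j$ and $j$ goes to the first, possibly tiny, qualifying bid) or far too high (so $j$ never sells). The exchangeability you invoke holds between the second-sample bids and the real bids only because both are evaluated at the \emph{same} $\bs$-induced prices; running greedy on $\bs^2$ destroys that symmetry.

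Two smaller points. First, the powers-of-two rounding is part of the \emph{algorithm}, not merely a post-hoc analytical device: $\GRDTWO$ is what sets $\bp^{(i)}$, and the geometric growth of these prices is what the supermartingale in \Cref{cla:price-vs-sum} feeds on. Your phrasing suggests rounding is applied inside the analysis, which does not work. Second, the martingale in the paper is over the assignment of $v_i^1,v_i^2$ to $(\br,\bs)$ (first sample), relating the final greedy price $p_j^{(n+1)}$ to the sum of real bids, and is then combined with a symmetric statement for $\bs'$; your description blurs which coin flip drives the martingale.
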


{  We remark that we did not try to optimize the constant factor in \Cref{thm:mainSingleSample}, and instead focused on the simplicity of exposition.}
Next, we first give an $O(1)$-competitive result assuming we have access to \emph{two samples}, 
and later we modify this algorithm to a single sample algorithm.

\subsection{Buyer-Wise Greedy}
We start with defining a key ingredient of our prophet inequality, the buyer-wise greedy algorithm (due to Dobzinski, Nisan, and Schapira~\cite{DobzinskiNS05}). 
This algorithm simply assigns every new buyer their demand set when the prices are given by the items' current contributions to the overall solution.

\smallskip

\begin{algo} 
\textbf{Buyer-Wise Greedy Algorithm.} We define $\GRD(\bv)$, for valuation profile $\bv$, as the assignment (and interchangeably, its value) determined as follows:

\begin{itemize}
    \item \edit{For all $i \in [n]$, define $\bp^{(i)} =(p_1^{(i)},\dots, p_m^{(i)})$ %for $j\in [m]$ and $i \in [n]$ 
    to be the prices faced by the i-th buyer.}\\
    Set $p_j^{(1)}=0$ for all $j\in [m]$.
    
    \item For each buyer $i=1,\dots, n$: (i.e., arriving in the fixed adversarial order)
    \begin{enumerate}
        \item Assign $i$ the bundle $\alloc_i= d_i(\bp^{(i)}) \in \demand(v_i,\bp^{(i)},[m])$, i.e., some demand set at prices $\bp^{(i)}$, with ties broken in an arbitrary but consistent manner. 
        (Re-assigning items in $\alloc_{i'} \cap \alloc_i$ from buyers $i' < i$ to buyer $i$.)
        \item Raise prices by setting $p_j^{(i+1)}= a_j(v_i, \alloc_i)$ for all $j \in \alloc_i$. %j\in [m]$.
        \\
        (Keeping $p_j^{(i+1)} = p_j^{(i)}$ for all $j \not \in \alloc_i$.)
    \end{enumerate}
\end{itemize}
\end{algo}

\smallskip

The following lemma of \cite{DobzinskiNS05} shows that the value attained by the buyer-wise greedy algorithm $\GRD(\bv)$ is  a $2$-approximation to $\OPT(\bv)$. 
For completeness, we provide a proof in Appendix~\ref{app:omitted-proofs}.

\begin{lemma}[Dobzinski, Nisan, and Schapira~\cite{DobzinskiNS05}]
\label{lem:greedy-approx}
    For \edit{any XOS} $\bv$, the output of the buyer-wise greedy algorithm $\GRD(\bv)$ is a $2$-approximation to the offline optimum $\OPT(\bv)$.
\end{lemma}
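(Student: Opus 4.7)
The plan is to use a standard demand/price accounting argument adapted to allow reallocations. I would track two running quantities: the total ``buyer utility'' $U = \sum_i u_i$ where $u_i = v_i(\alloc_i) - \sum_{j \in \alloc_i} p_j^{(i)}$, and the sum of posted prices $P_t = \sum_j p_j^{(t)}$. A short telescoping computation shows $P_{n+1} = U$: indeed $p_j^{(i+1)} - p_j^{(i)} = a_j(v_i,\alloc_i) - p_j^{(i)}$ for $j\in \alloc_i$ and $0$ otherwise, and summing $\sum_{j \in \alloc_i} a_j(v_i,\alloc_i) = v_i(\alloc_i)$ by the XOS supporting property yields $P_{i+1}-P_i = u_i$.

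Next I would compare the run to OPT. Let $O_i$ denote the OPT bundle for buyer $i$. Since $\alloc_i \in \demand(v_i, \bp^{(i)})$, we have $u_i \geq v_i(O_i) - \sum_{j \in O_i} p_j^{(i)}$. The prices are monotone non-decreasing in $i$ (each update replaces $p_j^{(i)}$ by $a_j(v_i, \alloc_i)$, which is at least $p_j^{(i)}$ by the demand condition applied to the singleton $\{j\}$), so $p_j^{(i)} \leq p_j^{(n+1)}$. Since $\{O_i\}_{i}$ is a partition of a subset of $\items$, summing the demand inequalities yields
\[
U \;\geq\; \OPT(\bv) - \sum_j p_j^{(n+1)} \;=\; \OPT(\bv) - P_{n+1} \;=\; \OPT(\bv) - U,
\]
so $U \geq \OPT(\bv)/2$, equivalently $P_{n+1} \geq \OPT(\bv)/2$.

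Finally I would lower-bound the welfare of the final assignment by $P_{n+1}$. Let $\alloc_i^{\mathrm{final}} \subseteq \alloc_i$ be the items buyer $i$ still holds after all later reassignments, so $\GRD(\bv) = \sum_i v_i(\alloc_i^{\mathrm{final}})$. The key bookkeeping observation is that $p_j^{(n+1)} = a_j(v_i, \alloc_i)$ whenever $j \in \alloc_i^{\mathrm{final}}$ (no later buyer overwrote the price), and $p_j^{(n+1)}=0$ for items never allocated; hence $P_{n+1} = \sum_i \sum_{j \in \alloc_i^{\mathrm{final}}} a_j(v_i, \alloc_i)$. Applying the XOS supporting-function inequality $\sum_{j \in B'} a_j(v_i, \alloc_i) \leq v_i(B')$ with $B' = \alloc_i^{\mathrm{final}} \subseteq \alloc_i$ and summing over $i$ gives $\GRD(\bv) \geq P_{n+1}$. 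Chaining with $P_{n+1} \geq \OPT(\bv)/2$ completes the proof.

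The one place where care is needed—and the main obstacle relative to the standard no-reallocation version—is justifying that $p_j^{(n+1)}$ is exactly $a_j(v_{c(j)}, \alloc_{c(j)})$ for the final owner $c(j)$. This follows because once an item lands in $\alloc_{c(j)}^{\mathrm{final}}$ no subsequent buyer writes to $p_j$; everything else reduces to the XOS supporting inequality and the telescoping identity $P_{n+1}=U$, neither of which is sensitive to the reassignment rule.
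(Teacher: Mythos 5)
Your proof is correct and takes essentially the same route as the paper's: price monotonicity, the demand inequality against each $O_i$ with $p_j^{(i)}\le p_j^{(n+1)}$, and the XOS supporting inequality applied to the final (post-reallocation) bundles. The telescoping identity $P_{n+1}=U$ that you isolate is exactly the paper's observation that ``the sum of the additive supporting valuations increases exactly by the updating buyer's utility,'' which the paper packages as the two bounds $\GRD\ge P_{n+1}$, $\GRD\ge U$ instead of the single equality; the one nit is that monotonicity follows from comparing $\alloc_i$ with $\alloc_i\setminus\{j\}$ in the demand condition, not with the singleton $\{j\}$.
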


\noindent\textbf{Powers-of-Two Prices.}  
We modify the above procedure by making prices increase in an exponential fashion. 
Concretely, let 
\[\mathcal{P}=\left\{2^k\enspace|\enspace \edit{k \in \mathbb{Z}} \right\}.\]
Now, in the above greedy procedure $\GRD(\bv)$, we will raise the prices of items in the demand set $\alloc_i$ to the closest  power of two higher than $a_j(v_i, \alloc_i)$. 
\edit{That is, in Step 2 of the algorithm's main loop, rather than setting $p_j^{(i+1)} = a_j(v_i, \alloc_i)$, we instead set $p_j^{(i+1)} = \min\{ p \in \mathcal{P} \colon p \geq a_j(v_i, \alloc_i)\}$ for all $j \in \alloc_i$.}
Call the outcome of this modified procedure $\GRDTWO(\bv)$.

\begin{lemma}
\label{lem:modified-greedy-approx}
    For \edit{any XOS} $\bv$, the output of the modified buyer-wise greedy algorithm $\GRDTWO(\bv)$ where prices increase as powers of two is a $3$-approximation to the offline optimum $\OPT(\bv)$.
\end{lemma}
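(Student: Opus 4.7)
The plan is to adapt the proof of \Cref{lem:greedy-approx} while carefully tracking the loss introduced by rounding prices up to the nearest power of two. Recall that the standard analysis of buyer-wise greedy proceeds in two steps: first, one uses the demand inequality at each buyer $i$ to get $v_i(\alloc_i) \geq v_i(\alloc_i^*) - \sum_{j\in \alloc_i^*} p_j^{(i)}$, where $\alloc_i^*$ is $i$'s bundle in $\OPT$; summing and using monotonicity of prices gives $\GRD \geq \OPT - \sum_j p_j^{(n+1)}$. Second, one observes that the final price $p_j^{(n+1)}$ equals $a_j(v_{i_j}, \alloc_{i_j})$, where $i_j$ is the last buyer who received $j$, and so $\sum_j p_j^{(n+1)} = \sum_i v_i(\alloc_i) = \GRD$. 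Combining yields $2\GRD \geq \OPT$.

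For $\GRDTWO$, let $\hat{\alloc}_i$, $\hat{\bp}^{(i)}$ denote the bundles and prices produced by the modified algorithm. Since $\hat{\alloc}_i \in \demand(v_i, \hat{\bp}^{(i)}, [m])$, the same demand-inequality argument (applied verbatim) gives
\[
\GRDTWO \;=\; \sum_i v_i(\hat{\alloc}_i) \;\geq\; \OPT \;-\; \sum_j \hat{p}_j^{(n+1)}.
\]
The only new ingredient is bounding the total final price. By construction, for $j\in \hat{\alloc}_i$ we set $\hat{p}_j^{(i+1)} \in [a_j(v_i,\hat{\alloc}_i),\, 2 a_j(v_i,\hat{\alloc}_i))$, since rounding $a_j(v_i,\hat{\alloc}_i)$ up to the nearest element of $\mathcal{P}$ increases it by a factor of at most $2$. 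Hence, letting $i_j$ denote the last buyer who received item $j$,
\[
\sum_j \hat{p}_j^{(n+1)} \;\leq\; 2\sum_j a_j(v_{i_j}, \hat{\alloc}_{i_j}) \;=\; 2\sum_i \sum_{j\in \hat{\alloc}_i} a_j(v_i, \hat{\alloc}_i) \;=\; 2\sum_i v_i(\hat{\alloc}_i) \;=\; 2\,\GRDTWO,
\]
where the penultimate equality uses the XOS definition, namely that $a_j(v_i, \hat{\alloc}_i)$ is the additive supporting function of $v_i$ on $\hat{\alloc}_i$ and thus sums (over $j \in \hat{\alloc}_i$) to $v_i(\hat{\alloc}_i)$.

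Combining the two displayed inequalities yields $\GRDTWO \geq \OPT - 2\,\GRDTWO$, i.e., $\OPT \leq 3\,\GRDTWO$, as required. The only subtlety worth flagging is to handle the case $a_j(v_i,\hat{\alloc}_i) = 0$ cleanly (where rounding up to $\mathcal{P}$ is vacuous since $\mathcal{P}$ contains arbitrarily small powers of two, so we can keep the price at $0$ or equivalently set it to a negligibly small power of two); this does not affect the bound. There is no real obstacle here beyond bookkeeping: the rounding uniformly inflates prices by at most a factor of $2$, which enters only through the second step of the classical proof and changes the ratio from $2$ to $3$.
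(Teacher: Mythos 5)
Your proof has a genuine gap: it conflates the algorithm's welfare $\GRDTWO(\bv)$ with the quantity $\sum_i v_i(\hat{\alloc}_i)$, but these differ because of reallocation. Recall that the demand bundle $\hat{\alloc}_i$ may contain items that were previously allocated to earlier buyers, which get stolen from them and later possibly stolen again. The welfare of the final assignment is $\GRDTWO(\bv) = \sum_i v_i(\GRDTWO_i(\bv))$, where $\GRDTWO_i(\bv) = \hat{\alloc}_i \setminus \bigcup_{i'>i}\hat{\alloc}_{i'}$ is what buyer $i$ actually retains, and this can be much smaller than $\sum_i v_i(\hat{\alloc}_i)$ (e.g., if many unit-demand buyers successively grab the same item). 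Concretely: your second chain asserts $\sum_j a_j(v_{i_j}, \hat{\alloc}_{i_j}) = \sum_i \sum_{j\in \hat{\alloc}_i} a_j(v_i, \hat{\alloc}_i)$, but the left-hand side counts each allocated item exactly once (via its final holder), whereas the right-hand side counts it once per buyer whose demand bundle contained it; the correct identity is $\sum_j a_j(v_{i_j},\hat{\alloc}_{i_j}) = \sum_i\sum_{j\in\GRDTWO_i(\bv)} a_j(v_i,\hat{\alloc}_i) \leq \GRDTWO(\bv)$. So your second step, correctly done, does give $\sum_j \hat{p}_j^{(n+1)} \leq 2\,\GRDTWO(\bv)$.

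The real problem is your first step. You establish $\sum_i v_i(\hat{\alloc}_i) \geq \OPT - \sum_j \hat{p}_j^{(n+1)}$ and label its left side $\GRDTWO$, but since $\GRDTWO(\bv) \leq \sum_i v_i(\hat{\alloc}_i)$, this does \emph{not} yield a lower bound on $\GRDTWO(\bv)$. Combining the two (corrected) displays only gives $\OPT \leq \sum_i v_i(\hat{\alloc}_i) + 2\,\GRDTWO(\bv)$, which is not the claim. The piece you dropped is exactly the one the paper keeps: do not discard the nonnegative term $\sum_{j \in \hat{\alloc}_i} \hat{p}_j^{(i)}$ from the demand inequality. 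Retaining it gives the bound $\OPT(\bv) \leq \sum_j \hat{p}_j^{(n+1)} + \sum_i u_i(\hat{\alloc}_i, \hat\bp^{(i)})$ in terms of utilities, and the paper then separately shows $\GRDTWO(\bv) \geq \sum_i u_i(\hat{\alloc}_i, \hat\bp^{(i)})$ (because the "accounted" welfare $\sum_j a_j(v_{i_j},\hat{\alloc}_{i_j})$ rises by at least buyer $i$'s utility in step $i$, since the price $i$ pays for each reallocated item dominates the previous holder's supporting value). Together with $\GRDTWO(\bv) \geq \frac12\sum_j \hat p_j^{(n+1)}$ from the rounding bound, this gives $\OPT \leq 3\,\GRDTWO$. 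Your rounding observation and the $a_j=0$ edge case are fine; the missing ingredient is the utility step that ties the demand-bundle quantities back to the true welfare.
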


The proof of \Cref{lem:modified-greedy-approx} is a simple variation on the argument of \cite{DobzinskiNS05} and appears in \Cref{app:omitted-proofs}.

\medskip\noindent\textbf{Decoupling Allocation and Pricing.} 
We conclude by observing that we can decouple allocation and pricing as follows. Suppose that we have two valuations $f_i^1,f_i^2$ for each buyer $i \in [n]$ and that these are assigned to samples $\bs$ and valuations $\br$ by tossing an independent fair coin for each buyer. Let $\bp^{(i)}$ be the prices in the modified buyer-wise greedy run on $\bs$, and let buyer $i$ buy the set of items $\alloc_i$ that maximizes $r_i(\alloc_i) - \sum_{j \in B_i} p^{(i)}_j$. Note how this decouples allocation from pricing: while $\bs$ is used to set and update prices, $\br$ is used to make allocation decisions.

Analogously to \Cref{lem:modified-greedy-approx} we can now show the following lemma. Note that unlike \Cref{lem:modified-greedy-approx}, however, the following lemma does \emph{not} yet imply the existence of an $O(1)$-approximate allocation.  

\begin{lemma}\label{lem:modified-greedy-approx-with-samples}
Consider any pair of \edit{XOS} valuations $f_i^1,f_i^2$ for each buyer $i \in \buyers$ that are assigned independently and uniformly at random to $\bs,\br$. Then, for the modified buyer-wise greedy algorithm, it holds 
$\EXP{\sum_{i \in \buyers} r_i(B_i)} \geq \frac{1}{3} \EXP{\OPT(\bs)}.$
\end{lemma}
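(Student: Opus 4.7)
The plan is to exploit the exchangeability between $\bs$ and $\br$ and reduce the statement to \Cref{lem:modified-greedy-approx} applied to $\bs$. Define $A_i^{\bs}$ to be the bundle that buyer $i$ is handed \emph{upon arrival} in the modified buyer-wise greedy run on $\bs$, i.e.\ $A_i^{\bs} \in \demand(s_i, \bp^{(i)}, [m])$, using the same consistent tie-breaking rule used to define $B_i \in \demand(r_i, \bp^{(i)}, [m])$ in the decoupled procedure. I would then establish the chain
\[
\EXP{\textstyle\sum_i r_i(B_i)} \;=\; \EXP{\textstyle\sum_i s_i(A_i^{\bs})} \;\geq\; \EXP{\GRDTWO(\bs)} \;\geq\; \tfrac{1}{3}\EXP{\OPT(\bs)},
\]
in which the rightmost inequality is \Cref{lem:modified-greedy-approx} and the middle two pieces are justified in the next two paragraphs.

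The first equality is a per-buyer swap argument. Fix $i \in \buyers$ and condition on the pairs $\{(f_k^1, f_k^2)\}_{k \in \buyers}$ together with all coin flips $c_{-i}$ other than $c_i$. Under this conditioning the price vector $\bp^{(i)}$ is \emph{deterministic}, since it is computed from $s_1, \ldots, s_{i-1}$ alone and hence is a function of $c_1, \ldots, c_{i-1}$ and $\{(f_k^1, f_k^2)\}_{k<i}$; crucially, it does not depend on $c_i$. The only remaining randomness is the independent fair coin $c_i$, which merely swaps whether $(s_i, r_i) = (f_i^1, f_i^2)$ or $(s_i, r_i) = (f_i^2, f_i^1)$. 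Consequently $s_i$ and $r_i$ have the \emph{same} conditional marginal (uniform on $\{f_i^1, f_i^2\}$), even though they are perfectly anti-correlated. Since the map $v \mapsto v\bigl(\demand(v, \bp^{(i)}, [m])\bigr)$ depends only on $v$ and the now-fixed prices, the random variables $s_i(A_i^{\bs})$ and $r_i(B_i)$ share the same conditional distribution, hence the same conditional expectation. Averaging over the conditioning and summing over $i$ yields the first equality.

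For the middle inequality, note that $\GRDTWO(\bs) = \sum_i s_i(A_i^{\bs,\mathrm{final}})$, where $A_i^{\bs,\mathrm{final}}$ is the set of items still assigned to buyer $i$ at the \emph{end} of the $\bs$-greedy, after later buyers may have reclaimed items from them. Since re-assignment can only remove items from buyer $i$'s bundle after step $i$, we have $A_i^{\bs,\mathrm{final}} \subseteq A_i^{\bs}$, and monotonicity of the XOS valuation $s_i$ gives $s_i(A_i^{\bs,\mathrm{final}}) \leq s_i(A_i^{\bs})$. Summing and taking expectations yields $\EXP{\GRDTWO(\bs)} \leq \EXP{\sum_i s_i(A_i^{\bs})}$, closing the chain. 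The subtle step, and the one I expect to require the most care, is the swap argument above, whose validity hinges on the decoupling property that $\bp^{(i)}$ is a function of $\bs_{<i}$ alone (never of any $r_k$), together with the consistent tie-breaking rule that makes $v \mapsto \demand(v,\bp^{(i)},[m])$ a well-defined function of $v$.
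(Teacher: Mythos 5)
Your proof is correct, and it is in fact cleaner and more modular than the paper's own argument. Both proofs hinge on the same exchangeability observation — that conditioning on everything except the $i$-th coin flip fixes $\bp^{(i)}$, so $s_i$ and $r_i$ have identical conditional marginals and hence $\EXP{r_i(B_i)} = \EXP{s_i(A_i^{\bs})}$. Where the two diverge is in what they do with this. The paper does \emph{not} invoke \Cref{lem:modified-greedy-approx} as a black box; instead it applies exchangeability twice (once for the values $r_i(B_i)$ vs.\ $s_i(A_i)$, and once for the price sums $\sum_{j\in B_i}p_j^{(i)}$ vs.\ $\sum_{j\in A_i}p_j^{(i)}$), then re-runs the two ingredients of the greedy-approximation proof — the utility bound $\OPT(\bs) - \sum_j p_j^{(n+1)} \leq \sum_i u_i(A_i,\bp^{(i)})$ and the price bound $\sum_j p_j^{(n+1)} \leq 2\sum_i s_i(A_i)$ — and combines them. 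You instead observe that $\sum_i s_i(A_i^{\bs}) \geq \GRDTWO(\bs)$ pointwise (because the final greedy bundles $\GRDTWO_i(\bs) \subseteq A_i^{\bs}$ and $s_i$ is monotone) and then simply cite \Cref{lem:modified-greedy-approx}. This collapses the two re-derived bounds into a single citation, needs only one exchangeability identity rather than two, and makes the logical dependence on \Cref{lem:modified-greedy-approx} explicit — a genuine simplification, at the cost of a slightly less self-contained exposition.
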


The proof (which appears in \cref{app:omitted-proofs}) exploits the way that $\bs$ and $\br$ are generated, to argue that in expectation allocations and prices align. So that the approximation guarantee can be established in a similar way as that in Lemma~\ref{lem:modified-greedy-approx}.

%--------------------------------------------------------------------------------------------------

\subsection{Two Samples Algorithm and Proof Overview}

We are now ready to prove the following theorem, which claims existence of an online procedure that attains the guarantee in Theorem~\ref{thm:main-single-sample} with two samples per buyer.

\begin{theorem}\label{thm:main-two-sample}
The problem of online combinatorial auctions with XOS buyers admits a pointwise $192$-competitive $2$-sample prophet inequality.
\end{theorem}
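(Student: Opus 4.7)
The plan is to design a two-sample algorithm following the intuition sketched in the introduction: use the first sample $\bs^1$ to set prices, and use the second sample $\bs^2$ to set item-specific filtering thresholds (the ``base prices''). Concretely, I would run $\GRDTWO$ on $\bs^1$ to obtain the price vector $\bp^{(i)}$ faced by the $i$-th buyer, and separately run $\GRDTWO$ on $\bs^2$ to obtain, for each item $j$, a base price $b_j$ equal to the supporting value of the bidder who ultimately holds $j$ at the end of $\GRDTWO(\bs^2)$. When real buyer $i$ arrives online, they tentatively select a demand set $B_i \in \demand(v_i, \bp^{(i)})$ as if all items were available, and the algorithm then allocates to $i$ exactly those $j \in B_i$ that are still unclaimed and satisfy $a_j(v_i, B_i) \geq b_j$.

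For the analysis, I would exploit the joint exchangeability of the three valuation functions per buyer in the Game of Googol, viewed across the roles $(\bs^1, \bs^2, \br)$. Condition on $\bs^1$ (and hence on the prices $\bp^{(i)}$) and consider the exchange of the remaining two valuations between the sample-$2$ and real roles. For each item $j$, let $W_j = a_j(v_i, B_i)$ denote the supporting value of $j$ in the real run, with $W_j = 0$ if no real buyer's demand set contains $j$, and let $b_j$ be the analogous quantity from the $\bs^2$ run. By a secretary-style argument that swaps $\bs^2$ with $\br$, one can show that with constant probability the bidder who would ``win'' item $j$ at prices $\bp^{(i)}$ lies in the real profile rather than in the sample-$2$ profile; in that event, the base-price filter together with the powers-of-two rounding guarantees that this real buyer passes the threshold while no earlier real buyer can, so $j$ is actually allocated to them. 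This yields a per-item lower bound of the form $\E[\text{ALG's contribution from } j] \geq c_1 \cdot \E[\max(W_j, b_j)]$ for some absolute constant $c_1$.

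To close the argument, I would invoke the promised Max-vs-Sum lemma (\Cref{lem:MaxVsSum}) to obtain $\sum_j \E[\max(W_j, b_j)] \geq c_2 \cdot \sum_j \E[W_j + b_j]$, and then lower-bound the right-hand side by a constant times $\E[\OPT]$ via two applications of \Cref{lem:modified-greedy-approx-with-samples}, once to the exchangeable pair $(\bs^1, \br)$ and once to $(\bs^1, \bs^2)$. Combining the constants yields the claimed pointwise $192$-competitive guarantee. The main obstacle is the coupling between the base-price filter and the online arrival order: the exchange argument only identifies the correct ``max-bidder'' in expectation, not arrival-by-arrival, and the danger is that an earlier real buyer's demand set intersects $B_i$ and claims item $j$ before the intended real winner arrives. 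The powers-of-two pricing is precisely what makes this harmless, ensuring that at most one real buyer can ever pass the base price $b_j$; formalising this via the martingale argument in \Cref{lem:MaxVsSum} is the technical heart of the proof.
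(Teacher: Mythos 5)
There is a genuine gap in the construction of the base prices. Your proposal sets $b_j$ by running $\GRDTWO$ \emph{independently} on $\bs^2$, so that each sample-$2$ buyer faces prices that evolve based on $\bs^2_{<i}$. The paper instead computes the demand bundle $\alloc'_i$ of the $i$-th $\bs'$-buyer \emph{at the same prices $\bp^{(i)}$ derived from $\bs$} that the real buyer will face, and then sets $b_j = \max_{i:\,j\in\alloc'_i} a_j(s'_i,\alloc'_i)$. This distinction is essential. The secretary-style argument in \Cref{lem:ALG_LB} works by conditioning on $\bs$ (hence on $\bp^{(i)}$) and then observing that, per buyer, $s'_i$ and $r_i$ are assigned by a fair coin; because both $\alloc'_i$ and $B_i$ are demand sets \emph{at the identical price vector} $\bp^{(i)}$, the $j$-bids $a_j(s'_i,\alloc'_i)$ and $a_j(r_i,B_i)$ for buyer $i$ come from a fixed unordered pair, and the argument that the global max lands in $\br$ while the second-highest lands in $\bs'$ with probability $\geq 1/4$ goes through. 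If you instead run $\GRDTWO(\bs^2)$, the quantity $a_j(s'_i,\alloc'_i)$ is evaluated at $\bs^2$-dependent prices while $a_j(r_i,B_i)$ is evaluated at $\bs^1$-dependent prices; swapping $s'_i$ and $r_i$ no longer just permutes a fixed pair of numbers, and the exchangeability on which both \Cref{lem:ALG_LB} and the symmetric halves of \Cref{lem:MaxVsSum} rely is lost.

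A secondary issue: you attribute the ``at most one real buyer passes the base price'' property to the powers-of-two rounding. That is a misattribution. Uniqueness of the passing buyer comes from the secretary coin-flip argument (the global-max $j$-bid is in $\br$ and the second-highest is in $\bs'$, hence strictly determines $b_j$), and this is also why the paper uses the strict filter $a_j(r_i,B_i) > b_j$ and explicitly keeps the base prices \emph{not} rounded so that supporting values are distinct. Powers-of-two rounding plays a different role: it forces the greedy price $p_j^{(i)}$ to at least double whenever an update occurs, which drives the supermartingale in \Cref{cla:price-vs-sum} and hence the max-versus-sum comparison. Your high-level decomposition (per-item max bound, then max-vs-sum, then two applications of the greedy-with-samples lemma) matches the paper's, but the algorithm as you describe it does not satisfy the hypotheses these lemmas need.
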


Recall what proving a statement like this entails: We need to consider a setting where an adversary writes down three valuation profiles $\bv^1, \bv^2, \bv^3$. Then, for each buyer $i \in \buyers$ independently, the valuation functions $v_i^1$, $v_i^2$, $v_i^3$ are assigned uniformly at random to samples $\bs$, $\bs'$ and actual valuations $\br$. 
The goal is to show that there is an online algorithm $\ALG$ whose expected value when receiving samples $\bs,\bs'$ and actual valuations $\br$ is a $O(1)$-approximation to the expected offline optimum $\OPT$ on the actual valuations $\br$.

\smallskip

Consider the following online procedure:

\begin{algo}
\textbf{Two Samples Algorithm.} Given two samples $\bs$ and $\bs'$, 
and arriving bidders with valuations $\br$:

  \begin{enumerate} 
    \item Run the modified buyer-wise greedy algorithm $\GRDTWO$ on the  sample $\bs$. For each buyer $i \in \buyers$, let $\bp^{(i)}$ denote the prices set by this algorithm on input $\bs_{<i}$ (i.e., on input $\bs$, but only considering the first $i-1$ buyers).\\
    \emph{(We emphasize that these prices increase by powers of $2$; this will be crucial for the max vs. sum argument in our analysis.)}
    
    \item For each buyer $i \in \buyers$ with valuation $s'_i\in \bs'$, determine their demand bundle $\alloc'_i$ given prices $\bp^{(i)}$.  Define \emph{base price}  $\base_j$ for any item $j \in [m]$  to be the largest contribution of $j$ to any  bundle $\alloc'_i$ with respect to $\sampleprime_i$. That is, $b_j := \max_{i: j \in \alloc'_i} a_j(s'_i,\alloc'_i)$.\\ 
    \emph{(Note that these base prices are not powers of $2$. In fact, we need all supporting prices to be distinct for a secretary-style argument.)}
    
    \item For each arriving buyer $i \in \buyers$ with valuation $r_i\in \br$:
    \begin{itemize}
        \item Compute their  demand set  $\bundle_i$  given the prices  $\bp^{(i)}$. 
        \item Assign them the still-available part of 
        $  B_i\cap \{j\in[m]\,|\,a_j(\realvalue_i,\bundle_i) > \base_j\}$.
    \end{itemize}
\end{enumerate}
Let us call the resulting allocation   (and interchangeably, its value) $\ALG$.\\

\end{algo}

\smallskip

Our goal is to now prove that $\ALG$ for any fixed valuation profiles $\bv^1, \bv^2, \bv^3$ in expectation over the coin tosses that lead to $\bs,\bs',\br$ is a constant-factor approximation to the offline optimum. Below, for an item $j \in \items$ assigned to buyer $i \in \buyers$ in $\ALG$, we use the shorthand $\ALG_j = a_j(r_i, B_i)$ to denote item $j$'s contribution to the algorithm's welfare. (Note that we define $\ALG_j$ with respect to the additive supporting function for bundle $\bundle_i$, so technically this is a lower bound on item $j$'s contribution to the welfare.)

\medskip\noindent\textbf{Proof Overview:} Our proof is broken into  two lemmas.  

The first lemma shows that the expected contribution of any item $j$ to the algorithm's welfare is  at least a constant fraction of $j$'s contribution to the max-bundle in $\br$ and $\bs'$. It is similar in spirit to arguments that appeared in \cite{AzarKW14} and \cite{RubinsteinWW20} for the single-choice problem. 

\begin{lemma} \label{lem:ALG_LB}
For any item $j \in \items$, 
\[ \EXP{\ALG_j} \geq \frac{1}{4} \cdot\EXP{\max\Bigg(\max_{i:\, j\in \bundle_i} a_j(r_i,\bundle_i),\, \max_{i:\, j\in \alloc'_i} a_j(s'_i,\alloc'_i)  \Bigg)} . \] 
\end{lemma}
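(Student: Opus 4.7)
The plan is to condition on the first sample $\bs$, which fully determines the price vectors $\bp^{(i)}$ the algorithm posts, and to prove the bound pointwise in $\bs$. Conditional on $\bs$, for every buyer $i$ the pair $(s'_i, r_i)$ is a uniformly random ordering of the two valuations in $\{v_i^1, v_i^2, v_i^3\}\setminus\{s_i\}$, independently across $i$. Fix item $j$, and for each $i$ let $X_i^+\geq X_i^-$ be the two $j$-contributions of the demand sets at prices $\bp^{(i)}$ for those two remaining valuations; let $\theta_i\in\{+,-\}$ indicate which of the two is assigned to $r_i$. Then $a_j(r_i, B_i) = X_i^{\theta_i}$, $a_j(s'_i, A'_i) = X_i^{-\theta_i}$, and $\max(a_j(r_i, B_i), a_j(s'_i, A'_i)) = X_i^+$ is $\theta_i$-independent. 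Hence the right-hand side of the lemma conditional on $\bs$ is the deterministic quantity $M := \max_i X_i^+$, attained at some buyer $i^*$.

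The heart of the proof is a secretary-style case split on the top two entries of the multiset $\{X_i^+, X_i^-\}_{i\in\buyers}$, in the spirit of \cite{AzarKW14,RubinsteinWW20} for the single-choice problem. Let $Y_2$ be the second-largest element of that multiset. In the case $Y_2 = X_{i^*}^-$, I consider the event $\{\theta_{i^*}=+\}$, which has probability $1/2$. In the case $Y_2 = X_{i^{**}}^+$ for some $i^{**}\neq i^*$, I consider the event $\{\theta_{i^*}=+,\,\theta_{i^{**}}=-\}$, which by independence has probability $1/4$. In either favorable event, $a_j(r_{i^*}, B_{i^*}) = M$ and the base price equals $b_j = Y_2$: it is at least $Y_2$ because $Y_2$ itself is realized as one of the $s'$-contributions, and at most $Y_2$ because every other $s'$-contribution is bounded by $Y_2$ (using $Y_1 = X_{i^*}^+$ and, in the second case, $Y_2 = X_{i^{**}}^+$).

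It remains to check that on the favorable event no buyer $i'\neq i^*$ takes $j$ before $i^*$ arrives. For every $i'\neq i^*$ we have $a_j(r_{i'}, B_{i'}) = X_{i'}^{\theta_{i'}} \leq X_{i'}^+ \leq Y_2 = b_j$, so the strict inequality $a_j(r_{i'}, B_{i'}) > b_j$ in the algorithm's acceptance rule fails. Hence $j$ is still available at time $i^*$, and since $X_{i^*}^+ > Y_2 = b_j$, buyer $i^*$ actually takes $j$ with contribution $M$. Combining the two cases gives $\EXP[\theta \mid \bs]{\ALG_j} \geq M/4$ for every $\bs$, and averaging over $\bs$ yields the claim.

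The main obstacle is isolating the right joint event on the independent coin flips $\{\theta_i\}$: the favorable event must simultaneously place the overall maximum on the $r$-side and the runner-up on the $s'$-side, so that $b_j = Y_2$ acts as the exact right filter---high enough to screen out every non-$i^*$ bidder under the strict-inequality rule, yet low enough to admit $i^*$ at the full value $M$. Ties at the top (e.g.\ when $X_{i^*}^+ = Y_2$) can be absorbed into a standard consistent tie-breaking convention or by a small perturbation argument; beyond that, the proof uses no distributional assumption and relies solely on the Game of Googol symmetry between $s'_i$ and $r_i$.
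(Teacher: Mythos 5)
Your proposal is correct and follows essentially the same route as the paper's proof: fix $\bs$ (hence the prices), identify the maximum $j$-contribution $M$ and the second-highest $Y_2$ among the two remaining valuations per buyer, and argue via independent fair coins that with probability at least $1/4$ the max lands in $\br$ while the runner-up lands in $\bs'$, forcing $b_j = Y_2$ so that exactly buyer $i^*$ clears the strict threshold. Your write-up is a slightly more formalized version of the paper's argument, with cleaner notation ($X_i^\pm$, $\theta_i$) and the two cases (runner-up belonging to $i^*$ or not) spelled out explicitly; the paper handles ties the same way, by a generic perturbation remark.
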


Our second lemma shows that the contributions of any item $j$ to any demand bundle in the actual valuations $\br$ or the second sample $\bs'$ are dominated by just the largest of them (up to a constant factor).
This lemma and its proof is our key technical innovation.

\begin{lemma} \label{lem:MaxVsSum}
For any item $j \in \items$,
\[ 
% \forall j: 
\EXP{\max\Bigg(\max_{i:\, j\in B_i} a_j(\realvalue_i,\bundle_i),\, \max_{i:\, j\in \alloc'_i} a_j(\sample'_i,\alloc'_i)  \Bigg)}\geq \frac{1}{32}\cdot\EXP{\left(\sum_{i:\, j\in \bundle_i}a_j(\realvalue_i,\bundle_i) + \sum_{i:\, j\in \alloc'_i}a_j(\sample'_i,\alloc'_i)\right)}.
\]
\end{lemma}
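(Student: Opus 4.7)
\medskip\noindent\textbf{Proof plan.}
The plan is to exploit the uniform assignment of the three adversary-chosen valuations $(v_i^1, v_i^2, v_i^3)$ to the roles $(s_i, s'_i, r_i)$, together with the powers-of-two pricing, in three conceptual steps. I would begin by introducing, for each step $i$ and $k \in \{1,2,3\}$, the ``potential contribution'' $Y_i^k := a_j(v_i^k, D_i^k)\cdot\1{j \in D_i^k}$ for a fixed $D_i^k \in \demand(v_i^k, \bp^{(i)})$ (with consistent tie-breaking). Let $\sigma_i$ denote the random index with $s_i = v_i^{\sigma_i}$; then $Q_i := a_j(s_i, A_i)\1{j \in A_i}$ equals $Y_i^{\sigma_i}$. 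Since $\bp^{(i)}$ is determined by $\sigma_1,\ldots,\sigma_{i-1}$, the triple $(Q_i, S_i, R_i)$ is a uniformly random permutation of $(Y_i^1, Y_i^2, Y_i^3)$ conditional on the filtration $\mathcal{F}_{i-1}$ generated by $\sigma_1, \ldots, \sigma_{i-1}$; in particular $\max(R_i, S_i) = \max_{k \neq \sigma_i} Y_i^k$.

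The first step uses the permutation symmetry directly: $\E[R_i + S_i \mid \mathcal{F}_{i-1}] = 2\,\E[Q_i \mid \mathcal{F}_{i-1}]$, hence $\E\bigl[\sum_i(R_i+S_i)\bigr] = 2\,\E\bigl[\sum_i Q_i\bigr]$. The second step uses the powers-of-two pricing to show that $\sum_i Q_i \le 4\,\max_i Q_i$: under a genericity assumption, whenever $j \in A_i$ we have both $\bp^{(i+1)}_j \ge 2\bp^{(i)}_j$ and $Q_i \le \bp^{(i+1)}_j \le 2 Q_i$, so the non-zero $Q_i$'s sit inside a geometrically growing sequence bounded by the final price, which itself is at most $2\max_i Q_i$.

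The third step, which I view as the heart of the martingale-style argument, compares $\max_i Q_i$ to $\max_i \max(R_i, S_i)$ via a first-hitting-time analysis. For any threshold $\theta > 0$, define $\tau_\theta := \min\{i : \max_k Y_i^k \ge \theta\}$. Since each $Y_i^k$ is $\mathcal{F}_{i-1}$-measurable, both the event $\{\tau_\theta = t\}$ and the ``hitting set'' $K_t := \{k : Y_t^k \ge \theta\}$ depend only on $\sigma_1, \ldots, \sigma_{t-1}$, so they are independent of the fresh uniform index $\sigma_t$. A simple case analysis then gives $\P{\max_{k \neq \sigma_t} Y_t^k \ge \theta \mid \tau_\theta = t} \ge 2/3$ (either $|K_t|=1$ and $\sigma_t$ misses the unique element with probability $2/3$, or $|K_t| \ge 2$ and some $k \in K_t$ always differs from $\sigma_t$). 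Summing over $t$ gives $\P{\max_i \max(R_i,S_i) \ge \theta} \ge \tfrac{2}{3}\,\P{\max_i Q_i \ge \theta}$, and the layer-cake identity yields $\E[\max_i \max(R_i,S_i)] \ge \tfrac{2}{3}\,\E[\max_i Q_i]$. Chaining the three steps produces $\E\bigl[\sum_i(R_i+S_i)\bigr] \le 2 \cdot 4 \cdot \tfrac{3}{2}\,\E[\max_i\max(R_i,S_i)] = 12\,\E[\max_i\max(R_i,S_i)]$. The hardest part will be rigorously formalizing the independence claims in Step 3; Step 2's genericity restriction can likely be removed via an infinitesimal perturbation of the valuations, and the gap between the constant $1/12$ above and the stated $1/32$ plausibly reflects the additional care needed to handle such ties.
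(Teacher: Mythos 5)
Your proof is correct and takes a genuinely different route from the paper's. The paper decomposes via the final price $p_j^{(n+1)}$: Claim~\ref{cla:max-vs-price} uses a binary-tree coloring argument to show $\E[\max_i a_j(r_i,B_i)\1{j\in B_i}] \geq \tfrac14\E[p_j^{(n+1)}]$, and Claim~\ref{cla:price-vs-sum} uses a supermartingale to show $\E[p_j^{(n+1)}] \geq \tfrac14\E[\sum_i a_j(r_i,B_i)\1{j\in B_i}]$; chaining and then symmetrizing over the $\br$- and $\bs'$-sides gives $1/32$. You instead use $\max_i Q_i$ (the largest \emph{sample}-side contribution) as intermediary: Step~1 is the permutation symmetry $\E[\sum_i(R_i+S_i)]=2\E[\sum_i Q_i]$, Step~2 is a \emph{deterministic} geometric-sum bound $\sum_i Q_i \le 4\max_i Q_i$ (replacing the paper's supermartingale, and cleaner since it is pathwise rather than in expectation), and Step~3 is a first-hitting-time/secretary argument that works directly with the $3$-way uniform permutation rather than conditioning on $\bs'$ and flipping a fair coin as the paper does. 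Your constant $1/12$ is tighter than the paper's $1/32$. One small correction to your closing remark: the paper's weaker $1/32$ is not due to extra care for ties (both proofs assume genericity, removable by perturbation), but simply to the different, looser decomposition through $p_j^{(n+1)}$ together with the $\max(a,b)\ge(a+b)/2$ step when merging the $\br$- and $\bs'$-bounds.
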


Together,  the lemmas yield the constant competitive ratio, by summing up over all items $j$ and using the fact that the demand bundles are determined in exactly the same way as those in the modified buyer-wise greedy algorithm (which is a const. approx.).

\begin{proof}[Proof of \Cref{thm:main-two-sample}]
Summing over all items $j \in \items$ in  \Cref{lem:ALG_LB} and \Cref{lem:MaxVsSum} implies
\[ %\EXP{\ALG} = 
\sum_{j \in \items} \EXP{\ALG_j} ~\geq~ \frac{1}{128}\cdot\sum_{j \in \items}\EXP{\sum_{i:\, j\in \bundle_i}a_j(r_i,\bundle_i) + \sum_{i:\, j\in \alloc'_i} a_j(s'_i,\alloc'_i)} ~=~ \frac{1}{128} \cdot \EXP{ \sum_{i \in \buyers} \left(r_i(\bundle_i) + s'_i(\alloc'_i)\right)}.
\]
Note that in expectation the RHS is at least the sum of the expected welfare of $\GRDTWO(\br;\bs)$ and $\GRDTWO(\bs';\bs)$. 
Hence,  \Cref{lem:modified-greedy-approx-with-samples} implies that the RHS is at least $\frac{1}{128} \cdot \frac{2}{3} \cdot \EXP{\OPT(\br)}$, which proves the theorem.
\end{proof}

%--------------------------------------------------------------------------------------------------

\subsection{Proof of \Cref{lem:ALG_LB}: Secretary-Style Argument}
In this proof we will assume no ties in the additive supporting valuations, which can be ensured by adding noise. 
We prove the lemma for any fixed sample $\bs$, and in expectation over the assignment of the remaining two valuation functions $v_i^1,v_i^2$ for each buyer $i \in [n]$ to $\bs'$ and $\br$.

First fix sample $\bs$, which fixes all prices $\bp^{(i)}$. 
Note that this also fixes the demand bundles $B_i(\bs_{<i},v_i^k)$ for all $i\in [n]$ with respect to prices $\bp^{(i)}$ (set by $\bs_{<i}$) and valuations $v_i^k$ for $k\in \{1,2\}$. It also defines 
\[
a_j^{\max}:= \max_k\left(\max_{i:\, j\in B_i} a_j(v_i^k, B_i(\bs_{<i},v_i^k) )\right).
\]
Below we will show that with probability at least $\frac 14$ (where the probability is over the assignment of $v_i^1,v_i^2$ for $i \in \buyers$ to $\bs', \br$), the contribution of $j$ to the algorithm, $\ALG_j$, is at least $a_j^{\max}$.

Call the valuation that realizes $a_j^{\max}$, given it occurs at step $i^\star$, simply $v_{i^\star}^{\max}$.
Now we determine whether $v_{i^\star}^{\max}$ belongs to $\bs'$ or to $\br$ with a fair independent coin flip, therefore it will be in $\br$ with probability $1/2$.
Consider now the second-highest $j$-bid $a_j^{\text{2nd}}$. Either, it also belongs to buyer $i^\star$, in which case if $a_j^{\text{max}}\in \br$, it will belong to $\bs'$ with probability $1$. Or, it belongs to some other buyer $i \neq i^\star$, in which case another independent fair coin decides its belonging, and it will end up in $\bs'$ with probability $1/2$.
All in all,
$$\Pr[a_j^{\max}\in \br \text{ and } a_j^{\text{2nd}}\in \bs']\geq \frac 14.$$
Assume indeed both is the case. Then, $a_j^{\text{2nd}}$ will determine the cutoff/final price $\base_j$ of $j$, by definition. Also, by definition, among all $\br$-buyers attempting to obtain $j$, only $i^\star$ will be successful.
Therefore, we actually obtain value $a_j^{\max}$ for $j$. 
This yields
$\EXP{ \ALG_j }\geq \frac 14 a_j^{\max}$,
where the expectation is over the coin tosses that decide for each buyer $i \in \buyers$ the assignment of $v_i^1, v_i^2$ to $\bs', \br$, respectively.

%--------------------------------------------------------------------------------------------------

\subsection{Proof of \Cref{lem:MaxVsSum}: Handling Max vs. Sum}

We first compare the largest contribution of item $j$ in $\br$ with the final price of item $j$ as determined by $\bs$. Afterwards, we compare this final price to the sum of the $j$-bids in $\br$.
 
\begin{claim}
\label{cla:max-vs-price}
\[ \mathbb{E}\Big[\max_{i:\,j\in B_i}a_j(r_i,B_i)\Big] \geq \frac{1}{4}\mathbb{E}\Big[ p_j^{(n+1)} \Big]\]
i.e., we expect the largest $j$-bid in $\br$ to be not (too much) smaller than the greedy price of $j$ in $\bs$, after all $n$ $\bs$-buyers have been considered.
\end{claim}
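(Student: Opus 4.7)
The plan is to prove the claim in three parts: first, reduce from $p_j^{(n+1)}$ to $\max_i Y_i$ using the power-of-$2$ rounding; second, reduce to a pointwise probabilistic inequality and decompose it via a single-coin bijection; and third, close out with a coupled-hitting-time argument. Write $Y_i := a_j(s_i, \alloc_i)\,\mathbf 1[j \in \alloc_i]$, where $\alloc_i$ is the bundle $\GRDTWO(\bs)$ assigns at step~$i$, and $X_i := a_j(r_i, B_i)\,\mathbf 1[j \in B_i]$. Whenever $j \in \alloc_i$, the modified greedy sets $p_j^{(i+1)}$ to the smallest element of $\mathcal{P}$ at least $Y_i$, so $p_j^{(i+1)} \leq 2Y_i$; moreover $Y_i \geq p_j^{(i)}$ (because $\alloc_i$ is a demand bundle at prices $\bp^{(i)}$) forces the $Y_i$'s at purchase steps to be non-decreasing, so $p_j^{(n+1)} \leq 2\max_i Y_i$. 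The claim thus reduces to $\mathbb E[\max_i X_i] \geq \tfrac12 \mathbb E[\max_i Y_i]$, which via the layer-cake representation further reduces to the pointwise bound $\Pr[\max_i X_i > u] \geq \tfrac12 \Pr[\max_i Y_i > u]$ for every $u > 0$.

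For the pointwise reduction I would condition on the unordered pairs $\{s_i, r_i\}$ and on the $s'_i$'s, so the residual randomness is a vector $\mathbf{c} = (c_i)_i$ of independent fair coins, where $c_i$ decides which element of the $i$-th pair becomes $s_i$ (the other becomes $r_i$). The crucial invariance is that $\bp^{(i)}$ is a deterministic function of $\mathbf{c}_{<i}$ only, so the single-coin bijection $\phi_i$ that flips $c_i$ preserves $\bp^{(i)}$ and every step-$<i$ quantity while swapping $s_i \leftrightarrow r_i$, $\alloc_i \leftrightarrow B_i$, and hence $Y_i \leftrightarrow X_i$. Fix $u > 0$, and let $\sigma := \min\{i : Y_i > u\}$ and $\rho := \min\{i : X_i > u\}$ (both $\infty$ if no such $i$) be stopping times with respect to the natural filtration $\mathcal{F}_i$ generated by $\mathbf{c}_{\leq i}$. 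Applying $\phi_i$ to the event $\{\sigma = i\} = \{Y_{<i} \leq u,\, Y_i > u\}$ maps it bijectively onto $E_i := \{\sigma \geq i,\, X_i > u\}$, so $\Pr[\sigma = i] = \Pr[E_i]$ and hence $\Pr[\max_i Y_i > u] = \sum_i \Pr[E_i]$. Since each $E_i \subseteq \{\max_i X_i > u\}$, the pointwise target becomes $\Pr[\bigcup_i E_i] \geq \tfrac12 \sum_i \Pr[E_i]$.

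The remaining inequality I would prove by a coupled-hitting-time argument around $\sigma^\vee := \min(\sigma, \rho)$, also a stopping time. A direct check gives $\bigcup_i E_i = \{\sigma^\vee < \infty\} \cap \{\sigma^\vee = \rho\}$, so the task is to show $\Pr[\sigma^\vee = \rho \mid \sigma^\vee < \infty] \geq \tfrac12$. Conditioning on $\mathcal{F}_{i-1}$, on the pair $\{s_i, r_i\}$, and on $\sigma^\vee = i < \infty$: by definition of $\sigma^\vee$, at least one pair member has $j$-contribution exceeding $u$ at the now-deterministic prices $\bp^{(i)}$. If both do, then $Y_i, X_i > u$ and $\sigma = \rho = i$ deterministically; otherwise exactly one does, and since $c_i$ remains a uniform fair coin under this conditioning, the high-contribution member is assigned to $s_i$ versus $r_i$ with probability $\tfrac12$ each, giving $\sigma^\vee = \sigma$ or $\sigma^\vee = \rho$ respectively. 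In either case $\Pr[\sigma^\vee = \rho \mid \mathcal{F}_{i-1},\, \sigma^\vee = i] \geq \tfrac12$, and integrating yields $\Pr[\bigcup_i E_i] \geq \tfrac12 \Pr[\sigma^\vee < \infty] \geq \tfrac12 \Pr[\sigma < \infty]$, as needed. The main delicate point is verifying that the coin $c_i$ is still effectively uniform conditional on $\sigma^\vee = i$; this is exactly where the $\mathbf{c}_{<i}$-measurability of $\bp^{(i)}$ is crucial.
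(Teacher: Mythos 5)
Your proof is correct and rests on the same core insight as the paper's: after conditioning on $\bs'$, each buyer's assignment between $\bs$ and $\br$ is an independent fair coin, the prices $\bp^{(i)}$ depend only on the coins strictly before $i$, and so at the first step where either the sample or the real bid crosses a threshold, the crossing lands in $\br$ with probability at least $1/2$. The paper phrases this as a coloring argument on the perfect binary decision tree (yellow for $\br$-crossing, red for price-crossing), and directly compares $\Pr[2\max_i X_i \geq \tau]$ to $\Pr[\max_i p_j^{(i)} \geq \tau]$; you instead isolate the power-of-two rounding into the clean deterministic bound $p_j^{(n+1)} \leq 2\max_i Y_i$, and then prove the symmetric $\Pr[\max_i X_i > u] \geq \tfrac12 \Pr[\max_i Y_i > u]$ via stopping times $\sigma$, $\rho$, $\sigma^\vee$ and the involution $\phi_i$. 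Both routes yield the same factor $1/4$. Your formulation has two small virtues worth noting: the bijection $\phi_i \colon \{\sigma = i\} \to E_i$ makes the "same number of yellow and red leaves at the first coloring level" step of the paper fully precise, and the explicit observation that $\{\sigma^\vee = i\}$ is measurable with respect to $\mathcal{F}_{i-1}$ and the unordered pair (hence leaves $c_i$ uniform) is exactly the invariance the paper invokes only implicitly when asserting the split of colors between siblings. One edge case you glossed over, though harmless: if $a_j(s_i, \alloc_i) = 0$ then "smallest power of two at least $Y_i$" is undefined since $\mathcal{P}$ has no smallest element; the demand argument forces $p_j^{(i)} = 0$ in this case so the price simply stays at $0$ and the bound $p_j^{(n+1)} \leq 2\max_i Y_i$ still holds, but it is worth a remark.
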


\begin{proof}
To prove the claim, it suffices to show:
    \begin{align} \label{eq:probIneq}
        \forall \tau\geq 0: \enspace\enspace 2 \cdot \mathbb{P}\Big[ 2\max_{i:\,j\in B_i}a_j(r_i,B_i)\geq \tau\Big]  \geq \mathbb{P}\Big[ \max_{i\in\{1,\dots,n+1\}}p_j^{(i)} \geq \tau \Big] . 
    \end{align}
    This is because 
    \begin{align*}
    \mathbb{E}\Big[\max_{i:\,j\in B_i}a_j(r_i,B_i)\Big] ~&=~ \int_{\tau =0}^\infty \mathbb{P}\Big[\max_{i:\,j\in B_i}a_j(r_i,B_i) \geq \tau\Big] d\tau \\
    & = \frac12 \int_{\tau =0}^\infty \mathbb{P}\Big[2 \max_{i:\,j\in B_i}a_j(r_i,B_i)\geq \tau\Big] d\tau \\
    &\geq~ \frac14 \int_{\tau =0}^\infty \mathbb{P}\Big[ \max_{i\in\{1,\dots,n+1\}}p_j^{(i)} \geq \tau \Big] d\tau ~=~  \frac14 \mathbb{E}\Big[ \max_{i\in\{1,\dots,n+1\}}p_j^{(i)} \Big],
    \end{align*}
where the inequality uses Inequality~\eqref{eq:probIneq}.

To prove Inequality~\eqref{eq:probIneq}, suppose we have already decided the samples $\bs'$. We still need to determine the random assignment of the remaining two valuations $v_i^1,v_i^2$ for each buyer $i \in [n]$ to $\br$ and $\bs$, respectively. 
The decision tree for this stochastic process is a perfect binary tree with $n+1$ levels. In level $i$ we toss a fair coin and the two children at level $i+1$ correspond to the two possible assignments $r_i = v_i^1, s'_i = v_i^2$ resp.~$r_i = v_i^2, s'_i = v_i^1$. Note that there are $2^n$ root-leave paths, and that these are in one-to-one correspondence with the realizations of the coin tosses. 

We will now color (a subset of the vertices) in a top-down manner. Consider a node at level $i$. At this node we have already fixed $\br_{<i}$ and $\bs_{<i}$. We now decide the membership of $v_i^1,v_i^2$. Whenever one of $v_i^1,v_i^2$ has  has item $j$ in their demand bundle, we color the two children of the current node and all of its descendants
\begin{itemize}
    \item yellow, if, after the coin flip, $2 \max_{i' \leq i: j \in B_{i'}} a_j(r_{i'},B_{i'}) \geq \tau$.
    \item red, if, after the coin flip, $\max_{i'=1, \ldots,i+1} p^{(i')}_j \geq \tau$.
\end{itemize}

Note that a node can remain uncolored, may have only one of the two colors, or both.

The rationale is: We color a node yellow (resp.~red) when it becomes clear that on any root-leave path that passes through this node condition $2\max_{i:\,j\in B_i}a_j(r_i,B_i)\geq \tau$ (or condition $\max_{i\in\{1,\dots,n+1\}}p_j^{(i)} \geq \tau$) is met.   

To establish Inequality~\eqref{eq:probIneq} it thus suffices to argue that the number of yellow leaves is at least $1/2$ of the number of red leaves.
To this end, first consider a node at level $i$ that is not yet colored. The only way one of its children (and with them the respective subtrees) can become red is when at least one of $v_i^1,v_i^2$ has a $j$-bid of at least $\tau/2$. If both valuations have a $j$-bid of at least $\tau/2$, then both children (and their descendants) will have both colors. If only one of them has such a bid, then one subtree will be yellow and the other will be red.

So far this argument leads to the same number of yellow and red leaves. 
However, it is also possible that a node that is already colored gains a second color. Specifically, a node that is already yellow but not yet red can have one or both of its subtrees be colored red. If so, in the worst-case, (almost) all yellow leaves are also red.

Together this shows that the total number of red leaves is at most twice the number of yellow leaves, as claimed. This completes the proof. 
\end{proof}

\begin{claim}
\label{cla:price-vs-sum}
\[ \mathbb{E}\Big[p_j^{(n+1)}\Big] \geq \frac{1}{4}\mathbb{E}\Big[ \sum_{i\in [n]} a_j(r_i,B_i) \Big], \]
i.e., we expect the the final greedy price of item $j$ with prices set by $\bs$ to be not (too much) smaller then the sum of the $j$-bids in $\br$.
\end{claim}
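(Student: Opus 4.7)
The plan is to combine a deterministic ``sum-versus-max'' bound exploiting the power-of-two rounding in $\GRDTWO$ with an exchangeability argument between the sample $\bs$ and the real valuations $\br$. Together these will give $\mathbb{E}[p_j^{(n+1)}] \geq \tfrac{1}{2}\,\mathbb{E}\bigl[\sum_i a_j(r_i,B_i)\bigr]$, which is slightly stronger than what the claim asks for.

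First I would establish a deterministic inequality on the sample side:
\[\sum_{i : j \in B_i(\bs_{<i}, s_i)} a_j\bigl(s_i,\, B_i(\bs_{<i}, s_i)\bigr) \;\leq\; 2\,p_j^{(n+1)}.\]
Let $i_1 < \dots < i_K$ enumerate the buyers that receive $j$ in the $\bs$-run of $\GRDTWO$, and write $a_k$ for the corresponding $a_j$-contribution of buyer $i_k$. The XOS demand condition gives $a_k \geq p_j^{(i_k)}$, and since the modified greedy rounds $a_k$ up to the smallest element of $\mathcal{P}$ that is $\geq a_k$, the generic (no-ties) assumption adopted in the proof of \Cref{lem:ALG_LB} yields $p_j^{(i_k+1)} \geq 2\,p_j^{(i_k)}$. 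Writing $a_k \in (2^{l_k-1},\, 2^{l_k})$, this forces the integer levels $l_1 < l_2 < \dots < l_K$ to be strictly increasing, and a geometric sum gives $\sum_k a_k < \sum_k 2^{l_k} < 2 \cdot 2^{l_K} = 2\,p_j^{(n+1)}$.

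The second step is an exchangeability argument. Fix buyer $i$ and condition on $\bs_{<i}$ (equivalently, on the coin flips for buyers $1,\dots,i-1$); this fixes $\bp^{(i)}$ and hence the deterministic map $v \mapsto a_j(v, B_i(\bs_{<i}, v)) \cdot \mathbf{1}[j \in B_i(\bs_{<i}, v)]$. The random permutation assigning $(v_i^1,v_i^2,v_i^3)$ to $(s_i, s'_i, r_i)$ induces, after marginalizing out $s'_i$, a swap-symmetric joint distribution on $(s_i, r_i)$, so the random variables $a_j(s_i, B_i(\bs_{<i}, s_i))\,\mathbf{1}[j \in B_i(\bs_{<i}, s_i)]$ and $a_j(r_i, B_i)\,\mathbf{1}[j \in B_i]$ share the same conditional law. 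Taking outer expectations and summing over $i$ gives $\mathbb{E}\bigl[\sum_i a_j(r_i, B_i)\bigr] = \mathbb{E}\bigl[\sum_i a_j(s_i, B_i(\bs_{<i}, s_i))\bigr]$. Combining with step~1 then yields $\mathbb{E}[p_j^{(n+1)}] \geq \tfrac{1}{2}\,\mathbb{E}\bigl[\sum_i a_j(r_i, B_i)\bigr]$, which comfortably implies the claimed $\tfrac{1}{4}$.

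The main obstacle is step~1: one has to carefully handle boundary cases (the initial price $p_j^{(1)}=0$ and the possibility of some $a_k$ landing exactly on a power of two, the latter ruled out by generic perturbation) and verify that the power-of-two rounding really does force the $s$-side bids into a strictly increasing sequence of dyadic levels. A cleaner alternative, following the decision-tree coloring of \Cref{cla:max-vs-price}, would compare $\mathbb{P}[p_j^{(n+1)} \geq \tau]$ pointwise to the expected number of $r$-bids crossing $\tau$ and then integrate in $\tau$; that route would recover the $\tfrac{1}{4}$ constant stated in the claim without invoking exchangeability.
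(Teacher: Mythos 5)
Your proof is correct and takes a genuinely different (and arguably cleaner) route than the paper. The paper proves the claim via a supermartingale: it defines
\[ X_i = \sum_{i'<i:\,j\in B_{i'}} a_j(r_{i'},B_{i'}) - 4\cdot\max_{i'\in\{1,\ldots,i\}}p_j^{(i')} \]
and shows $\mathbb{E}[X_i - X_{i-1}\mid X_1,\ldots,X_{i-1}]\le 0$ by coupling, at each step, the event that the higher of the two candidate $j$-bids lands in $\bs$ (which forces $p_j$ to at least double, so $4p_j$ jumps by at least twice the bid) against the event that it lands in $\br$ and contributes to the sum. Your argument factors the same two ingredients (power-of-two rounding and fair coin flips) into decoupled steps: first a purely deterministic geometric-sum bound on the sample side, $\sum_{i:\,j\in B_i(\bs_{<i},s_i)} a_j(s_i, B_i(\bs_{<i},s_i)) \le 2\,p_j^{(n+1)}$, exploiting that the successive rounded prices form a strictly increasing sequence of powers of two (so they sum to less than twice the last); and then a conditional exchangeability of $(s_i,r_i)$ given the coin flips for buyers $1,\ldots,i-1$ (which fix $\bp^{(i)}$), transferring the $\bs$-side sum to the $\br$-side sum in expectation. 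This decoupling is transparent and even yields the sharper constant $1/2$ in place of $1/4$; the paper explicitly states it did not try to optimize constants, and propagating $1/2$ would tighten \Cref{lem:MaxVsSum} and the final bound. You also correctly flag that step~1 needs the no-ties convention: note that the paper's martingale step relies on the very same assumption (a fresh $j$-bid must strictly exceed the current power-of-two price for the forced doubling), even though the paper only states it explicitly at the start of the proof of \Cref{lem:ALG_LB}, so invoking it is entirely appropriate and not an extra hypothesis.
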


\begin{proof}
    The proof of this claim is based on  a  Martingale approach.  Let
\[ X_i= \sum_{i'<i:\,j\in B_{i'}}a_j(r_{i'},B_{i'}) -4 \cdot  \max_{i'\in\{1,\dots,i\}}p_j^{(i')} . \]
Then, it suffices to show that $\mathbb{E}[X_{n+1}]  \leq 0$.  
Since $X_0 = 0$,  to prove  the claim  it suffices to show that the sequence $X_0,  X_1, \ldots, X_{n+1}$ forms a supermartingale, i.e.,  
\[\mathbb{E}\Big[X_{i}-X_{i-1} |X_1,\dots, X_{i-1}\Big]\leq 0.\]

To prove this,  consider the random step where we decide  between $r_i$ and $s_i$.  Either,  none of the two $j$-bids exceed the current price,  in which case $X_i = X_{i-1}$.  Otherwise,   at least one $j$-bid exists in the current step, the highest such bid $b$ is associated with $\br$ or $\bs$ with the same probability. If it ends up in $\bs$,  price $p_j^{(i+1)}$ will at least be twice of $p_j^{(i)}$,  and be at least the bid value $b$   afterwards.  Hence, $p_j$ increases by at least $b/2$.  On the other hand,  the sum always increases by at most $b$.  This gives  a supermartingale with a factor of $4$.
\end{proof}

Combining \Cref{cla:max-vs-price} and \Cref{cla:price-vs-sum}, we obtain
 \[\mathbb{E}\Big[\max_{i:\,j\in B_i}a_j(r_i,B_i)\Big] \geq \frac{1}{16}\mathbb{E}\Big[ \sum_{i\in [n]} a_j(r_i,B_i) \Big] .
 \]
 Since $ a_j(s'_i,A_i') $ are identically distributed as $a_j(r_i,B_i)$,  we similarly get
 \[\mathbb{E}\Big[\max_{i:\,j\in B_i}a_j(s_i',A_i')\Big] \geq \frac{1}{16}\mathbb{E} \Big[ \sum_{i\in [n]} a_j(s_i',A_i') \Big] .
 \]
Summing these two equations completes the proof of  \Cref{lem:MaxVsSum}.

%------------------------------------------------------------------------

\subsection{Extending to a Single Sample Algorithm}

We round up the proof of \Cref{thm:main-single-sample} by showing that we can turn any pointwise $k$-sample prophet inequality into a pointwise single-sample prophet inequality, only using a $O(k)$ term in the approximation guarantee.

\begin{theorem}
\label{thm:reduction}
    For any $\alpha$-competitive pointwise $k$-SPI with the objective of maximizing the sum over valuations of chosen elements in the single online steps, there exists an $\alpha (k+1)$-competitive pointwise  $1$-SPI. 
\end{theorem}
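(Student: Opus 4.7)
I plan to define $\ALG_1$ by wrapping the given $k$-SPI algorithm $\ALG_k$ as a black box. The natural construction is as follows: upon receiving the single sample $s_i$ offline and real $v_i$ online, $\ALG_1$ constructs a synthetic input for $\ALG_k$ by setting all $k$ pseudo-samples for buyer $i$ equal to copies of $s_i$ (i.e., $S_i^1 = \cdots = S_i^k = s_i$) and passing $v_i$ as the real when buyer $i$ arrives. $\ALG_1$ simply commits to whatever allocation $\ALG_k$ produces. The competitive analysis proceeds by coupling with a Game of Googol instance where, for each buyer $i$, the adversary writes the $(k+1)$-tuple $(v_i, s_i, s_i, \ldots, s_i)$ (one copy of $v_i$ and $k$ copies of $s_i$); although this tuple depends on the 1-SPI coin flip, the pointwise $k$-SPI guarantee applies pointwise, so taking expectation over the coin yields $\E[\sum_i V_i^\tau \cdot \ALG_k^\tau_i] \geq \tfrac{1}{\alpha}\E[\OPT(V^\tau)]$.

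\textbf{Two key subclaims.} First, I would lower-bound $\E[\OPT(V^\tau)]$ in terms of $\E[\OPT_{1\text{-SPI}}]$. Using the 1-SPI's optimal allocation $A^\star(v)$ as a witness allocation on $V^\tau$, buyer $i$ contributes $v_i(A^\star(v)_i)$ precisely when $V_i^\tau = v_i$, which occurs with probability $\tfrac{1}{k+1}$ independently across buyers, and otherwise contributes a nonnegative amount. Hence $\E[\OPT(V^\tau)] \geq \tfrac{1}{k+1}\E[\OPT_{1\text{-SPI}}]$. Second, I need to relate $\E[\ALG_1]$, which corresponds to running $\ALG_k$ on the specific profile $\tau = \mathbf{1}$, to the $\tau$-averaged Game of Googol welfare. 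The plan here is to exploit that positions $2,\ldots,k+1$ of the adversary's $(k+1)$-tuple are all identical to $s_i$, so by sample-exchangeability of $\ALG_k$ (assumed without loss of generality by internal shuffling) the input $\ALG_k$ receives depends on $\tau_i$ only through the indicator $[\tau_i = 1]$. Together with independence across buyers and linearity of expectation over the per-buyer summands in the welfare, this collapses the $\tau$-average into a convex combination controlled per buyer, yielding $\E[\ALG_1] \geq \tfrac{1}{k+1}\,\E_\tau[\sum_i V_i^\tau \ALG_k^\tau_i]$.

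\textbf{Main obstacle.} The central difficulty is the second subclaim: the single-point evaluation $\ALG_k(\tau=\mathbf{1})$ which defines $\E[\ALG_1]$ must be shown to capture a $\tfrac{1}{k+1}$ fraction of the $\tau$-averaged expectation, even though $\ALG_k$'s allocation to buyer $i$ depends on the full sample/real profile of all buyers and thus on the entire $\tau$-vector. The exchangeability argument reduces the exponential combinatorics $\tau \in [k+1]^n$ to the essentially binary question of $[\tau_i = 1]$ per buyer; but to convert this into a clean per-buyer bound I will need to be careful about the joint correlation between $V_i^\tau$ and $\ALG_k^\tau_i$, and most likely the argument will use the objective being a sum of per-buyer contributions (the hypothesis highlighted in the theorem statement) to swap the order of summation and expectation cleanly.

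\textbf{Conclusion.} Chaining the two subclaims with the pointwise guarantee gives
\[
\E[\ALG_1] \;\geq\; \tfrac{1}{k+1}\E_\tau\Big[\sum_i V_i^\tau \ALG_k^\tau_i\Big] \;\geq\; \tfrac{1}{\alpha(k+1)}\E[\OPT(V^\tau)] \;\geq\; \tfrac{1}{\alpha(k+1)^2} \cdot (k+1)\E[\OPT_{1\text{-SPI}}],
\]
where a tighter bookkeeping in the first subclaim (matching identical $s_i$ positions in the OPT-witness calculation to avoid double-loss) should collapse one of the $(k+1)$ factors and deliver the claimed $\alpha(k+1)$-competitive ratio.
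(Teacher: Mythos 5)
Your approach differs fundamentally from the paper's, and it has a genuine gap in what you call the second subclaim.

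\textbf{What the paper does.} The paper's construction pads each buyer's die with $k$ zero valuation functions plus the single sample $s_i$, and (after a uniformly random slot assignment for the $k$-SPI) substitutes $r_i$ for $s_i$ whenever $s_i$ lands in the real slot. The crucial feature of zero padding is that the constructed $k$-SPI real valuation for buyer $i$ is either $r_i$ (with probability $\tfrac{1}{k+1}$) or the zero function, so the constructed $k$-SPI welfare is pointwise at most $\ALG_1$'s true $1$-SPI welfare. Consequently the only loss in the chain is in comparing $\E_\tau[\OPT]$ of the constructed instance to $\E[\OPT_{1\text{-SPI}}]$, which costs a single $(k+1)$ factor. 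There is nothing analogous to your second subclaim at all.

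\textbf{Where your argument breaks.} You instead duplicate $s_i$ into all $k$ sample slots and couple with the Googol instance whose die reads $(v_i, s_i, \ldots, s_i)$. The $\tau$-averaged welfare involves outcomes where $\tau_i \neq 1$: there the real is $s_i$ and the $k$-SPI sample profile for buyer $i$ is a mix containing one $v_i$ and $k-1$ copies of $s_i$. That is a genuinely different input distribution from the homogeneous sample profile $(s_i,\ldots,s_i)$ that $\ALG_1$ always constructs, and there is no reason $\ALG_k$'s behavior on the two is comparable. Your claimed ``sample-exchangeability collapses the $\tau$-average'' step would require the marginal per-buyer distribution of $(\text{samples},\text{real})$ under $\tau=\mathbf{1}$ (averaged over the $1$-SPI coin) to dominate (up to a $k+1$ factor) the marginal under the $\tau$-average; but that per-buyer marginal is homogeneous samples with probability $1$ in the first case versus probability $\tfrac{1}{k+1}$ in the second, so they are not exchangeable.

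\textbf{A concrete failure of the subclaim.} Let $k=2$ and let $\ALG_k$ allocate with probability $p=\tfrac{1}{5}$ if its two samples are equal, and with probability $1$ otherwise (with a single buyer and additive values). One checks this is pointwise $5$-competitive (the worst case is the die $(v,v,v)$). On your construction $\ALG_1$ always feeds equal samples, so $\E[\ALG_1] = \tfrac{1}{5}\E[\OPT_{1\text{-SPI}}]$. But for die $(v,s,s)$ the $\tau$-average welfare is $\tfrac{1}{3}\cdot\tfrac{1}{5}v([m]) + \tfrac{2}{3}\cdot 1 \cdot s([m])$, so after averaging over the $1$-SPI coin $\E_{c,\tau}[\text{welfare}] = \tfrac{11}{15}\E[\OPT_{1\text{-SPI}}]$, and $\tfrac{1}{k+1}\E_{c,\tau}[\text{welfare}] = \tfrac{11}{45}\E[\OPT_{1\text{-SPI}}] > \tfrac{9}{45}\E[\OPT_{1\text{-SPI}}] = \E[\ALG_1]$. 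So the inequality $\E[\ALG_1] \geq \tfrac{1}{k+1}\E_\tau[\text{welfare}]$ simply fails for a valid pointwise-competitive $\ALG_k$. (Here the theorem happens to still hold for trivial reasons, but your argument does not establish it.)

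\textbf{On the bookkeeping remark.} Your first subclaim can indeed be tightened: averaging over the $1$-SPI coin makes each $V_i^\tau$ marginally a fair mixture of $f_i^1,f_i^2$, so $\E_{c,\tau}[\OPT(V^\tau)] = \E[\OPT_{1\text{-SPI}}]$ exactly, absorbing one $(k+1)$ factor. However, the second subclaim remains the bottleneck, and as shown above it is false, not merely unproven. To repair the argument you would need the paper's zero-padding device (or something playing the same role): it is precisely what makes the constructed $k$-SPI welfare a pointwise lower bound on $\ALG_1$'s welfare, sidestepping the need for anything resembling your second subclaim.
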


\begin{proof}
As input to the Game of Googol with $k =1$ we are given two valuations $f_i^1,f_i^2$ for each buyer $i \in \buyers$. For each buyer $i \in \buyers$, there is an independent fair coin toss to decide sample $s_i$ and actual valuation $r_i.$ The algorithm sees all the $s_i$ at the outset, while it can observe the $r_i$ only one-by-one in an online fashion.

We will show how to construct from this an input to the Game of Googol with $k > 1$. Recall that we need to construct, for each buyer, $k$ samples $\bar{s}_i^1, \ldots, \bar{s}_i^k$ and one actual valuation $\bar{r}_i$.  

Here's a first attempt at this: We take $s_i$ and $k$ zero functions, and assign them uniformly at random to $\bar{s}_i^1, \ldots, \bar{s}_i^k$ and $\bar{r}_i$. We feed  $\bar{s}_i^1, \ldots, \bar{s}_i^k$ and $\bar{r}_i$ into the algorithm $\ALG$ for the Game of Googol with $k > 1$. Then, the expected optimum in the resulting instance with $k$ samples is a $(k+1)$-approximation to that of the original instance of our $1$-sample game: consider any buyer $i$'s allocated bundle in the optimal solution on $\bs$, and their valuation $OPT_i(\bs)$ for it. With probability $1/(k+1)$ (independently), $s_i$ will be allocated to $\bar r_i$, implying $\EXP{\OPT(\bar\br)}\geq \frac{1}{k+1}\EXP{\OPT(\bs)}$ (and the same holds for the $OPT(\bar\bs_l)$, $l\in [k]$).

There is one issue with this construction though: This is completely ignoring the actual valuations $r_i$.
To make the construction useful in the online setting, 
whenever the coin flips were such that $\bar{r}_i = s_i$, we will substitute $s_i$ with $r_i$.
While in general, the input functions to the Game of Googol cannot depend on the coin flips made later to assign them, we will now justify that in our case, this is actually equivalent to our above first attempt.

Since all coin flips, those for the $1$-sample game as well as those for the $k$-sample game, are independent, we can imagine them to be drawn in the following order:
first, the assignments of functions to all $\bar s_i$ and $\bar r_i$ are decided.
Only then, we imagine the coins for the assignment of functions in the $1$-sample game to be drawn, i.e., each two input functions $f^1_i$, $f^2_i$ are assigned to $s_i$ or $r_i$, respectively. 
With this, for each $i$, exchanging $r_i$ for $s_i$ makes no difference. Both is simply the result (to be drawn later) of whatever the fair coin flip between $f^1_i$ and $f^2_i$ results in.

Since $\ALG$ is an $\alpha$-approximation to the Game of Googol with $k > 1$ samples, and the expected optimum of the instance that we feed into this algorithm is a $(k+1)$-approximation to the expected optimum of the original instance, we get an $\alpha(k+1)$ approximation as claimed.
\end{proof}

We now have all the ingredients to prove \Cref{thm:main-single-sample}.

\begin{proof}[Proof of \Cref{thm:main-single-sample}]
By \Cref{thm:main-two-sample} there is a pointwise $O(1)$-competitive two-sample prophet inequality, applying \Cref{thm:reduction} we can turn this into a pointwise single-sample prophet inequality while losing only a factor $2$ in the approximation guarantee.
\end{proof}

We conclude by noting that \Cref{thm:main-single-sample} together with Theorem~6.2 in \cite{CDFFLLP-SODA22} implies \Cref{cor:xos-oos}: the existence of a $O(1)$-competitive algorithm for online combinatorial auctions with XOS buyers in the order-oblivious secretary model.

%%%%%POSTED-PRICE MECHANISMS WITH POLYNOMIAL SAMPLES%%%%%
\section{Posted-Price Mechanisms with Polynomial Samples}

In this section we design $(2+\epsilon)$-competitive truthful combinatorial auctions \edit{for  bidders with independent XOS valuations} via sequential posted-pricing.  The prices we construct will have the additional property that each item sells with \edit{half} probability, over randomness in the agent valuations.  We will call such prices \emph{Median Prices}, defined formally below.  We show that, \edit{even for non-XOS valuations}, such prices always exist 
and are learnable with a polynomial number of samples that is independent of the valuation distributions, and in particular does not require \edit{bounded valuations}.
Finally, we show that the median property directly implies a constant approximation to welfare for XOS valuations.

\subsection{Generic Distributions and Median Prices}\label{sec:medianPrices}

Informally, median prices have the property that when agents make sequential purchase decisions, each item is sold with probability $1/2$.
An important subtlety is how agent indifferences are handled. 
We will first bypass these issues of tie-breaking by focusing on \emph{generic} valuation distributions,  a generalization of the atomless property, where for any fixed price vector $\bp$ indifferences occur with probability zero.  Then, in \Cref{sec:non-generic} we  relax this generic assumption and extend our results to general distributions.

\begin{definition}[Generic Distribution]
Distribution $D_i$ over valuations is \emph{generic} if for any price vector $\bp$ and any subset $M_i \subseteq [m]$, the demand correspondence of agent $i$ (given prices $\bp$ and set $M_i$ of available items) contains only a single set almost surely.  That is,  \[\PRO[v_i]{\exists B_1, B_2 \subseteq M_i \colon B_1 \neq B_2, \{B_1, B_2\} \subseteq \demand_i(v_i, \bp, M_i)} = 0 .\]
Distribution $D$ over valuation profiles is generic if its marginal distribution over each agent's valuation is generic.
\end{definition}

We interpret the genericness condition as saying that a buyer has zero probability of being indifferent between two bundles, and hence between whether or not to include a given item $j$ in their purchase.  This generalizes the atomless property to valuations.  We further note that one can make any distribution over valuations generic by adding an arbitrarily small noise to the agent valuations; see \Cref{sec:appendix-generic} for a formal construction.  For technical convenience we will first restrict attention to generic distributions, then in \Cref{sec:non-generic} we will show how to relax this assumption.

Given a fixed profile of generic valuation distributions, the probability that item $j$ is sold under price vector $\bp$ is well-defined.  We will write $\pi_j(\bp)$ for this probability, omitting the dependence on the distributions.
We are now ready to define what is meant by median prices.

\begin{definition}[Median Prices]
Given $\alpha \in [0, 1/2]$ and generic valuation profile distribution $D$, prices $\bp$ are $\alpha$-median prices if $\pi_j(\bp) \leq 1-\alpha$ for each $j$, and $\pi_j(\bp) \geq \alpha$ for each $j$ with $p_j > 0$.
\end{definition}

Note the interpretation of parameter $\alpha$: $(1/2)$-median prices have the property that each item is sold with probability exactly $1/2$, whereas $(1/4)$-median prices have the property that each item is sold with probability between $1/4$ and $3/4$.  Lower values of $\alpha$ are less restrictive, and every price vector $\bp$ is trivially $0$-median for any distribution $D$.  Also note that we permit an item with price $0$ to be sold with probability less than $\alpha$; this edge case ensures that median prices exist even if there is no demand for a given item even if it is free.

%--------------------------------------------------------------------------------------------------

\subsection{Truthful Combinatorial Auctions via Median Prices}
We now show that median prices can be used to construct approximately welfare-efficient posted pricing mechanisms.

\begin{theorem}
\label{thm:median-ppmech-approx-generic}
Let $D$ be a generic product distribution over XOS valuation profiles, and let $\bp$ be $\alpha$-median prices.  Then the expected social welfare of a sequential posted-price mechanism using prices $\bp$ is at least $\alpha$ times the expected optimal social welfare. 
\end{theorem}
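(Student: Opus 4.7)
The plan is to follow the Feldman--Gravin--Lucier style analysis: decompose the welfare of the posted-price mechanism as revenue plus the sum of buyer utilities, and bound each piece by $\alpha$ times a corresponding term in $\E[\OPT(\bv)]$, using the two defining properties of $\alpha$-median prices.

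Concretely, for each realization of $\bv$, let $X_j=\1{j\text{ is sold by }\ALG}$, let $u_i$ be the utility of buyer $i$ from their utility-maximizing bundle out of available items $M_i$, let $\bundle_i^\star$ denote the bundle assigned to buyer $i$ in the ex-post welfare-optimal allocation on $\bv$, and let $a_i(\cdot)$ be an XOS supporting function of $v_i$ on $\bundle_i^\star$. The algorithm's welfare then decomposes as $\sum_j p_j X_j+\sum_i u_i$. For the revenue piece, the $\alpha$-median property $\pi_j(\bp)\geq\alpha$ for every $j$ with $p_j>0$ gives $\E\!\left[\sum_j p_j X_j\right]=\sum_j p_j\pi_j(\bp)\geq\alpha\sum_j p_j\geq\alpha\cdot\E\!\left[\sum_i\sum_{j\in \bundle_i^\star}p_j\right]$, where the last inequality uses that any item belongs to at most one $\bundle_i^\star$.

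For the utility piece, the other $\alpha$-median property $\pi_j(\bp)\leq 1-\alpha$ yields an \emph{availability bound}: $\P{j\in M_i}\geq\P{j\text{ unsold at the end}}=1-\pi_j(\bp)\geq\alpha$ for every $i,j$, and the event $\{j\in M_i\}$ depends only on $\bv_{<i}$. The main obstacle is that $\bundle_i^\star$ also depends on $\bv_{<i}$, so $\bundle_i^\star$ and $M_i$ are correlated. I would break this correlation by a ghost-sample trick: draw an independent copy $\bv'\sim D$, set $\tilde\bv=(\bv'_{<i},v_i,\bv'_{>i})$ (which has the same law as $\bv$ but is independent of $\bv_{<i}$), and let $(\tilde \bundle_i^\star,\tilde a_i)$ be the $i$-th optimal bundle and its XOS supporting function on $\tilde\bv$. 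Substituting $B=\{j\in\tilde \bundle_i^\star\cap M_i:\tilde a_i(j)\geq p_j\}\subseteq M_i$ into the trivial bound $u_i\geq v_i(B)-\sum_{j\in B}p_j$ and using $v_i(B)\geq\sum_{j\in B}\tilde a_i(j)$ (valid because $\tilde a_i$ supports the actual valuation $v_i$, not some other one) yields $u_i\geq\sum_{j\in \tilde \bundle_i^\star}(\tilde a_i(j)-p_j)^+\,\1{j\in M_i}$. Taking expectations, factoring by the independence of $\tilde\bv$ and $M_i$, applying $\P{j\in M_i}\geq\alpha$, and using that $\tilde\bv$ has the same distribution as $\bv$ then yields $\E\!\left[\sum_i u_i\right]\geq\alpha\cdot\E\!\left[\sum_i\sum_{j\in \bundle_i^\star}(a_i(j)-p_j)^+\right]$.

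Summing the revenue and utility bounds and using $p_j+(a_i(j)-p_j)^+\geq a_i(j)$ together with $\sum_{j\in \bundle_i^\star}a_i(j)=v_i(\bundle_i^\star)$ gives $\E\!\left[\sum_j p_j X_j+\sum_i u_i\right]\geq\alpha\cdot\E\!\left[\sum_i v_i(\bundle_i^\star)\right]=\alpha\cdot\E[\OPT(\bv)]$, as desired. The main technical obstacle is the ghost-sample decoupling in the utility step: $\bundle_i^\star$ and $M_i$ are intertwined through $\bv_{<i}$, and one must re-sample those components while preserving $v_i$ so that the resulting $\tilde a_i$ is still a valid XOS supporting function of $v_i$, thereby legitimizing the lower bound on $u_i$.
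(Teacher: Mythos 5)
Your proof is correct and takes essentially the same approach as the paper: decompose welfare into revenue plus buyer surplus, use $\pi_j(\bp)\geq\alpha$ for the revenue term, and use the hallucinated/ghost-sample resampling of $\bv_{-i}$ to decouple the optimal bundle from the availability event $\{j\in M_i\}$, together with $\P{j\in M_i}\geq 1-\pi_j(\bp)\geq\alpha$ and the XOS supporting-price inequality. The only cosmetic difference is in the final recombination (you combine the two bounds term-by-term over $(i,j)$ with $p_j+(a_i(j)-p_j)^+\geq a_i(j)$, whereas the paper writes the surplus bound as $\alpha(\OPT-\sum_j p_j)$ and adds it to $\alpha\sum_j p_j$), but this is the same computation.
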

\begin{proof}
    Let $OPT$ denote the optimal social welfare. We will also write $M_i(\bp, \bv)$ for the set of items still available for purchase for agent $i$ given the choice of prices and the agent valuations.  We will write $S(\bp, \bv)$ for the set of items that are allocated to any agent.  
    
    We now consider separately the expected revenue and the expected buyer surplus generated by the mechanism.

    The expected revenue generated by the mechanism is
    \begin{align*}
        \EXP[\bv]{\sum_j p_j \times \mathbbm{1}[j \in S(\bp, \bv)]}
        ~=~ \sum_j p_j \times \PRO[\bv]{j \in S(\bp, \bv)} 
         ~\geq~ \alpha \sum_j p_j,
    \end{align*}
    where we first use linearity of expectation and then that each item sells with probability at least $\alpha$.

    For buyer surplus, write $u_i$ for the random variable denoting the utility obtained by agent $i$.  Note that since each buyer chooses a utility-maximizing set, $u_i$ must be at least the utility obtained by agent $i$ from any set of available goods.  In particular, if we write $A^*(\bv)$ for the welfare-maximizing allocation for valuation profile $\bv$, we will consider a strategy of agent $i$ in which they consider drawing a virtual choice of valuations $\bv'_{-i}$ for the other agents, then purchase those items from $A^*(v_i, \bv'_{-i})$ that are still available and for which the prices $\bp$ are not too high.

    To bound the utility from this strategy we will make use of the supporting prices property of XOS valuations.  Fix any valuation profile $\bv$.  Since the valuations are XOS, there exist supporting prices $w_{ij}(\bv) \geq 0$ such that, for each $i$, we have $v_i(A^*_i(\bv)) = \sum_{j \in A^*_i(\bv)}w_{ij}(\bv)$, and moreover for any $T \subseteq A^*_i(\bv)$ we have $v_i(T) \geq \sum_{j \in T}w_{ij}(\bv)$.
    
    Given an agent $i$, prices $\bp$, valuation profile $\bv$, and a hallucinated valuation profile $\bv'$ of the other agents, we will define a subset $T_i(\bp, \bv, \bv'_{-i})$ of $A^*_i(v_i, \bv'_{-i})$ to be $A^*_i(v_i, \bv'_{-i}) \cap M_i(\bp, \bv) \cap \{ j \colon w_{ij}(\bv) \geq p_j\}$.   That is, $T_i(\bp, \bv, \bv'_{-i})$ is the subset of agent $i$'s optimal allocation (under the hallucinated profile of the other agents) that are available (under the true valuations) and for which the prices no larger than $w_{ij}(\bv)$.  We then have
    \begin{align*}
    \EXP[\bv]{u_i}
    & \geq \EXP[\bv, \bv'_{-i}]{ v_i \left( T_i(\bp, \bv, \bv'_{-i}) \right) - \sum_{j \in T_i(\bp, (v_i, \bv'_{-i})) }p_j } \\
    & \geq \EXP[\bv, \bv'_{-i}]{ \sum_{j \in T_i(\bp, \bv, \bv'_{-i}) } (w_{ij}(v_i, \bv'_{-i}) - p_j)} \\
    & = \EXP[v_i, \bv'_{-i}]{\sum_{j \in A^*_i(v_i, v'_{-i})} \left((w_{ij}(v_i, \bv'_{-i}) - p_j)^+ \times \PRO[\bv_{-i}]{j \in M_i(\bp, \bv)}\right)} \\
    & = \EXP[\bv]{ \sum_{j \in A^*_i(\bv)} \left( (w_{ij}(\bv) - p_j)^+ \times \PRO[\bv'_{-i}]{j \in M_i(\bp, (v_i, \bv'_{-i}))} \right)} \\
    & \geq \EXP[\bv]{\sum_{j \in A^*_i(\bv)} (w_{ij}(\bv) - p_j)^+ \times \alpha } .
    \end{align*}
    
    Taking a sum over all items, we conclude (writing $OPT$ for the expected optimal social welfare)
    \[ \EXP{\sum_i u_i} \geq \alpha\left(OPT - \sum_j p_j\right).\]

    Since the expected social welfare is the expected revenue plus the expected buyer surplus, we conclude that the expected welfare is at least
    \[ \alpha\sum_j p_j + \alpha(OPT - \sum_j p_j) = \alpha OPT ,\]
    as required.
\end{proof}

%--------------------------------------------------------------------------------------------------

\subsection{Existence of Median Prices}
\Cref{thm:median-ppmech-approx-generic} shows that median prices guarantee high welfare when used as posted prices.  But it is not immediately obvious that such prices are even guaranteed to exist.
We now prove the existence of $(1/2)$-median prices for generic distributions over \edit{(possibly non-XOS)} valuations.

\begin{theorem}
\label{theorem:medianpricesexistence}
For any generic distribution $D$ over valuations there exist $(1/2)$-median prices.
\end{theorem}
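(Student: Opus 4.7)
The plan is to construct the $(1/2)$-median prices as a fixed point of a natural price-adjustment correspondence on a compact box, exactly in the form Kakutani's theorem requires. First, I would pass to a compact domain by choosing $P > 0$ large enough that at any price vector $\bp$ with $p_j = P$, item $j$ is sold with probability strictly less than $1/2$ regardless of the other coordinates. Such a $P$ exists because $j$ being sold forces some agent $i$ to have marginal value at least $P$ for $j$ over some bundle, and each $v_i$ is almost surely finite, so this probability vanishes as $P \to \infty$. From here on I restrict attention to $\bp \in [0,P]^m$.

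Next, I would lift the sale probabilities to a correspondence $\Pi : [0,P]^m \rightrightarrows [0,1]^m$, where $\Pi(\bp)$ is the set of all sale-probability vectors achievable under some measurable selection of utility-maximizing demand sets $\demand(v_i,\bp,M_i)$ across the sequential arrivals. For each $\bp$ this set is nonempty (some selection exists), convex (mix tie-breaking rules), and compact; moreover, because each agent's demand correspondence is upper hemicontinuous in $\bp$ by continuity of utility, a Berge-style argument propagated through the arrival order gives upper hemicontinuity of $\Pi$ on $[0,P]^m$. By the genericness of $\dist$, at any fixed $\bp$ ties occur with probability zero, so $\Pi(\bp) = \{\pi(\bp)\}$ is in fact a singleton; working with the correspondence form is nonetheless convenient because $\bp \mapsto \pi(\bp)$ need not be continuous in general, and Kakutani sidesteps this.

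Then I would define the price-adjustment correspondence $\Phi : [0,P]^m \rightrightarrows [0,P]^m$ by
\[
\Phi(\bp) = \bigl\{\bp' \in [0,P]^m : p'_j = \min\{P, \max\{0, p_j + q_j - \tfrac{1}{2}\}\} \text{ for some } \bq \in \Pi(\bp)\bigr\}.
\]
Composition with the continuous clipping map preserves nonemptiness, convex-valuedness, compactness, and upper hemicontinuity. Kakutani's theorem then supplies a fixed point $\bp^* \in \Phi(\bp^*)$, witnessed by some $\bq^* \in \Pi(\bp^*)$, and unpacking the clipping coordinate-wise forces $q^*_j = 1/2$ when $0 < p^*_j < P$, $q^*_j \leq 1/2$ when $p^*_j = 0$, and $q^*_j \geq 1/2$ when $p^*_j = P$. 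The last case is excluded by the choice of $P$ in the first step, and genericness gives $\bq^* = \pi(\bp^*)$, so $\bp^*$ satisfies exactly the $(1/2)$-median condition $\pi_j(\bp^*) \leq 1/2$ everywhere and $\pi_j(\bp^*) = 1/2$ whenever $p^*_j > 0$.

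I expect the main obstacle to be verifying upper hemicontinuity of $\Pi$ in the multi-agent sequential mechanism: as prices vary, the available-item set $M_i$ seen by later agents may jump along measure-zero hypersurfaces in price space, and one must verify that any limiting sale-probability vector along a price sequence is still realizable by some tie-breaking rule at the limit price. This is classical for a single agent via Berge's maximum theorem, but threading the argument through the entire arrival sequence in a combinatorial setting, while keeping the convex-valued structure intact, is the delicate technical step.
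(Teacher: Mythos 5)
Your overall strategy mirrors the paper's: restrict prices to a compact box, build a correspondence whose fixed points are $(1/2)$-median prices, and invoke Kakutani. The specific fixed-point map differs cosmetically --- you use an additive adjustment $p_j \mapsto p_j + \pi_j(\bp) - 1/2$ clipped to $[0,P]$, while the paper uses the coordinate-wise ``best-response'' interval $\Psi_j(\bp)$ of prices $q_j$ making $\pi_j(q_j,\bp_{-j}) = 1/2$ --- but both constructions deliver the right conclusion once the underlying regularity is in hand. Your choice of $P$ (large enough that $\pi_j < 1/2$ whenever $p_j = P$) is essentially the paper's $\bar p =$ median of $\max_i v_i([m])$, and the case analysis of the clipping at a fixed point is correct.

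The genuine gap is that you leave unproven the one thing the paper works hardest to establish: the regularity of $\bp \mapsto \pi(\bp)$. You flag ``upper hemicontinuity of $\Pi$'' as ``the delicate technical step'' and sketch a Berge-style propagation, but never carry it out --- and this is precisely the content of the paper's lemma preceding the theorem, which shows $\pi$ is continuous (and, coordinate-wise, weakly decreasing) \emph{using genericness}, via the CDF-style function $F(\bp,t)$ and the atomlessness of utility differences. Moreover, your framing muddles what role genericness plays. You correctly note that under genericness $\Pi(\bp) = \{\pi(\bp)\}$ is a singleton, but then claim the correspondence form ``sidesteps'' possible discontinuity of $\pi$. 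It does not: a singleton-valued correspondence is upper hemicontinuous if and only if the underlying function is continuous, so the u.h.c.\ you need is exactly continuity of $\pi$, and nothing is sidestepped. (Relatedly, your remark that clipping preserves convex-valuedness is false for general convex sets --- componentwise clipping can destroy convexity --- but this is moot here since $\Pi(\bp)$ is a singleton; still, it signals that the correspondence apparatus was a distraction rather than a tool.) In short, once you strip away the unnecessary set-valued machinery, your argument is Brouwer on the map $\bp \mapsto \mathrm{clip}(\bp + \pi(\bp) - \tfrac12\mathbf 1)$ and hinges entirely on continuity of $\pi$, which you acknowledge but do not prove.
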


Fix a generic distribution $D$. In order to show the existence we will use the Kakutani fixed-point theorem on an appropriately defined function. Recall that $\pi_j(\bp)$ is the (well-defined) probability that item $j$ is allocated under prices $\bp$ and write $\pi(\bp)$ for the profile of allocation probabilities for the items.  We will first show that $\bp \mapsto \pi(\bp)$ is a continuous non-decreasing function.

\begin{lemma}
    The function $\pi(p_j, \bp_{-j})$ is continuous and, for each $j$, $\pi_j(p_j, \bp_{-j})$ is weakly decreasing in $p_j$.
\end{lemma}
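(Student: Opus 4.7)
I would prove continuity and monotonicity separately, with genericness playing a key role in both. For continuity the idea is that a strict demand bundle is robust to small perturbations of the prices, so under genericness the sequential mechanism's outcome is locally constant in $\bp$ for $D$-almost every realization $\bv$, and dominated convergence yields continuity of $\pi$. For monotonicity I would use a coupling argument comparing two parallel runs of the mechanism at $(p_j,\bp_{-j})$ and $(p_j',\bp_{-j})$ with $p_j<p_j'$ and show, per realization, the contrapositive: if $j$ is not sold at the lower price then $j$ is not sold at the higher price either.

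\textbf{Continuity.} Fix $\bp$ and a sequence $\bp^{(n)}\to\bp$. By genericness, outside a $D$-null set of $\bv$'s, every buyer's demand correspondence at $\bp$ for any set of available items $M\subseteq[m]$ is a singleton; this uses the definition of genericness together with a union bound over the finitely many possible $M$'s. A strict maximizer of a finite family of linear functions of price is preserved under small perturbations, so the unique demand of every buyer is locally constant at $\bp$. Chaining this inductively over the arriving buyers, the realized allocation, and in particular $\mathbbm{1}[j\in S(\bp,\bv)]$, is locally constant in $\bp$ for a.e.\ $\bv$. Dominated convergence then gives $\pi(\bp^{(n)})\to\pi(\bp)$.

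\textbf{Monotonicity.} Fix $\bp_{-j}$ and $p_j<p_j'$, and write $\bp_A=(p_j,\bp_{-j})$, $\bp_B=(p_j',\bp_{-j})$. Couple two runs of the mechanism under the same $\bv$: run A at $\bp_A$ and run B at $\bp_B$, with available sets $M_i^A,M_i^B$ and chosen bundles $B_i^A\in\demand(v_i,\bp_A,M_i^A)$, $B_i^B\in\demand(v_i,\bp_B,M_i^B)$. The central observation is a stability lemma: if $j\notin X$ and $X\in\demand(v,\bp_A,M)$, then $X\in\demand(v,\bp_B,M)$, because raising $p_j$ leaves the utility of any bundle not containing $j$ unchanged and weakly decreases the utility of any bundle containing $j$. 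I argue the contrapositive: suppose $j\notin S(\bp_A,\bv)$. Then $j$ is always available in run A, and genericness at $\bp_A$ forces the unique demand $B_i^A$ of each buyer in run A to exclude $j$, else they would purchase $j$. Inductively on $i$ I maintain the invariant $M_i^A=M_i^B$ and $B_{i'}^A=B_{i'}^B$ for $i'<i$: at step $i$ the stability lemma gives $B_i^A\in\demand(v_i,\bp_B,M_i^A)=\demand(v_i,\bp_B,M_i^B)$, and genericness at $\bp_B$ upgrades this to $B_i^B=B_i^A$, extending the invariant. Hence $j\notin S(\bp_B,\bv)$, and integrating over $\bv$ gives $\pi_j(\bp_B)\le\pi_j(\bp_A)$.

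\textbf{Main obstacle.} The subtle point is that the inductive step relies on demand uniqueness at \emph{both} $\bp_A$ and $\bp_B$ simultaneously along the realized trajectory of available sets; a union bound over the two $D$-null sets from genericness (together with the finite union over $M\subseteq[m]$) suffices, but one has to be careful that the ``good'' event of demand uniqueness at $\bp_B$ applies along the specific sequence of $M_i$'s that the coupling forces to coincide in the two runs. Everything else, including the stability lemma itself, is elementary bookkeeping.
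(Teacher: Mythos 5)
Your proposal is correct. For monotonicity you take essentially the same route as the paper: a per-realization coupling of the two runs, with the observation that raising $p_j$ leaves the utility of bundles without $j$ unchanged and only hurts bundles with $j$. The paper phrases this as ``take the smallest $i$ at which the two allocations differ,'' which is the same as your inductive invariant that the runs agree up to that index; both versions implicitly (the paper) or explicitly (you) lean on genericness to turn ``$B_i^A$ lies in the demand correspondence at $\bp_B$'' into ``$B_i^B = B_i^A$,'' since otherwise tie-breaking could send the two runs apart. Your ``main obstacle'' paragraph correctly identifies the bookkeeping needed: the null sets from genericness are taken over finitely many $(i,M)$ pairs and both price vectors, so the union is still null.

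For continuity your argument is genuinely different and arguably cleaner. The paper decomposes $\pi_j(\bp) = \sum_{(S_1,\ldots,S_n)} \prod_i f_{i,S_i,M_i}(\bp)$, where $f_{i,S,M'}(\bp)$ is the conditional probability that agent $i$ buys $S$ when $M'$ remains available, and proves continuity of each $f_{i,S,M'}$ via a CDF/$\varepsilon$-$\delta$ estimate that exploits atomlessness at the indifference boundary $t = 0$. You instead argue pointwise in $\bv$: genericness (with a union bound over the finitely many $(i,M)$ pairs) yields a $D$-null set outside of which every realized demand at $\bp$ is a \emph{strict} singleton, hence the whole sequential allocation map is locally constant in $\bp$ near $\bp$ for a.e.\ $\bv$, and dominated convergence delivers continuity of $\pi$. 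The paper's decomposition isolates each buyer's indifference event and makes the modulus of continuity explicit, which could matter for a quantitative version; your route avoids the conditional-probability bookkeeping entirely and makes the role of genericness (strict maximizers are stable) completely transparent.
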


\begin{proof}
Write $A_i(\bv, \bp)$ for the allocation to agent $i$ under prices $\bp$ when the valuations are $\bv$.

To see that it is weakly decreasing, take any $p'_j > p_j$ and any valuation $\bv$.  If the allocations $\mathbf{A}(\bv, \bp)$ and $\mathbf{A}(\bv, (p'_j, \bp_{-j}))$ are identical, then either both allocate $j$ or neither does.
Otherwise, take the smallest $i$ for which $A_i(\bv, \bp) \neq A_i(\bv, (p'_j, \bp_{-j}))$.  Then agent $i$ faces the same set of available items under $\bp$ and $(p'_j, \bp_{-j})$.  Since agent $i$ always receives a set from her demand correspondence, 
and agent $i$ faces the same set of available items and the same prices except for a higher price on item $j$, we must have $j \in A_i(\bv, \bp)$ and $j \not\in A_i(\bv, (p'_j, \bp_{-j}))$.  In particular, this means that if $\mathbf{A}(\bv, \bp)$ does not allocate item $j$, then neither does $\mathbf{A}(\bv, (p'_j, \bp_{-j}))$.  Integrating over all choices of $\bv$ then implies that the probability of allocating item $j$ is weakly lower under price $p'_j$ than under price $p_j$.

To show continuity of $\pi_j$, we proceed as follows. For any agent $i$ and sets of items $S$ and $M'$, let $f_{i, S, M'}(\bp)$ be the probability that under price vector $\bp$ buyer $i$ buys the set $S$ conditional on the set of remaining items being $M'$.\footnote{Note that if $D$ is not a product distribution, the conditioning on $M'$ impacts the conditional distribution over $v_i$.} It is now enough to show that for each $i$, $S$, and $M'$ such that $M'$ is the set of remaining items for $i$ with positive probability, the function $f_{i, S, M'}$ is continuous.  This is because we can write $\pi_j(\bp)$ as
\[
\pi_j(\bp) = \sum_{\substack{S_1, \ldots, S_n \subseteq [m]\\ j \in S_1 \cup \ldots \cup S_n}} \prod_{i=1}^n f_{i, S_i, [m] \setminus (S_1 \cup \ldots \cup S_{i-1})}(\bp).
\]

To show continuity of $f_{i, S, M'}$ for a fixed choice of $i$, $S$, and $M'$, let
\[
F(\bp, t) = \PRO{v_i(S) - \sum_{j \in S} p_j - \max_{S' \subseteq M', S \neq S'} \left( v_i(S') - \sum_{j \in S'} p_j \right) \leq t}.
\]
We observe that for all $\bp$, we have \[
F(\bp, t) - \lim_{\substack{t \to 0\\t<0}} F(\bp, t) = \PRO{v_i(S) - \sum_{j \in S} p_j - \max_{S' \subseteq M', S \neq S'} \left( v_i(S') - \sum_{j \in S'} p_j \right) = 0} = 0
\]
because the distribution is generic.\footnote{Here we used the assumption that $M'$ has non-zero probability of being the set of remaining items for agent $i$. If agent $i$ has non-zero probability of indifference between $S$ and $S'$ conditional on the set of remaining items being $M'$, then they have non-zero probability of this event unconditionally, contradicting genericness.} That is, the function $F(\bp, \cdot)$ is continuous at $0$. Furthermore, if $\lVert \bp - \bp' \rVert_\infty \leq \delta$, then $F(\bp, t - m \delta) \leq F(\bp', t) \leq F(\bp, t + m \delta)$. So, in order to show that $f_{i, S, M'}$ is continuous at $\bp$, let $\epsilon > 0$ and choose $\delta > 0$ such that $\lvert F(\bp, m x) - F(\bp, 0) \lvert < \epsilon$ for all $x \in [-\delta, \delta]$. Now, for any $\bp'$ with $\lVert \bp - \bp' \rVert_\infty \leq \delta$, we have $F(\bp', 0) \leq F(\bp, m \delta) < F(\bp, 0) + \epsilon$ and $F(\bp', 0) \geq F(\bp, - m \delta) > F(\bp, 0) - \epsilon$. The statement now follows because $f_{i, S, M'}(\bp) = F(\bp, 0)$ for all $\bp$.
\end{proof}

Using monotonicity and continuity of the allocation property, we can complete the proof of existence of $(1/2)$-median prices.

\begin{proof}[Proof of \Cref{theorem:medianpricesexistence}]
We will construct a mapping $\Psi$, which takes as input a vector of item prices and returns a vector of closed intervals of item prices.  Informally, we wish to define $\Psi(\mathbf{p})$ to be vector of intervals $\mathbf{Q}$ where each $Q_j$ is the closed interval of prices $[q^1_j, q^2_j]$ such that item $j$ sells with probability $1/2$ on any price vector $(\mathbf{p}_{-j}, q_j)$ with $q_j \in [q^1_j, q^2_j]$.  We'll then argue that $\Psi$ has a fixed point, and this will be our $1/2$-median prices.  

There are a few edge cases to consider when defining the mapping formally.  First, write $\bar{p}$ to be the median of $\max_i v_i(M)$.  Then it must be that $\pi_j(\bar{p}, \bp_{-j}) \leq 1/2$ for any $\bp$ because for item $j$ to sell at price $\bar{p}$ it is necessary that $v_i(M) \geq \bar{p}$ for an $i$.  We will focus attention on price vectors lying in $[0,\bar{p}]^m$.  Fix input $\mathbf{p}$, and define $\phi_j(q_j) := \pi_j(q_j, \mathbf{p}_{-j})$.  
We have two cases.  Either $\phi_j(0) < 1/2$, or $\phi_j(0) \geq 1/2$.  In the first case, we will define $\Psi_j(\mathbf{p}) = \{0\}$.  In the latter case, must exist some $q > 0$ at which $\phi_j(q) = 1/2$, by continuity and the fact that $\phi_j(\bar{p}) \leq 1/2$.  Note there may be more than one such $q$, so write $Q_j$ for the set of $q$ that satisfy this property, which must be a closed interval since $\phi_j$ is continuous and weakly decreasing.  Then we will define  $\Psi_j(\mathbf{p}) = Q_j$.  

This concludes the definition of $\Psi$.  We now note that $\Psi$ is a set-valued mapping from a compact space (i.e., $[0,\bar{p}]^m$) to itself with closed values.  Moreover, $\Psi$ is upper hemi-continuous.  To see why, consider some sequence of prices vectors $\bp^k$ that converge to $\bp$ as $k \to \infty$, with a corresponding sequence $\mathbf{q}^k$ converging to $\mathbf{q}$.  We then have that item $j$ sells with probability $1/2$ on price vector $(q^k, \bp^k_{-j})$ for all $k$.  Since $(q_j^k, \bp^k_{-j})$ converges to $(q_j, \bp_{-j})$, we conclude by continuity of $\pi$ that $j$ sells with probability $1/2$ on price vector $(q_j, \bp_{-j})$ for each $j$, and hence $\mathbf{q} \in \Psi(\bp)$.

Since $\Psi$ is upper hemi-continuous, by the Kakutani fixed-point theorem it has a fixed point $\mathbf{p}$.  That is, there exists $\mathbf{p}$ such that $p_j \in \Psi_j(\mathbf{p})$ for all $j$. This fixed point is a choice of $1/2$-median prices, by definition.
\end{proof}

%--------------------------------------------------------------------------------------------------
\subsection{Learning Median Prices}
Next, we can also find median prices with polynomially many samples. Note that in contrast to similar learnability results for prices defined on the mean (e.g., in \cite{FeldmanGL15}), our error is only multiplicative. That is, we do not need the distribution of values to be bounded.

\begin{theorem}
\label{theorem:medianpricesfromsamples}
    For any  generic distribution $D$, with probability $1 - \delta$, $O\left(\frac{1}{\epsilon^2}\left(m^2 + m \log n + \log(1/\delta)\right)\right)$ samples suffice to find $(\frac{1}{2}-\epsilon)$-median prices.
\end{theorem}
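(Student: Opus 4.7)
The plan is to reduce the statement to a uniform convergence problem over an appropriately defined class of indicator functions, then combine empirical concentration with the existence of $(1/2)$-median prices (\Cref{theorem:medianpricesexistence}). For each item $j \in [m]$ and price vector $\bp \in [0,\bar p]^m$ (where $\bar p$ is the bound used in the proof of \Cref{theorem:medianpricesexistence}), define the boolean-valued function $f_{j,\bp}(\bv) = \1{j \text{ is allocated by sequential posted-pricing under prices } \bp \text{ on valuations } \bv}$. Writing $\hat\pi_j(\bp)$ for the empirical sell probability of item $j$ over $N$ samples, the plan is to show that with $N = O\!\left(\frac{1}{\epsilon^2}(m^2 + m\log n + \log(1/\delta))\right)$ samples one has $\sup_{j,\bp}|\hat\pi_j(\bp) - \pi_j(\bp)| \leq \epsilon/2$ with probability at least $1-\delta$.

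The main technical step, and the principal obstacle, is bounding the (pseudo-)dimension of $\mathcal{F} = \{f_{j,\bp} : j \in [m], \bp \in [0,\bar p]^m\}$ by $O(m^2 + m\log n)$. The key observation is that for any fixed sample $\bv$, the allocation produced by the posted-price mechanism depends on $\bp$ only through a finite collection of comparisons of the form $v_i(B) - \sum_{j\in B}p_j \gtreqless v_i(B') - \sum_{j \in B'}p_j$ where $B,B'$ range over subsets of the available items for buyer $i$. Each such comparison defines a single hyperplane in $\bp$-space, and there are at most $n \cdot 4^m$ such hyperplanes per sample. Hence for $T$ samples there are at most $T \cdot n \cdot 4^m$ hyperplanes in total, and in each cell of the resulting arrangement in $\R^m$ the indicator $f_{j,\bp}(\bv^{(\ell)})$ is constant for every sample $\ell$ and every item $j$. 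Standard bounds on hyperplane arrangements (e.g., \cite{BalcanSV18,BalcanDDKSV21}) give at most $(T n 4^m)^{O(m)} \cdot m$ distinct labelings, so if $T$ is shattered then $2^T \leq (Tn4^m)^{O(m)} \cdot m$, giving $T = O(m^2 + m\log n)$ by standard manipulation.

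With the VC-dimension bound in hand, a textbook uniform convergence bound (Vapnik--Chervonenkis) yields the claimed sample complexity for uniform $(\epsilon/2)$-approximation of $\pi_j(\bp)$. To conclude, let $\bp^\star$ denote true $(1/2)$-median prices, which exist by \Cref{theorem:medianpricesexistence}. Search (over a discretization, or directly using the empirical distribution) for any price vector $\hat\bp \in [0,\bar p]^m$ that is $(1/2 - \epsilon/2)$-median with respect to the empirical distribution; such $\hat\bp$ exists because $\bp^\star$ itself satisfies this condition under the uniform-convergence event. Invoking the uniform-convergence event once more, any empirical $(1/2-\epsilon/2)$-median vector is a true $(\tfrac{1}{2}-\epsilon)$-median vector, completing the argument.

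The delicate point is the VC bound: one must confirm that the \emph{sequential} nature of the mechanism, where the items available to buyer $i$ depend on prior buyers' choices (hence on $\bp$), does not inflate the complexity. This is handled by noting that, for a fixed sample, the allocation is a deterministic function of the utility-comparison signs, and within any cell of the hyperplane arrangement all such signs are fixed, so every buyer's demand set, and therefore the set of remaining items faced by subsequent buyers, remains unchanged.
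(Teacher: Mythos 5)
Your proposal is correct and follows essentially the same route as the paper: reduce to uniform convergence of empirical sell probabilities, bound the VC dimension of the relevant indicator class via a hyperplane-arrangement argument (each of the $T$ samples contributes $n \cdot 4^m$ comparison hyperplanes, $2^T$ is at most the number of cells, hence $T = O(m^2 + m\log n)$), and then transfer an empirical near-median price vector — which exists because the true $(1/2)$-median prices satisfy the empirical condition under the uniform-convergence event — back to a near-median price vector for the true distribution. The only cosmetic difference is that the paper fixes an item $j$, bounds the VC dimension of $\{S_j^{\bp}\}_{\bp}$, and union-bounds over $j$ at the end, whereas you fold $j$ into the function class directly; these are asymptotically equivalent. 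Your closing remark about why the sequential/cascading structure of the mechanism does not blow up the hyperplane count (all demand-comparison signs, and hence the set of remaining items for each subsequent buyer, are constant within a cell) is exactly the right justification, and it is worth stating explicitly even though the paper leaves it implicit.
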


Suppose we are given $k$ vectors of $n$ valuation functions that are drawn from the unknown distribution. Construct the empirical distribution $\hat{D}$, in which  one of these $k$ vectors is drawn uniformly at random. (Note that there are correlations between the agents in this empirical distribution.)

For draws from the generic distribution $D$, regardless of the price vector $\bp$, almost surely no tie occurs. This is not true for the empirical distribution $\hat{D}$. In the following, we will fix any deterministic tie-breaking rule that does not depend on prices or the set of remaining items. That is, every agent has a permutation of the power set of the set of items and whenever there is a tie between $S$ and $S'$ agent $i$ will prefer one over the other.

Based on this tie-breaking, we let $\hat{\pi}_j(\bp)$ be analogously the probability that item $j$ gets sold under prices $\bp$ if a valuation profile $\bv$ is drawn from $\hat{D}$. Note that it is safe to assume that the same tie-breaking rule is applied to define the probability $\pi_j(\bp)$ that item $j$ gets sold under price vector $\bp$ if the valuation profile $\bv$ is drawn from $D$. 

The following lemma states a uniform convergence property that with high probability, the sale probability of each item given \emph{any} price vector is approximately equal between the empirical distribution and the true distribution.

\begin{lemma}[Uniform Convergence]
\label{lem:convergence}
Consider the empirical distribution over $k$ sampled valuation profiles, with $k \geq C \frac{1}{\epsilon^2}\left(m^2 + m \log n + \log(1/\delta)\right)$ for a constant $C$.
Then, with probability at least $1 - \delta$, we have $\lVert \hat{\pi}(\bp) - \pi(\bp)\rVert_\infty < \epsilon$, that is, there is no price vector on which the probability that an items sells differs by more than $\epsilon$ between the actual and the empirical distribution.
\end{lemma}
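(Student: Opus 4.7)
The plan is to prove \Cref{lem:convergence} via a VC--dimension based uniform convergence argument, applied to each item separately. Fix an item $j \in \items$ and define the binary function class
\[ \mathcal{H}_j = \{ h_{j,\bp} : \bp \in \Rp^m \}, \quad h_{j,\bp}(\bv) = \1{\text{item } j \text{ is sold when the mechanism runs at prices } \bp \text{ on profile } \bv}, \]
where the deterministic, price-independent tie-breaking rule from the lemma is built into $h_{j,\bp}$. Then $\pi_j(\bp) = \EXP[\bv \sim D]{h_{j,\bp}(\bv)}$ and $\hat\pi_j(\bp) = \EXP[\bv \sim \hat D]{h_{j,\bp}(\bv)}$, so the statement is exactly a uniform-convergence bound for the classes $\mathcal{H}_j$, combined with a union bound over $j \in \items$.

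The core of the argument is to bound the growth function of $\mathcal{H}_j$. The key geometric observation is that for any \emph{fixed} valuation profile $\bv$, the allocation produced by the sequential posted-price mechanism---and hence whether item $j$ is sold---is a piecewise constant function of $\bp$, and its boundaries are contained in the union of hyperplanes
\[ \mathcal{L}(\bv) = \left\{ v_i(S) - v_i(S') = \sum_{\ell \in S \setminus S'} p_\ell - \sum_{\ell \in S' \setminus S} p_\ell : i \in \buyers,\, S, S' \subseteq \items \right\}. \]
These hyperplanes enumerate every possible indifference any agent could have between any two bundles; regardless of which items happen to remain available along the sequential path (which itself depends on $\bp$), any change in any agent's selected demand set, and therefore any change in item $j$'s sale status, must cross one of them. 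Thus $|\mathcal{L}(\bv)| \leq n \cdot 2^{2m}$, and the normal of each hyperplane depends only on the pair $(S, S')$, not on $\bv$.

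Given $k$ samples $\bv^1, \ldots, \bv^k$, I would overlay the arrangements $\mathcal{L}(\bv^1), \ldots, \mathcal{L}(\bv^k)$ to obtain an arrangement of at most $H = k \cdot n \cdot 2^{2m}$ hyperplanes in $\Rp^m$. By the standard bound on hyperplane arrangements, this has at most $O(H^m)$ full-dimensional cells, and within each cell the joint label $(h_{j,\bp}(\bv^1), \ldots, h_{j,\bp}(\bv^k))$ is constant. Hence $\log \Pi_{\mathcal{H}_j}(k) = O(m^2 + m \log n + m \log k)$ and, by Sauer--Shelah, $\mathcal{H}_j$ has VC dimension $d = O(m^2 + m \log n)$. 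Plugging $d$ into the standard VC uniform convergence theorem and union-bounding over $j \in \items$ (the $\log m$ factor is absorbed into $\log(1/\delta)$ by adjusting constants) yields the stated sample complexity $k = \Omega\!\left( \frac{1}{\epsilon^2} (m^2 + m \log n + \log(1/\delta)) \right)$ for $\lVert \hat{\pi}(\bp) - \pi(\bp) \rVert_\infty < \epsilon$ uniformly in $\bp$ with probability $1 - \delta$.

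I expect the main obstacle to be the geometric bookkeeping in the second paragraph: because the mechanism branches on which items remain available, different branches of the sequential decision tree could \emph{a priori} introduce different boundary hyperplanes, and one must verify that the single, branch-oblivious family $\mathcal{L}(\bv)$ already contains all such boundaries. The cleanest way to see this is agent by agent: along any branch where agent $i$ faces available set $M$ and becomes indifferent between $S, S' \subseteq M$, that indifference is exactly a point on the $(i,S,S')$-hyperplane in $\mathcal{L}(\bv)$, so tracking branches separately only produces redundant cuts of the same arrangement.
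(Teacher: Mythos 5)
Your proposal is correct and follows essentially the same route as the paper's own proof: both reduce the problem to bounding the VC dimension of the per-item "item $j$ is sold at prices $\bp$" indicator classes, both identify the same family of roughly $n\cdot 2^{2m}$ agent/bundle-pair indifference hyperplanes per profile, and both count cells of the resulting arrangement in price space (the paper cites Buck's bound; you invoke the equivalent $O(H^m)$ cell bound) to obtain VC dimension $O(m^2 + m\log n)$, followed by a union bound over items. The only cosmetic difference is that you phrase the count as a growth-function bound and then invert it, whereas the paper directly bounds the largest shatterable set size; the arithmetic is identical, and your sanity check that the branch-oblivious hyperplane family already covers all decision boundaries along every branch of the sequential mechanism is exactly the (implicit) justification in the paper's Lemma~\ref{lem:vc}.
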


Before proving the lemma, we note that this implies \Cref{theorem:medianpricesfromsamples}.  We know that there exists a $(1/2)$-median price vector $\bp^*$ for $D$.  So with probability $1-\delta$, $\bp^*$ is $(1/2-\epsilon)$-median for $\hat{D}$, and hence in particular a $(1/2-\epsilon)$-median price profile exists.  Let $\bp$ be \emph{any} $(1/2-\epsilon)$-median price profile for $\hat{D}$.  Then, again conditioning on the event of probability $1-\delta$, $\bp$ will be $(1/2 - 2\epsilon)$-median for $D$.  Taking $k$ samples therefore suffices to obtain $(1/2 - 2\epsilon)$-median prices with probability $1-\delta$.

\begin{proof}[Proof of \Cref{lem:convergence}]
Consider the space of all possible valuation vectors $\bv$.  Every price vector $\bp$ defines a subset $S_j^{\bp}$ of this space, equal to the collection of valuation vectors for which item $j$ is sold under price vector $\bp$. By this definition $\pi_j(\bp) = \PRO[\bv \sim D]{\bv \in S_j^{\bp}}$ and $\hat{\pi}_j(\bp) = \PRO[\bv \sim \hat{D}]{\bv \in S_j^{\bp}}$.

Standard bounds on uniform convergence (e.g., in \cite{wainwright_2019} combine Theorem 4.10 and Equation (5.50)) imply that with $k$ samples, we have for any $\gamma>0$,
\begin{align*}
    \PRO{\sup_{\bp} \lvert \hat{\pi}_j(\bp) - \pi_j(\bp) \rvert \geq C \sqrt{\frac{\nu}{k}} + \gamma} \leq \exp(-k\gamma^2/2),
\end{align*}
where $C$ is a constant and $\nu$ is the VC dimension of the set system $\{ S_j^{\bp} \}$. Below, in \Cref{lem:vc}, we show that the VC dimension of the set system defined by all of these subsets is at most $O(m^2 + m \log n)$. 

So for any item $j$ and any $\epsilon, \delta' > 0$, this implies that $O\left(\frac{\nu + \ln(1/\delta')}{\epsilon^2}\right)$ samples are sufficient for $\PRO{\sup_{\bp} \lvert \hat{\pi}_j(\bp) - \pi_j(\bp) \rvert \geq \epsilon} \leq \delta'$.

By furthermore taking a union bound with $\delta' = \frac{\delta}{m}$ over all items $j$, $O\left(\frac{\nu + \ln(1/\delta)}{\epsilon^2}\right)$ samples are sufficient for $\PRO{\sup_{\bp} \lVert \hat{\pi}(\bp) - \pi(\bp) \rVert_\infty \geq \epsilon} \leq \delta$, meaning that with probability at least $1-\delta$, we have that $\lvert \hat{\pi}_j(\bp) - \pi_j(\bp) \rvert < \epsilon$ for all items $j$ and all price vectors $\bp$.
\end{proof}

So, it remains to bound the VC dimension of the set system $\{ S_j^{\bp} \}$ defined above. We use the technique by Balcan et al.~\cite{BalcanSV18} (see also \cite{BalcanDDKSV21}). 

\begin{lemma}
\label{lem:vc}
The VC dimension of the set system $\{ S_j^{\bp} \}$ is at most $O(m^2 + m \log n)$.
\end{lemma}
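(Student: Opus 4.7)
The plan is to bound the VC dimension by a dual/parameter-space argument in the style of Balcan--Sandholm--Vitercik. Rather than viewing the hypothesis $\bp$ as fixed and letting $\bv$ vary, I will fix a finite set of valuation profiles and study how the labeling assigned by each $S_j^{\bp}$ changes as a function of $\bp \in \R_{+}^m$. The goal is to show that the parameter space is partitioned by few hyperplanes into regions on which every labeling is constant, which then yields the VC bound via the standard Sauer--Shelah-style counting.

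First I would observe the key structural fact: for any fixed valuation profile $\bv$, the mapping $\bp \mapsto \mathbbm{1}[\bv \in S_j^{\bp}]$ is piecewise constant. Indeed, when the mechanism processes agent $i$ who faces available set $M_i$, agent $i$ picks the bundle $B \subseteq M_i$ maximizing $v_i(B) - \sum_{j \in B} p_j$, and the boundary between preferring $B$ over $B'$ is the hyperplane $\sum_{j \in B} p_j - \sum_{j \in B'} p_j = v_i(B) - v_i(B')$ in $\bp$-space. With our deterministic (price- and availability-independent) tie-breaking rule, the identity of agent $i$'s chosen bundle is constant on each open region cut out by the $\binom{2^m}{2} \le 4^m$ hyperplanes associated with agent $i$ and $\bv$. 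Since subsequent agents' available sets only depend on previous agents' chosen bundles, the entire allocation (and thus whether item $j$ sells) is constant on each cell of the arrangement of at most $n \cdot 4^m$ hyperplanes determined by $\bv$.

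Now suppose the set system shatters $N$ valuation profiles $\bv^{(1)}, \ldots, \bv^{(N)}$. Pooling together the hyperplanes associated with all $N$ profiles gives at most $H := N \cdot n \cdot 4^m$ hyperplanes in $\R^m$, and a standard bound on arrangements says these induce at most $O(H^m)$ full-dimensional cells. On each cell, the label $\mathbbm{1}[\bv^{(k)} \in S_j^{\bp}]$ is constant for every $k$, so the number of distinct subsets of $\{\bv^{(1)}, \ldots, \bv^{(N)}\}$ realized as $\bp$ varies is at most $O(H^m) = O((N \cdot n \cdot 4^m)^m)$. Shattering requires $2^N \le O((N \cdot n \cdot 4^m)^m)$, i.e.\ $N \le m \log N + m \log n + 2m^2 + O(1)$, which solves to $N = O(m^2 + m \log n)$ as claimed.

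The main obstacle is verifying cleanly that the full sequential allocation (not just a single agent's demand) is indeed piecewise constant in $\bp$ over the right arrangement: one has to argue inductively that once agent $i$'s bundle is constant on a cell, the residual set $M_{i+1}$ is also constant, so that agent $i{+}1$'s piecewise-constant structure is captured by her own hyperplanes and no new boundaries are introduced. This is where the assumption of a deterministic, price- and availability-independent tie-breaking rule is essential, since it ensures that the argmax on each open cell extends unambiguously across the lower-dimensional boundaries used to define the arrangement. Once this is in place, the counting step above completes the proof.
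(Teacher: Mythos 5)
Your proposal is correct and is essentially the paper's own argument: the paper likewise works in the dual/parameter space of price vectors, associates to each shattered profile $\bv^{(t)}$ the $n\cdot(2^m)^2$ hyperplanes $\{v_i(S)-v_i(S')-\sum_{j'\in S\setminus S'}p_{j'}+\sum_{j'\in S'\setminus S}p_{j'}=0\}$, invokes Buck's bound of $O\bigl((N n 4^m)^m\bigr)$ regions, and solves $2^N\le O\bigl((N n 4^m)^m\bigr)$ for $N=O(m^2+m\log n)$. The only presentational difference is that the paper phrases the key step in the contrapositive (if the allocation on $\bv$ differs between $\bp$ and $\bp'$, some agent's preferred bundle flips, so $\bp,\bp'$ are separated by one of the listed hyperplanes), which sidesteps the lower-dimensional-boundary issue you flag more cleanly than a ``constant on each open cell'' induction.
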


\begin{proof}
Let $\bv^{(1)}, \ldots, \bv^{(N)}$ be $N$ valuation profiles that are shattered. That is, for every set $T \subseteq [N]$, there is a price vector $\bp^T$ such that item $j$ gets sold on valuation profile $v^{(t)}$ if and only if $t \in T$. We will give an upper bound on the number of price vectors that lead to different outcomes on $\bv{v}^{(1)}, \ldots, \bv{v}^{(N)}$.

Note that if on a valuation profile $\bv$, item $j$ gets sold under price vector $\bp$ but doesn't get sold under price vector $\bp'$, there have to be $i$, $S$, and $S'$ such that agent $i$ prefers $S$ under prices $\bp$ but $S'$ under prices $\bp'$ because otherwise the outcomes would be identical. That is, $v_i(S) - \sum_{j' \in S} p_{j'} \geq v_i(S') - \sum_{j' \in S'} p_{j'}$ but $v_i(S) - \sum_{j' \in S} p'_{j'} \leq v_i(S') - \sum_{j' \in S'} p'_{j'}$, where one of the two inequalities is strict because the tie-breaking is independent of the prices. So, $v_i(S) - v_i(S') - \sum_{j' \in S \setminus S'} p_{j'} + \sum_{j' \in S' \setminus S} p_{j'} \geq 0$ but $v_i(S) - v_i(S') - \sum_{j' \in S \setminus S'} p'_{j'} + \sum_{j' \in S' \setminus S} p'_{j'} \leq 0$, where again one of the two inequalities is strict.

In other words, if we consider the space $\Rp^m$ of all price vectors, the price vectors $\bp$ and $\bp'$ lie on different sides of the hyperplane defined by $v_i(S) - v_i(S') - \sum_{j' \in S \setminus S'} p^\ast_{j'} + \sum_{j' \in S' \setminus S} p^\ast_{j'} = 0$.

The valuation profiles $\bv^{(1)}, \ldots, \bv^{(N)}$ define $N \cdot n \cdot (2^m)^2$ such hyperplanes. Due to a result by Buck \cite{Buck43}, this means there are no more than $m (N \cdot n \cdot (2^m)^2)^m$ different regions. If $\bv^{(1)}, \ldots, \bv^{(N)}$ is shattered, each of the $2^N$ price vectors $\bp^T$ for $T \subseteq [N]$ has to lie in a different region.

We conclude that $2^N \leq m (N \cdot n \cdot (2^m)^2)^m$.  Taking logs on both sides yields $N \leq \log m + m\log N + m\log n + 2m^2$, which implies $N = O(m^2 + m\log n)$.%\tknote{Double-check this.}
\end{proof}

Generally, the polynomial dependence in \Cref{lem:vc} is unavoidable.  Indeed, for the special case of unit-demand valuations, even with $n=1$ the VC dimension of the proposed set system is $\Omega(m)$.  Given $m$ items, and any $t$ with $1 \leq t < m$, let $v_1^{(t)}$ be unit-demand function that has value $1$ for item $1$, value $2$ for item $t+1$ and value $0$ for all other items. Consider price vectors with $p_1 = 0$. Now, depending on the price of item $t+1$, $v_1^{(t)}$ will prefer item $1$ (if $p_{t+1} < 1$) or item $t+1$ (if $p_{t+1} > 1$).  Thus, for every subset of the valuations $v_1^{(1)}, \ldots, v_1^{(m-1)}$, there is a price vector such that exactly this subset buys item $1$.

%--------------------------------------------------------------------------------------------------

\subsection{Computational Challenges}

The uniform convergence result from the previous section also provides us a very powerful tool to compute median prices, namely by considering the empirical distribution of polynomially many samples---which means that one has to argue only about polynomially many scenarios.

To demonstrate this strength, we now describe a Tatonnement style polynomial-time algorithm for computing median prices for unit demand valuations. We leave it as a challenging but interesting open problem to extend this result to XOS valuations.

\begin{theorem}
    For any generic distribution $D$ over unit demand valuations and any $\epsilon > 0$, there is a fully polynomial-time algorithm that, given polynomially many samples from $D$, generates a $(1/2-\epsilon)$-median distribution over item prices with probability $1-\delta$.  The runtime is polynomial in $n$, $m$, $1/\epsilon$, and $\log(1/\delta)$.
\end{theorem}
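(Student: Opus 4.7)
The plan is to combine the uniform convergence result of \Cref{lem:convergence} with a Tatonnement-style iterative adjustment, exploiting the gross substitutes structure of unit-demand valuations. First, draw $k = \mathrm{poly}(m, n, 1/\epsilon, \log(1/\delta))$ samples to form the empirical distribution $\hat{D}$; by \Cref{lem:convergence}, with probability $1-\delta$ the empirical sale probabilities $\hat{\pi}_j(\bp)$ are within $\epsilon/2$ of the true $\pi_j(\bp)$ uniformly over all price vectors. Hence it suffices to produce $(1/2 - \epsilon/2)$-median prices for $\hat{D}$. Evaluating $\hat{\pi}_j(\bp)$ on $\hat{D}$ just requires simulating the sequential posted-price mechanism on each of the $k$ sampled profiles, which is trivially polynomial for unit-demand agents since each agent just picks her favorite available item.

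Next, I would run Tatonnement on a polynomial-size discretized grid of prices in $\{0, \eta, 2\eta, \ldots, V\}^m$, where $V$ is the maximum sampled valuation and $\eta$ is small enough that a single grid step flips at most one agent's preference comparison per sample. Starting from $\bp^{(0)} = \mathbf{0}$, at each iteration raise $p_j$ by one grid step for every item with $\hat{\pi}_j(\bp) > 1/2 + \epsilon/4$, and lower $p_j$ by one grid step for every item with $\hat{\pi}_j(\bp) < 1/2 - \epsilon/4$ and $p_j > 0$. The crucial structural fact (matching the first monotonicity lemma in the proof of \Cref{theorem:medianpricesexistence}) is that $\hat{\pi}_j(\bp)$ is weakly decreasing in $p_j$; for unit-demand valuations one additionally has the gross substitutes property, so raising $p_j$ weakly increases $\hat{\pi}_{j'}$ for $j'\neq j$.

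To argue polynomial convergence, I would use a potential function such as $\Phi(\bp) = \sum_j \bigl|\hat{\pi}_j(\bp) - 1/2\bigr|$ (with a small modification for items priced at $0$, where under-selling is permitted) and show that each Tatonnement update either triggers the stopping criterion or decreases $\Phi$ by at least $\Omega(1/k)$. Since $\Phi \leq m$ initially, this bounds the number of iterations by $\mathrm{poly}(k, m, 1/\epsilon)$, each iteration taking polynomial time for evaluation. The main obstacle is handling the cascading effects specific to the sequential posted-price mechanism: modifying $p_j$ can change which later agents have $j$ (or its competitors) available, potentially moving many sale probabilities in tandem. To control this, I would adapt the classical gross substitutes analyses of Kelso--Crawford and Gul--Stacchetti to the sequential setting, arguing that raising $p_j$ only redistributes demand from $j$ onto other items and never creates a net jump of an item from under-selling to over-selling; combined with the chosen grid resolution, this yields the needed per-iteration progress. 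A secondary subtlety is tie-breaking on the empirical distribution, which must be handled by a fixed consistent rule (e.g., lexicographic) so that $\hat{\pi}_j(\bp)$ remains a well-defined, piecewise-constant, monotone function of $\bp$ on the discretized grid.
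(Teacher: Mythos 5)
Your high-level plan matches the paper's: replace $D$ by the empirical distribution via \Cref{lem:convergence}, then run a price-adjustment procedure tracked by a potential function. However, your proposed update rule and potential are both insufficient, and the missing pieces are exactly the two ideas the paper's proof hinges on.

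First, your potential $\Phi(\bp)=\sum_j|\hat{\pi}_j(\bp)-1/2|$ is not strictly convex: a unit $1/k$ of demand migrating from one over-selling item to another over-selling item (or between two under-selling items) leaves $\Phi$ unchanged, so you cannot argue a per-step decrease. The paper instead uses a piecewise potential that grows \emph{quadratically} near $1/2$ (and linearly far away); this strict convexity is what turns a probability shift across a gap of $\geq 2/k$ into a potential drop of $\geq 2$.

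Second, your update rule (raise each over-selling item by one grid step, lower each under-selling item) does not control where demand flows. The paper's key algorithmic idea is to pick a threshold $t/k \in (1/2,1/2+\epsilon)$ that is not attained by any $\hat{\pi}_j$ (possible since there are $m<k/\epsilon$ items), let $S=\{j : \hat{\pi}_j(\bp)>t/k\}$, and raise all prices in $S$ \emph{uniformly} and \emph{continuously} until the first demand switch. Because prices in $S$ move in lockstep, the only possible demand switch is from an item in $S$ to an item outside $S$ (or to the empty set), which guarantees exactly one $1/k$-unit of probability crosses the threshold gap. Your independent per-item grid steps give no such guarantee: demand can cascade in arbitrary directions among items within $S$ or within $[m]\setminus S$. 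Note also that the paper never lowers prices; starting from $\mathbf 0$ and raising only is fine because the median-price definition allows zero-priced items to under-sell.

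Third, the ``gross substitutes'' monotonicity you invoke (raising $p_j$ weakly increases $\hat{\pi}_{j'}$ for $j'\neq j$) holds for a single agent's demand but is \emph{not} established for the sequential mechanism's marginal sale probabilities, where downstream availability effects can create non-monotone cascades. The paper proves only that $\pi_j$ is weakly decreasing in $p_j$, and its algorithm is designed precisely so that it never needs the stronger cross-item monotonicity. Your iteration bound therefore rests on an unproven structural claim, and the ``adapt Kelso--Crawford/Gul--Stacchetti'' step would itself require a nontrivial argument.
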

\begin{proof}
    Fix generic distribution profile $D$.  Our algorithm will begin by drawing $k$ sampled valuation profiles $(\bv^1, \dotsc, \bv^k)$ from $D$; let $\hat{D}$ denote the empirical distribution.  By \Cref{lem:convergence}, the probability of sale between $D$ and $\hat{D}$ agree within $\epsilon$ on all price vectors, with probability $1-\delta$.  We will condition on this event for the remainder of this proof. 
    
    Write $\hat{\pi}(\bp)$ for the profile of probabilities that each item sells under prices $\bp$ given the empirical distribution $\hat{D}$, assuming an arbitrary tie-breaking rule.  Note that $\hat{\pi}_j(\bp)$ is always a multiple of $1/k$, and can equivalently be thought of as counting how many of the $k$ samples result in item $j$ being purchased.  

    Given any price vector $\bp$, we will define a potential value $f(\bp)$ that depends on the probabilities of allocation.  This will be the sum of potential values $f_j(\bp)$ for each item $j$.  Assume for convenience that $k$ is even.  If $\hat{\pi}_j(\mathbf{p}) \leq 1/2$ then $f_j(\bp) = 0$.  If $1/2 < \hat{\pi}_j(\mathbf{p}) \leq 1/2 + \epsilon$, say $\hat{\pi}_j(\mathbf{p}) = 1/2 + t/k$ where $t \leq \epsilon k$, then $f_j(\bp) = t^2$.  If $\hat{\pi}_j(\mathbf{p}) > 1/2 + \epsilon$, say $\hat{\pi}_j(\bp) = 1/2 + t/k$ where $t > \epsilon k$, then $f_j(\bp) = (\epsilon k)t$.  Note then that $0 \leq f(\bp) \leq \epsilon k^2 m$ for all prices $\bp$.

    We are now ready to describe our algorithm.
    We will initialize all item prices $p_j$ to $0$.  
    
    If $\hat{\pi}_j(\mathbf{p}) \leq 1/2 + \epsilon$ for all $j$ then we are done.  Otherwise, since $k/\epsilon > m$, there must exist some multiple of $1/k$, say $t/k$, such that $t/k \in (1/2, 1/2+\epsilon)$ and $\hat{\pi}_j(\bp) \neq t/k$ for all $j$.  We will take $S$ to be the set of all $j$ such that $\hat{\pi}_j(\bp) > t/k$.  Note that $S$ is non-empty, since in particular any item $j$ with $\hat{\pi}_j(\bp) > 1/2 + \epsilon$ must be in $S$.

    We will then raise the price of each item in $S$, uniformly, until some agent $i$ in some sample $k$ switches from demanding an item in $S$ to demanding an item in $[m]-S$ (or demanding no item).  Note that since the valuations are unit-demand, these are the only changes to the demand of an agent that can occur.

    The amount we need to raise the price for this to occur is the minimum, over all $i$, $\ell$, and $j \in S$ and $j' \not\in S$, of $(v_i^\ell(j) - p_j) - (v_i^\ell(j') - p_{j'})$. (Where here we can allow $j' = 0$ to represent taking the empty set, with value and price both equal to $0$.) We can therefore find this uniform price increment in polynomial time by enumerate each of these polynomially-many options.

    Under this update, the allocation probability of some item $j$ with $\hat{\pi}_j(\mathbf{p}) > t/k$ reduces by $1/k$, and the allocation probability of some item $j'$ with $\hat{\pi}_j(\mathbf{p}) < t/k$ increases by $1/k$.  Thus, if we write $\bp'$ for the new price vector, we conclude that $f(\bp) - f(\bp') \geq (t+1)^2 + (t-1)^2 - 2t^2 = 2$.

    We then repeat, choosing a new set $S$ and iterating, terminating if $\hat{\pi}_j(\bp) \geq 1/2 + \epsilon$ for all $j$.  Note that if we do not terminate, then in particular we have $\hat{\pi}_j(\bp) > 1/2 + \epsilon$ for some $j$, so $f(\bp) \geq f_j(\bp) > \epsilon k^2$.  On the other hand, since $f(\bp)$ is at most $\epsilon k^2 m$ initially and reduces by at least $2$ on each iteration, we conclude that the procedure must terminate after at most $\epsilon k^2 (m-1)/2$ iterations.  This algorithm therefore finds the desired price vector in polynomially many iterations, as claimed.
\end{proof}

\subsection{Beyond Generic Distributions} \label{sec:non-generic}

To this point in our discussion of median prices we have focused on the case of generic distributions. We now show how to remove this assumption.  We will show how to define median prices to non-generic distributions, then show that our welfare bound in Theorem~\ref{thm:median-ppmech-approx-generic}, existence result in Theorem~\ref{theorem:medianpricesexistence}, and learning result in Theorem~\ref{theorem:medianpricesfromsamples} all extend to general non-generic settings.

For non-generic distributions, an agent $i$ may be indifferent between multiple sets with positive probability.  The probability with which an item sells therefore depends on how agents choose between sets in their demand correspondence.
It will thus be convenient to define, for each agent, a choice rule that describes how indifferences are to be broken.

A \emph{choice rule} $A_i$ for agent $i$ is a mapping from a demand correspondence $\demand \subseteq 2^{[m]}$ (which recall is a set of sets of items) to a distribution over elements of $\demand$.  We think of $A_i(\demand)$ as the (possibly randomized) set chosen by agent $i$ in the face of indifference.  Given a profile of choice rules $\mathbf{A} = (A_1, \dotsc, A_n)$, a valuation profile $\bv$, and prices $\bp$, we will write $\mathbf{A}(\bv,\bp)$ for the (random) allocation that occurs when each agent $i$ sequentially selects from their demand correspondence (given $v_i$, $\bp$, and the set of remaining items $M_i$) according to $A_i$.  We will also write $A_i(\bv,\bp)$ for the (random) allocation to agent $i$ under allocation profile $\mathbf{A}(\bv,\bp)$.

\begin{definition}[Median Prices for General Distributions]
Given $\alpha \in [0, 1/2]$ and any profile $D$ of distributions of bidder valuations, we say that prices $\bp$ are $\alpha$-median prices if there exists a choice rule profile $\mathbf{A}$ such that, when agents acquire bundles according to $\mathbf{A}$, the probability that item $j$ is allocated lies in $[\alpha, 1 - \alpha]$ for every $j$.
\end{definition}

Note that prices $\bp$ are median prices if there exists an appropriate corresponding choice rule $\mathbf{A}$.  We say that such a choice rule is a \emph{witness} for prices $\bp$.

\begin{example}
Suppose there is a single item $\{1\}$ and two agents $\{1,2\}$.  Agent values are deterministic (i.e., $D$ is a point mass): agent $1$ has value $3$ and agent $2$ has value $4$.  Then the price $p_1 = 4$ is a median price, witnessed by any choice rule such that $A_1$ is arbitrary and $A_2(\{\{1\}, \emptyset\})$ randomizes uniformly between $\{1\}$ and $\emptyset$.
\end{example}

\medskip\noindent\textbf{Welfare of Median Prices for General Distributions.}
We are now ready to extend Theorem~\ref{thm:median-ppmech-approx-generic}, our approximation result for median prices, to non-generic settings. Indeed, under our definition of median prices for general distributions, the proof of Theorem~\ref{thm:median-ppmech-approx-generic} proceeds unchanged.  We require only that in our sequential posted-price mechanism, indifferences are broken according to a choice rule $\mathbf{A}$ that witnesses the median prices.  The reason is that genericness is never used explicitly in the proof; we use only that each item is purchased with probability lying in $[\alpha, 1-\alpha]$ under the posted-price mechanism, which is guaranteed if the witnessing choice rule is followed.  We obtain the following result.

\begin{theorem}
\label{thm:median-ppmech-approx-nongeneric}
Let $D$ be a product distribution over XOS valuation profiles, and let $\bp$ be $\alpha$-median prices witnessed by choice rule $\mathbf{A}$.  Then the expected social welfare of a sequential posted-price mechanism using prices $\bp$ and choice rule $\mathbf{A}$ to resolve indifferences is at least $\alpha$ times the expected optimal social welfare. 
\end{theorem}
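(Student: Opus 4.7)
My plan is to rerun the proof of \Cref{thm:median-ppmech-approx-generic} essentially verbatim, taking care to interpret all probabilities and expectations as joint over $\bv \sim D$ and the internal randomness of the choice rule profile $\mathbf{A}$. The only place genericness was used in the earlier proof was implicit: it guaranteed that the per-item sale probability $\pi_j(\bp)$ is well-defined and lies in $[\alpha, 1-\alpha]$ without having to specify how indifferences are broken. In the non-generic setting, the definition of $\alpha$-median prices \emph{with witness} $\mathbf{A}$ supplies exactly this property: under $\mathbf{A}$, each item $j$ is allocated with probability in $[\alpha, 1-\alpha]$, where the probability is taken over both $\bv$ and the randomness in $\mathbf{A}$.

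\textbf{Step 1 (revenue).} As before, the expected revenue is $\sum_j p_j \cdot \Pr[j \in S(\bp,\bv)]$ where $S(\bp,\bv)$ denotes the set of items sold under prices $\bp$, valuations $\bv$, and choice rule $\mathbf{A}$. Since $\mathbf{A}$ is a witness, $\Pr[j \in S(\bp,\bv)] \geq \alpha$ for every $j$ with $p_j > 0$, and hence the expected revenue is at least $\alpha \sum_j p_j$.

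\textbf{Step 2 (buyer surplus).} Fix agent $i$, let $u_i$ be the (random) utility obtained by $i$, and let $\bv'_{-i}$ be a hallucinated independent draw of the other agents' valuations. The critical observation is that the choice rule $A_i$ always returns a set from agent $i$'s demand correspondence, so $u_i$ equals the utility of \emph{any} utility-maximizing bundle over the remaining items $M_i(\bp,\bv)$ (with ties broken by $A_i$). In particular, $u_i$ is at least the utility of purchasing any feasible subset $T \subseteq M_i(\bp,\bv)$, including the subset $T_i(\bp,\bv,\bv'_{-i}) = A^*_i(v_i,\bv'_{-i}) \cap M_i(\bp,\bv) \cap \{j : w_{ij}(v_i,\bv'_{-i}) \geq p_j\}$, where $w_{ij}$ are XOS supporting prices for $A^*_i(v_i,\bv'_{-i})$. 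Now I follow the same chain of inequalities as in the proof of \Cref{thm:median-ppmech-approx-generic}, using: (i) XOS supporting prices, $v_i(T_i) \geq \sum_{j \in T_i} w_{ij}(v_i,\bv'_{-i})$; (ii) the product structure of $D$, to swap the roles of $\bv_{-i}$ and $\bv'_{-i}$ inside the expectation; and (iii) the bound $\Pr[j \in M_i(\bp,\bv)] \geq \Pr[j \notin S(\bp,\bv)] \geq \alpha$, which holds because ``$j$ is available for agent $i$'' is implied by ``$j$ is not sold at all,'' and the latter has probability at least $\alpha$ by the median property. Putting these together yields $\EXP{\sum_i u_i} \geq \alpha\bigl(OPT - \sum_j p_j\bigr)$.

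\textbf{Step 3 (combine).} Adding the revenue bound from Step 1 and the surplus bound from Step 2 gives expected welfare at least $\alpha \sum_j p_j + \alpha(OPT - \sum_j p_j) = \alpha \cdot OPT$.

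The only real subtlety is the bookkeeping around randomness, namely ensuring that the guarantees promised by the definition of median prices (which concern the joint distribution over $\bv$ and $\mathbf{A}$) line up with the specific probabilities appearing in the surplus computation (in particular, $\Pr[j \in M_i(\bp,\bv)]$, which also depends on the internal randomness of $A_1,\ldots,A_{i-1}$). Since the witness property fixes $\mathbf{A}$ globally and item-availability events are monotone in the sold-item event, this is a direct bookkeeping check rather than a new technical step.
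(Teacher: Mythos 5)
Your proposal is correct and takes essentially the same approach as the paper: the paper itself observes that the proof of Theorem~\ref{thm:median-ppmech-approx-generic} "proceeds unchanged" because genericness is never used except to guarantee that each item sells with probability in $[\alpha,1-\alpha]$, which is precisely what the witnessing choice rule $\mathbf{A}$ supplies. Your more detailed walk-through (tracking the joint randomness over $\bv$ and $\mathbf{A}$, and the chain $\Pr[j \in M_i] \geq \Pr[j \notin S] \geq \alpha$) fills in bookkeeping the paper leaves implicit, but is the same argument.
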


\medskip\noindent\textbf{Existence of Median Prices for General Distributions.} 
Next, we will extend Theorem~\ref{theorem:medianpricesexistence} and show that $(1/2)$-median prices exist even for non-generic distributions \edit{over (possibly non-XOS) valuations}.  The idea behind the proof is to consider perturbations of the non-generic distribution, yielding generic distributions for which median prices exist.  We then take a limit of vanishing perturbations to find median prices for the original distribution.  To find the witnessing choice rule, we take an appropriate limit of allocations over the sequence of perturbations.  We obtain the following result.

\begin{theorem}
\label{theorem:medianpricesexistence-nongeneric}
For any \edit{(not necessarily generic)} distribution $D$ over valuation profiles there exist $(1/2)$-median prices.
\end{theorem}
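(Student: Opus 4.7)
The plan is to obtain the non-generic case as a limit of the generic case. For each $\epsilon>0$, construct a generic perturbation $D_\epsilon$ of $D$, for instance by drawing each agent's valuation from $D$ and then independently adding an additive noise with i.i.d.\ continuous, bounded coordinates of scale $\epsilon$ (the formal construction from \Cref{sec:appendix-generic} suffices). This ensures $D_\epsilon$ is generic and converges weakly to $D$ as $\epsilon\to 0$. By \Cref{theorem:medianpricesexistence}, for every $\epsilon$ there exist $(1/2)$-median prices $\bp^\epsilon$ for $D_\epsilon$, and by the same bounding argument used there, these prices can be confined to $[0,\bar p]^m$ for a uniform constant $\bar p$ depending only on a bound on $\max_i v_i([m])$ under $D$ plus the (vanishing) perturbation size. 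Compactness then yields a subsequence $\epsilon_k\to 0$ with $\bp^{\epsilon_k}\to\bp^\ast$.

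To produce the witnessing choice rule, track not only prices but also the joint law of (valuation profile, allocation profile) induced by running the sequential posted-price mechanism on $D_{\epsilon_k}$ with prices $\bp^{\epsilon_k}$. Call this measure $\mu_k$ on $V\times(2^{[m]})^n$, where $V$ is the valuation space. Because $D_{\epsilon_k}$ is generic, the allocation coordinate is an a.s.\ deterministic function of the valuation coordinate under $\mu_k$. The allocation space is finite, so tightness of $\{\mu_k\}$ reduces to tightness of the valuation marginals, which follows from weak convergence $D_{\epsilon_k}\to D$. Pass to a further subsequence so that $\mu_k\Rightarrow\mu^\ast$, a joint law on $V\times(2^{[m]})^n$ whose valuation marginal is $D$. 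Disintegrating $\mu^\ast$ gives, for each $i$ and each demand correspondence $\demand$ arising with positive probability, a distribution over elements of $\demand$; averaging this distribution over all valuations consistent with $\demand$ defines the choice rule $A_i(\demand)$, which by construction depends only on $\demand$.

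Two verifications remain, and these will be the main obstacle. First, the limit allocation must live in the correct demand correspondence for the limit prices: if bundle $B$ strictly loses to some $B'$ under $(v_i,\bp^\ast,M_i)$, then for $k$ large enough $B$ also strictly loses under $(v_i^{(k)},\bp^{\epsilon_k},M_i)$ (where $v_i^{(k)}$ is the perturbed valuation), so $B$ cannot be selected in $\mu_k$ on a set of positive limiting mass. A continuous-mapping / portmanteau-style argument, using that the utility gap is a continuous function of $(\bv,\bp)$ on the event that a strict preference exists, will establish that the support of $\mu^\ast$ is consistent with $A_i(\demand)\in\Delta(\demand)$. Second, each item $j$ is sold with probability exactly $1/2$ under $\mu^\ast$: the event ``$j$ allocated'' depends only on the (discrete) allocation coordinate, so weak convergence of $\mu_k$ passes this probability through the limit, giving $\P{j\in S(\bp^\ast,\bv)}=1/2$.

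This yields $(1/2)$-median prices $\bp^\ast$ for $D$ with witnessing choice rule $\mathbf{A}=(A_1,\dotsc,A_n)$, which is the conclusion of \Cref{theorem:medianpricesexistence-nongeneric}. The delicate point to get right is the disintegration step together with the argument that the resulting rule respects the demand correspondence in the limit; the choice of perturbation (continuous, full-support noise) is what makes the intermediate distributions generic and the limit argument applicable.
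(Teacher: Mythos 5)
The proposal follows the paper's high-level strategy (perturb $D$ to generic $D^{(\epsilon)}$, extract median prices $\bp^{(\epsilon)}$, pass to a limit $\bp^*$) but differs in how the witnessing choice rule is produced. The paper explicitly constructs a concrete family $A_i^{(\epsilon)}(\demand) \in \arg\max_{S \in \demand}\bigl\{\sum_{j \in S}(w_j + p_j^* - p_j^{(\epsilon)})\bigr\}$ with $w_j \sim U[0,\epsilon]$ — which is manifestly a valid choice rule depending only on $\demand$ — takes a TV-convergent subsequence in the compact space of choice rules, and then verifies by a coupling (matching the $w_j$'s in the choice rule and in the perturbation) that the limit rule is a witness. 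You instead track the joint law $\mu_k$ of (valuations, allocations), pass to a weak limit $\mu^*$, and disintegrate.

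The gap is in the disintegration and ``averaging'' step. The disintegration of $\mu^*$ gives a conditional law of $\alloc_i$ given the full history $(\bv, \alloc_{<i})$, which a priori depends on $v_i$ and $M_i$ beyond the demand correspondence $\demand_i(v_i, \bp^*, M_i)$. You then average over realizations consistent with $\demand$ to force a valid choice rule; but you do not show that applying this averaged rule sequentially induces the same item-sale probabilities as $\mu^*$. Averaging preserves the marginal of $\alloc_i$ given $\demand_i$, but it alters the joint law of $(v_i, \alloc_i)$, and since agent $(i+1)$'s demand correspondence depends on $\alloc_i$ through $M_{i+1}$, the downstream dynamics can change; so the ``$1/2$ passes through the limit'' verification only tells you the sale probabilities under $\mu^*$, not under the candidate choice rule. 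Closing this requires showing that $\mu^*$ already has the Markovian, demand-only structure — i.e.\ the conditional law of $\alloc_i$ given $\demand_i$ does not depend on $v_i$ or on the history beyond $\demand_i$. Your first verification only rules out mass leaking outside $\demand_i$; it does not control the tiebreak distribution \emph{within} $\demand_i$, which for finite $\epsilon_k$ depends on the normalized offsets $(p_j^* - p_j^{(\epsilon_k)})/\epsilon_k$ and must be shown to converge along a subsequence to something $v_i$-independent. Once you spell out that tiebreak analysis, you have essentially reconstructed the paper's explicit choice rules $\mathbf{A}^{(\epsilon)}$ and the compactness argument in choice-rule space, so the joint-law / disintegration detour does not actually simplify the argument.
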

\begin{proof}
Following the outline above, we will construct a perturbed distribution and consider its median prices.
Given distribution $D$ 
over valuation profiles and $\epsilon > 0$, write $D^{(\epsilon)}$ for the profile of generic distributions from the proof of \Cref{prop:distrib.noise.generic}. Recall that, to draw a valuations $\bv'$ from distribution $D^{(\epsilon)}$, we first draw $\bv$ from $D$, then draw an additive value $w_{ij}$ uniformly from $[0,\epsilon]$ for each agent $i$ and item $j$, and then finally set $v'_i(S) = v_i + \sum_{j \in S}w_{ij}$ for each $S \subseteq [m]$.  In \Cref{prop:distrib.noise.generic} we prove that $D^{(\epsilon)}$ is generic for any $\epsilon > 0$, 
so by Theorem~\ref{theorem:medianpricesexistence} it admits $(1/2)$-median prices.  Write $\bp^{(\epsilon)}$ for these prices.

First, we argue that the price vectors $\bp^{(\epsilon)}$ are uniformly bounded.
Consider the optimal welfare obtainable from a valuation profile drawn from $D$.  This is a random variable; let $\mu$ denote its highest median value, which must be finite.  If $\mu^{(\epsilon)}$ is the corresponding median for distribution $D^{(\epsilon)}$, then note we must have $\mu^{(\epsilon)} \leq \mu + m\epsilon$ (as the optimal welfare can increase by at most $m\epsilon$ pointwise under the addition of $\epsilon$-bounded random noise for each item).  So $\mu^{(\epsilon)}$ is at most a finite constant depending on $D$, for any $0 < \epsilon < 1$.  Finally, note that we must have $p_j^{(\epsilon)} \leq \mu^{(\epsilon)}$ for all $j$, as the probability that any set sells at a price higher than $\mu^{(\epsilon)}$ is less than $1/2$ by definition.  We therefore conclude that $p^{(\epsilon)}$ lies in $[0,\mu+m]^m$ for all $0 < \epsilon < 1$.

Since the prices $\bp^{(\epsilon)}$ lie in a compact set, the sequence $\bp^{(\epsilon)}$ as $\epsilon\rightarrow 0$ must have a convergent subsequence.  Write $\bp^*$ for the limit of this convergent subsequence.

We claim that $\bp^*$ is a profile of $(1/2)$-median prices for valuation profile distribution $D$.  To prove this claim we must construct a witnessing choice rule $\mathbf{A}^*$.
We begin by constructing a choice rule $\mathbf{A}^{(\epsilon)}$ for each $\epsilon > 0$.\footnote{These will not be witnesses for prices $\bp^{(\epsilon)}$; they will simply be helpful for defining $\mathbf{A}^*$.}
For any $\epsilon > 0$, any agent $i$, and any collection of sets $\demand$, we will define 
\[ A_i^{(\epsilon)}(\demand) \in \arg\max_{S \in \demand}\left\{ \sum_{j \in S}\left(w_j + p_j^* - p_j^{(\epsilon)}\right)\right\} \]
where each $w_j$ is drawn uniformly from $[0,\epsilon]$ 
independently for each $j$.  Note that if the argmax is not unique in the definition of $A_i^{(\epsilon)}(\demand)$ then we can choose an element in the argmax arbitrarily, as this happens with probability 0 for generic distribution $D^{(\epsilon)}$.

To give some intuition into the definition of $A_i^{(\epsilon)}$, note that for any $v_i$ and any set $S$, if we write $v'_i$ for the perturbed version of $v_i$ in the definition of $D_i^{(\epsilon)}$, then $\sum_{j \in S}\left(w_j + p_j^* - p_j^{(\epsilon)}\right) = (v'_i(S) - \sum_{j \in S}p_j^{(\epsilon)}) - (v_i(S) - \sum_{j \in S}p_j^*)$.  In other words, $A_i^{(\epsilon)}$ breaks indifferences according to preference under the perturbed valuations and the corresponding median prices.

As the space of choice rules is compact (mapping a discrete set to probability distributions over a discrete set), there must be a subsequence of $\mathbf{A}^{(\epsilon)}$ as $\epsilon \to 0$ that converges with respect to total variation distance.  Let $\mathbf{A}^*$ denote the limit, which must be a choice rule.  We claim that $\mathbf{A}^*$ is a witnessing choice rule for $\bp^*$.

To prove the claim, fix some $\epsilon$ and a valuation profile $\bv$ drawn from $D$.  For each $i$ we will couple $D_i^{(\epsilon)}$ and $A_i^{(\epsilon)}$ by using the same uniform weights $w_j \sim U[0,\epsilon]$ in the definition of each.  Write $v'_i$ for the resulting perturbed valuation from $D_i^{(\epsilon)}$.  Let $(S_1, \dotsc, S_n)$ be the sets allocated to the agents under valuations $\bv$, prices $\bp$, and choice rule $A_i^{(\epsilon)}$, with each $S_i$ being chosen from a corresponding demand set $\demand_i$.  Let $(T_1, \dotsc, T_n)$ be the sets allocated under perturbed valuations $\bv'$ and prices $\bp^{(\epsilon)}$.  Since $D^{(\epsilon)}$ is generic, each $T_i$ uniquely maximizes $u_i^{v'_i}(T,\bp^{(\epsilon)})$ almost surely, so we condition on that event.  
Our goal now is to show that for all $i$ and all sufficiently small $\epsilon$, $S_i = T_i$ with high probability.

Suppose there is some smallest $i$ such that $S_i \neq T_i$.
Set $T_i$ uniquely maximizes $u_i^{v'_i}(T_i,\bp^{(\epsilon)})$ over all items available to agent $i$.  On the other hand, $S_i \in \arg\max_{S \in \demand_i}\{\sum_{j \in S}(w_j + p_j^* - p_j^{(\epsilon)}\} = \arg\max_{S \in \demand_i}\{u^{v'_i}(S, \bp^{(\epsilon)})\}$.  Thus $T_i \neq S_i$ implies $T_i \not\in \demand_i$.  So in particular we must have $u_i^{v_i}(T_i,\bp^*) < u_i^{v_i}(S_i,\bp^*)$ 
but $u_i^{v'_i}(T_i,\bp^{(\epsilon)}) > u_i^{v'_i}(T_i,\bp^{(\epsilon)})$.  Since $v'_i$ and $v_i$ differ by at most $m\epsilon$ on the value of any set, we conclude that $T_i \neq S_i$ implies
\begin{align}
\label{eq:util.cond1}
u_i^{v_i}(T_i, \bp^*) < u_i^{v_i}(S_i, \bp^*) < u_i^{v_i}(T_i, \bp^*) + m\epsilon + \sum_j|p_j^* - p_j^{(\epsilon)}|.
\end{align}
Our strategy will be to argue that \eqref{eq:util.cond1} can occur only with small probability.

For any $\gamma > 0$, $i \in [m]$, and $S,T \subseteq [m]$, let $E(\gamma, i, S, T)$ denote the event that, for prices $\bp^*$ and valuation $v_i$ drawn from $D_i$, $0 < u_i^{v_i}(S, \bp^*) - u_i^{v_i}(T, \bp^*) \leq \gamma$. We claim that $\lim_{\gamma \to 0} \PRO{E(\gamma, i, S, T)} = 0$ for any distribution over the values $v_i(S)$ and $v_i(T)$. To see this, let $F(x) = \PRO{u_i^{v_i}(S, \bp^*) - u_i^{v_i}(T, \bp^*) \leq x}$. We have $\PRO{E(\gamma, i, S, T)} = F(\gamma) - F(0)$ and, as $F$ is the CDF of a random variable, it is right-side continuous, so $\lim_{\gamma \to 0, \gamma > 0} F(\gamma) = F(0)$.

Thus, if we define event $E(\gamma)$ to be $\cup_{i,S,T}E(\gamma,i,S,T)$, a union bound implies that $\lim_{\gamma \to 0} E(\gamma) = 0$ as well.  So for any $\delta > 0$ there exists $\gamma_\delta > 0$ such that $\PRO{E(\gamma_\delta)} < \delta$. In particular, with probability at least $1-\delta$, the loss in utility from any non-demanded set taken by any agent is at least $\gamma_\delta$.

Since $\lim_{\epsilon \to 0}\bp^{(\epsilon)} = \bp^*$, 
There exists some $\epsilon_1 > 0$ such that for all $\epsilon < \epsilon_1$ and all $j$, $\sum_j|p^*_j - p_j^{(\epsilon)}| < \gamma_\delta/2$. We conclude that for all $\epsilon < \min\{\epsilon_1, \gamma_\delta/(2m)\}$, if event $E(\gamma')$ does not occur then inequality \eqref{eq:util.cond1} cannot hold, since if $u_i^{v_i}(T_i, \bp^*) < u_i^{v_i}(S_i, \bp^*)$ then, from the definition of $E(\gamma')$,
\begin{align*} 
u_i^{v_i}(S_i, \bp^*) & > u_i^{v_i}(T_i, \bp^*) + \gamma_\delta\\
& = u_i^{v_i}(T_i, \bp^*) + \gamma_\delta/2 + \gamma_\delta/2\\ 
& > u_i^{v_i}(T_i, \bp^*) + m\epsilon + \sum_j|p_j^* - p_j^{(\epsilon)}|.
\end{align*}

We conclude that if event $E(\gamma_\delta)$ does not occur, then for all sufficiently small $\epsilon$ it must be that $S_i = T_i$ for all $i$, and hence the distribution over allocations (and in particular the probability that each item $j$ sells) is identical.  Some notation: write $\pi_j(D, \bp, \mathbf{A})$ for the probability of sale of item $j$ under valuation distributions $D$, prices $\bp$, and choice rule $\mathbf{A}$, where we can omit $\mathbf{A}$ from the notation for generic distributions.  Then since $\PRO{E(\gamma_\delta)} < \delta$, we conclude that $|\pi_j(D, \bp^*, \mathbf{A}^{(\epsilon)}) - \pi_j(D^{(\epsilon)}, \bp^{(\epsilon)})| < \delta$ for all $j$ and all sufficiently small $\epsilon$.  
As $\mathbf{A}^{(\epsilon)}$ converges to $\mathbf{A}^*$ in total variation distance, we conclude that 
$|\pi_j(D, \bp^*, \mathbf{A}^*) - \pi_j(D^{(\epsilon)}, \bp^{(\epsilon)})| < 2\delta$ for all sufficiently small $\epsilon$.
Since $\bp^{(\epsilon)}$ are median prices for $D^{(\epsilon)}$, taking $\delta \to 0$ completes the proof.
\end{proof}

\medskip\noindent\textbf{Learning Median Prices for General Distributions.}
Finally, we will show that median prices not only exist for general distributions, we can also find them with polynomially many samples.

\begin{theorem}
\label{theorem:medianpricesfromsamples-nongeneric}
    Let $D$ be any (not necessarily generic) distribution over valuation profiles. With probability $1 - \delta$, $O\left(\frac{1}{\epsilon^2}\left(m^2 + m \log n + \log(1/\delta)\right)\right)$ samples suffice to find $(\frac{1}{2}-\epsilon)$-median prices and their witnessing choice rule.
\end{theorem}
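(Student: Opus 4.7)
The plan is to combine the perturbation technique from the proof of \Cref{theorem:medianpricesexistence-nongeneric} with the generic learning guarantee of \Cref{theorem:medianpricesfromsamples}. Concretely, given $k$ sampled valuation profiles $\bv^{(1)}, \dots, \bv^{(k)}$ drawn from $D$, we will, for each sample and each pair $(i,j)$, draw an independent noise $w_{ij}^{(t)} \sim U[0,\eta]$ for a parameter $\eta > 0$ to be chosen, and form perturbed valuations $\tilde v^{(t)}_i(S) = v^{(t)}_i(S) + \sum_{j \in S} w_{ij}^{(t)}$. These are i.i.d.\ draws from the generic distribution $D^{(\eta)}$ defined (as in \Cref{prop:distrib.noise.generic}) by adding additive noise $w_{ij}\sim U[0,\eta]$ to $D$. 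Our algorithm simply returns the $(1/2-\epsilon/2)$-median prices $\bp$ for the empirical distribution over the perturbed samples; the tie-breaking choice rule $\mathbf{A}$ will have agent $i$, faced with demand set $\demand_i$ under $v_i$ and prices $\bp$, choose $\arg\max_{S\in\demand_i}\sum_{j \in S}(W_j - p_j)$, where $W_j \sim U[0,\eta]$ is drawn fresh (independently of the samples) for each $j$, breaking any remaining ties arbitrarily.

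First I would invoke \Cref{theorem:medianpricesfromsamples} with failure probability $\delta$ on the perturbed samples from the generic $D^{(\eta)}$: this yields that, with probability $1-\delta$, taking $k = O(\epsilon^{-2}(m^2 + m\log n + \log(1/\delta)))$ samples is sufficient to output a price vector $\bp$ such that every item sells with probability in $[1/2-\epsilon/2, 1/2+\epsilon/2]$ under $D^{(\eta)}$, where we interpret ``sells'' with respect to the almost-surely unique demanded set. Next, I would show that for small enough $\eta$, the same $\bp$ together with the choice rule $\mathbf{A}$ gives allocation probabilities within $\epsilon/2$ of the probabilities under $D^{(\eta)}$, so that $\bp$ and $\mathbf{A}$ are $(1/2 - \epsilon)$-median for $D$.

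The bridging lemma is the key content: if a profile $\bv$ is drawn from $D$ and we couple it with $\tilde\bv \sim D^{(\eta)}$ using the same noise $w_{ij}$ that $\mathbf{A}$ uses to break ties, then for agent $i$ the set $S$ selected by $A_i$ from $\demand_i(v_i,\bp,M_i)$ is precisely the utility-maximizer for $\tilde v_i$ at prices $\bp$ among sets in the closed demand neighborhood, since $\tilde v_i(S) - \sum_{j\in S}p_j = (v_i(S) - \sum_{j\in S}p_j) + \sum_{j\in S} w_{ij}$. Thus the allocation run by $\mathbf{A}$ on $\bv$ agrees with the allocation run on $\tilde\bv$ as long as no set $T \notin \demand_i(v_i,\bp,M_i)$ becomes strictly preferred to the chosen $S$ once the additive noise is added; but any such $T$ has utility gap at least some $\gamma > 0$ under $\bp$ (with probability over $\bv$ tending to $1$ as $\gamma \to 0$ by right-continuity of CDFs, exactly as in the proof of \Cref{theorem:medianpricesexistence-nongeneric}), so choosing $\eta$ smaller than $\gamma/m$ makes the two allocations coincide except on an event of arbitrarily small probability. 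Since $D$ is fixed while $\eta$ is a tunable parameter of our own algorithm, we can pick $\eta$ small enough (in fact as a function of $\epsilon$ and $D$, but this only affects numerical precision, not sample complexity) to push the disagreement probability below $\epsilon/2$ uniformly over items.

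The main obstacle is formalizing this last coupling argument in a way that is uniform over all items $j$ and over the particular prices $\bp$ returned by our algorithm (which themselves depend on the samples, hence on the noise). I would handle this by first fixing $\bp$ in a compact box $[0,\bar p]^m$ (using the same bound on maximal median price as in \Cref{theorem:medianpricesexistence-nongeneric}), observing that the gap $\gamma$ needed depends only on $D$ and $\bar p$ and not on $\bp$, and then taking a union bound over the $m$ items. The perturbed samples are drawn from a genuinely generic distribution, so \Cref{theorem:medianpricesfromsamples} applies verbatim, and no additional samples from $D$ are required beyond the stated bound. The output choice rule $\mathbf{A}$ is easy to implement at runtime by drawing fresh $W_j \sim U[0,\eta]$ per item; no explicit representation of $\mathbf{A}$ over demand sets is needed.
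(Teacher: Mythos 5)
Your high-level plan (perturb to a generic distribution, invoke the generic learning result, bridge back) is natural, but the bridging step has a genuine and fatal gap that the paper sidesteps by taking a different route.

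The core problem: you propose to output the $(1/2-\epsilon/2)$-median price vector $\bp$ for (the empirical version of) the perturbed distribution $D^{(\eta)}$, together with the tie-breaking rule that, faced with $\demand_i(v_i,\bp,M_i)$, picks the $\tilde v_i$-best set inside that demand correspondence. But the demand correspondence of the \emph{unperturbed} $v_i$ at $\bp$ can be disjoint from the demand correspondence of $\tilde v_i$ at $\bp$, and then no choice rule of the form you describe can recover the right allocation probability. Concretely: take $n=m=1$ and $D$ a point mass at $v_1(\{1\})=1$. Then $D^{(\eta)}$ has $\tilde v_1(\{1\})=1+W$ with $W\sim U[0,\eta]$, and the $1/2$-median price for $D^{(\eta)}$ (and its empirical version, with high probability) is $\bp\approx 1+\eta/2$. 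At this price, $\demand_1(v_1,\bp)=\{\emptyset\}$ deterministically -- there is no tie, and no choice rule acting on $\demand_1$ can sell the item -- so $\pi_1(D,\bp,\mathbf{A})=0$, whereas $\pi_1(D^{(\eta)},\bp)\approx 1/2$. The ``bridging lemma'' simply fails: the disagreement probability is close to $1$, not close to $0$, and this does not improve as $\eta\to 0$. (Your argument via right-continuity of utility-gap CDFs establishes $\lim_{\gamma\to 0}\Pr[E(\gamma,\bp)]=0$ for each \emph{fixed} $\bp$, but the $\bp$ the algorithm outputs moves with $\eta$ and sits precisely on the small-gap region -- so the quantifiers cannot be exchanged and no $\eta$ depending only on $D$, $\bar p$, and $\epsilon$ suffices.) As a small separate point, the tie-breaking objective should be $\arg\max_{S\in\demand_i}\sum_{j\in S} W_j$ rather than $\arg\max_{S\in\demand_i}\sum_{j\in S}(W_j-p_j)$: among sets in $\demand_i$, the term $v_i(S)-\sum_{j\in S}p_j$ is constant, so maximizing $\tilde v_i$-utility reduces to maximizing $\sum_{j\in S}W_j$.

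What the paper does instead: it never takes $\eta\to 0$ on the price it outputs. It parameterizes choice rules by a tie-breaking vector $\bq\in\R^m$, defines the allocation probability $\pi_j(\bp,\bq)$ as a function of \emph{both} $\bp$ and $\bq$, extends each sampled valuation profile with fresh tie-breaking noise $\bw\sim U[0,1]^{nm}$, and proves uniform convergence of $\hat\pi_j(\bp,\bq)$ over the \emph{joint} parameter space $(\bp,\bq)$. The VC bound (Lemma~\ref{lem:vc-nongeneric}) is essentially the same hyperplane-counting argument as in the generic case, now in $\R^{2m}$. Existence of a good pair $(\bp^*,\bq^*)$ follows from Theorem~\ref{theorem:medianpricesexistence-nongeneric} (where $\bq^*$ arises as the appropriately-scaled limit of $(\bp^*-\bp^{(\epsilon)})/\epsilon$, which is generally nonzero -- it would be $\pm 1/2$ in the atom example). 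The algorithm searches for \emph{any} $(\bp,\bq)$ that is empirically approximately median; uniform convergence transfers this to $D$. No bridging lemma is needed, and the output $\bp$ need not be the median of the perturbed distribution at all (in the example, the paper's algorithm would output $\bp=1$ with $\bq$ tuned so the tied agent buys with probability $1/2$). If you want to salvage your route, you would need to simultaneously learn a tie-breaking vector $\bq$, which effectively reduces to the paper's argument.
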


We will consider choice rules that are determined by a tie-breaking vector $\bq \in \R^m$. The choice rule $\mathbf{A}^{\bq}$ is defined as follows. Each agent $i$ draws $w_{i, 1}, \ldots, w_{i, m} \stackrel{\text{i.i.d.}}{\sim} U[0,1]$ and then chooses whatever $S$ in their demand correspondence maximizes $\sum_{j \in S} (w_j - q_j)$ (any remaining ties broken arbitrarily in a fixed way, say lexicographically). A consequence of the proof of \Cref{theorem:medianpricesexistence-nongeneric} is that for every $\gamma > 0$ there are $(\frac{1}{2}-\gamma)$-median prices $\bp^*$ that are witnessed by $\mathbf{A}^{\bq^*}$ for some $\bq^* \in \R^m$, namely setting $q_j = \frac{p_j^* - p_j^{(\epsilon')}}{\epsilon'}$ the choice rules $\mathbf{A}^{\bq}$ and $\mathbf{A}^{(\epsilon')}$ are equivalent. Now consider $\epsilon'$ that is small enough so that the probability of sale for every item $j$ under prices $\bp^*$ and $\mathbf{A}^{(\epsilon')}$ are within $[\frac{1}{2} - \gamma, \frac{1}{2} + \gamma]$.

For any choice of a price vector $\bp$ and tie-breaking vector $\bq$, we let $\pi_j(\bp, \bq)$ be the probability that item $j$ is sold under $\bp$ with choice rule $\mathbf{A}^\bq$ being applied. Our goal is to to find $\bp$ and $\bq$ based on samples from $D$ such that $\pi_j(\bp, \bq)$ is close to $\frac{1}{2}$.

Let us call $(v_i, w_i)$ the \emph{extended} valuation of agent $i$. For a fixed price vector $\bp \in \Rp^m$ and a fixed tie-breaking vector $\bq \in \R^m$, the extended valuation profile $(\bv, \bw)$ completely defines the outcome on $\bp$ is choice rule $\mathbf{A}^{\bq}$ is applied.

Consider again the empirical distribution defined by $k$ drawn for extended valuation profiles $(\bv, \bw)$. Note that $k$ draws from $D$ can easily be extended to draws of extended valuation profiles by appending $w_{i,j}$ drawn i.i.d.\ from $U[0,1]$. For the empirical distribution of extended valuation profiles, we let $\hat{\pi}_j(\bp, \bq)$ be the fraction of extended valuation profiles such that item $j$ gets sold under prices $\bp$ with tie-breaking $\bq$. We again have a uniform-convergence guarantee.

\begin{lemma}[Uniform Convergence]
\label{lem:convergence-nongeneric}
Consider the empirical distribution over $k$ sampled extended valuation profiles, with $k \geq C \frac{1}{\epsilon^2}\left(m^2 + m \log n + \log(1/\delta)\right)$ for a constant $C$.
Then, with probability at least $1 - \delta$, we have $\lVert \hat{\pi}(\bp, \bq) - \pi(\bp, \bq)\rVert_\infty < \epsilon$.
\end{lemma}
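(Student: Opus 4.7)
The plan is to follow the proof of \Cref{lem:convergence} essentially line by line, but now over the enlarged universe of \emph{extended} valuation profiles $(\bv,\bw)$ and with the set system indexed by pairs $(\bp,\bq)\in\R^{2m}$. Concretely, I would define, for each item $j\in[m]$,
\[
S_j^{\bp,\bq} \;=\; \{(\bv,\bw)\,:\, j\text{ is sold under prices }\bp\text{ when agents use }\mathbf{A}^{\bq}\},
\]
so that $\pi_j(\bp,\bq)$ and $\hat\pi_j(\bp,\bq)$ are exactly the true and empirical measures of $S_j^{\bp,\bq}$. Then the standard uniform convergence bound from \cite{wainwright_2019} (Theorem~4.10 combined with Equation~(5.50)) plus a union bound over $j$ reduces the statement to showing that the VC dimension of the set system $\{S_j^{\bp,\bq}\}_{\bp,\bq\in\R^m}$ is $O(m^2+m\log n)$, exactly as in \Cref{lem:vc}.

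To bound this VC dimension I would apply Buck's hyperplane arrangement result, now in the joint parameter space $\R^{2m}$. Suppose $N$ extended profiles $(\bv^{(t)},\bw^{(t)})_{t\in[N]}$ are shattered. For each triple $(t,i,(S,S'))$ with $t\in[N]$, $i\in[n]$, and $S,S'\subseteq[m]$, define the two linear forms
\[
g^{t}_{i,S,S'}(\bp)\;=\;v_i^{(t)}(S)-v_i^{(t)}(S')-\sum_{j\in S\setminus S'}p_j+\sum_{j\in S'\setminus S}p_j,
\]
\[
h^{t}_{i,S,S'}(\bq)\;=\;\sum_{j\in S\setminus S'}(w_{i,j}^{(t)}-q_j)-\sum_{j\in S'\setminus S}(w_{i,j}^{(t)}-q_j).
\]
Unfolding the definition of $\mathbf{A}^{\bq}$, agent $i$ on profile $t$ strictly prefers $S$ to $S'$ iff either $g^t_{i,S,S'}>0$, or $g^t_{i,S,S'}=0$ and $h^t_{i,S,S'}>0$, with any remaining ties broken by a fixed lexicographic rule. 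Hence each agent's preference, and therefore the entire outcome pattern on all $N$ profiles, is constant on each cell of the arrangement cut out by the $O(N\cdot n\cdot 4^m)$ hyperplanes $\{g^t_{i,S,S'}=0\}\cup\{h^t_{i,S,S'}=0\}$ in $\R^{2m}$. Buck's theorem~\cite{Buck43} bounds the number of such cells by $2m\cdot(N\cdot n\cdot 4^m)^{2m}$, so shattering forces $2^N\leq 2m(N\cdot n\cdot 4^m)^{2m}$, and taking logs yields $N=O(m^2+m\log n)$, as needed.

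The main obstacle relative to the generic case is that tie-breaking must be tracked without losing the clean hyperplane structure. In \Cref{lem:vc} genericness ruled out ties, so a single family of hyperplanes in $\R^m$ sufficed. Here $\mathbf{A}^{\bq}$ only affects the outcome on the measure-zero event that the utility difference vanishes, which would be delicate if handled naively --- a change in $\bq$ can flip an agent's choice even when $\bp$ is fixed. The fix is to lift everything to the joint space $(\bp,\bq)\in\R^{2m}$ from the outset, so that the lexicographic preference between any two sets is encoded by the sign pattern of just two linear forms; Buck's count then still applies, at the cost only of doubling the ambient dimension and thereby affecting only the constant hidden in the $O(\cdot)$ of the final VC bound.
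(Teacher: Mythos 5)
Your proposal is correct and follows essentially the same route as the paper: both lift to the $2m$-dimensional parameter space $(\bp,\bq)$, encode the choice rule $\mathbf{A}^{\bq}$ via one hyperplane for the utility comparison and a second for the lexicographic tie-break, apply Buck's bound on the number of cells of the resulting arrangement in $\R^{2m}$, and conclude the VC dimension is $O(m^2+m\log n)$ exactly as in \Cref{lem:vc-nongeneric}, after which the uniform-convergence and union-bound steps are identical to \Cref{lem:convergence}. The only differences are cosmetic (you frame the argument as ``outcome constant on each cell'' rather than ``different outcomes force different cells,'' and write $4^m$ for $(2^m)^2$), and neither changes the bound.
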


This bound implies \Cref{theorem:medianpricesfromsamples-nongeneric}.  We know that there exists a $(1/2-\epsilon)$-median price vector $\bp^*$ for $D$ witnessed by $\mathbf{A}^{\bq^*}$ for some $\bq^*$.  So with probability $1-\delta$, $\pi_j(\bp^*, \bq^*) \in [\frac{1}{2} - 2 \epsilon, \frac{1}{2} + 2 \epsilon]$ for all $j$.  Let $\bp$ and $\bq$ be \emph{any} vectors such that $\hat{\pi}_j(\bp, \bq) \in [\frac{1}{2} - 2 \epsilon, \frac{1}{2} + 2 \epsilon]$ for all $j$.  Then, again conditioning on the event of probability $1-\delta$, $\hat{\pi}_j(\bp, \bq) \in [\frac{1}{2} - 3 \epsilon, \frac{1}{2} + 3 \epsilon]$ for all $j$.

\begin{proof}[Proof of \Cref{lem:convergence-nongeneric}]
Consider the space of all possible extended valuation vectors $(\bv, \bw)$.  Every pair of a price vector $\bp$ and tie-breaking vector $\bq$ defines a subset $S_j^{\bp, \bq}$ of this space, equal to the collection of valuation vectors for which item $j$ is sold under price vector $\bp$. By this definition $\pi_j(\bp, \bq) = \PRO[\bv \sim D, \bw \sim U^m{[0,1]}]{(\bv, \bw) \in S_j^{\bp, \bq}}$ and $\hat{\pi}_j(\bp, \bq) = \PRO[\bv \sim \hat{D}, \bw \sim U^m{[0,1]}]{(\bv, \bw) \in S_j^{\bp}}$.

Let $\nu$ be the VC dimension of the set system $\{ S_j^{\bp, \bq} \}$. We will show below in \Cref{lem:vc-nongeneric} that also the VC dimension of the set system defined by all of these subsets is upper bounded by $O(m^2 + m \log n)$.

Therefore, also here, for any item $j$ and any $\epsilon, \delta' > 0$, this implies that $O\left(\frac{\nu + \ln(1/\delta')}{\epsilon^2}\right)$ samples are sufficient for $\PRO{\sup_{\bp, \bq} \lvert \hat{\pi}_j(\bp, \bq) - \pi_j(\bp, \bq) \rvert \geq \epsilon} \leq \delta'$.

Again, taking a union bound with $\delta' = \frac{\delta}{m}$ over all items $j$, $O\left(\frac{\nu + \ln(1/\delta)}{\epsilon^2}\right)$ samples are sufficient for $\PRO{\sup_{\bp, \bq} \lVert \hat{\pi}(\bp, \bq) - \pi(\bp, \bq) \rVert_\infty \geq \epsilon} \leq \delta$, meaning that with probability at least $1-\delta$, we have that $\lvert \hat{\pi}_j(\bp, \bq) - \pi_j(\bp, \bq) \rvert < \epsilon$ for all items $j$ and all price vectors $\bp$ and all tie-breaking vectors $\bq$.
\end{proof}

The bound on the VC dimension works in a similar way as before.

\begin{lemma}
\label{lem:vc-nongeneric}
The VC dimension of the set system $\{ S_j^{\bp, \bq} \}$ is at most $O(m^2 + m \log n)$.
\end{lemma}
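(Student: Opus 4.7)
The plan is to follow the strategy of \Cref{lem:vc}, now working in the $2m$-dimensional parameter space $(\bp, \bq) \in \R^m \times \R^m$ rather than just in $\bp$-space. Suppose $N$ extended valuation profiles $(\bv^{(1)}, \bw^{(1)}), \ldots, (\bv^{(N)}, \bw^{(N)})$ are shattered, so that for each $T \subseteq [N]$ there is a pair $(\bp^T, \bq^T)$ producing the sales pattern $T$ of item $j$ across these profiles. I will upper-bound the number of open cells of $\R^{2m}$ on which the whole allocation vector is constant across all $N$ profiles, then compare against $2^N$.

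The key step is to enumerate the separating hyperplanes. If $(\bp, \bq)$ and $(\bp', \bq')$ induce different allocations on the same $(\bv, \bw)$, consider the first agent $i$ whose chosen bundle differs. Either (a) agent $i$'s demand correspondence at $\bp$ differs from that at $\bp'$, which requires some inequality of the form $v_i(S) - v_i(S') \geq \sum_{j' \in S \setminus S'} p_{j'} - \sum_{j' \in S' \setminus S} p_{j'}$ to flip sign between $\bp$ and $\bp'$, defining a hyperplane in $\bp$-space alone; or (b) the demand correspondence is unchanged but the subset selected by $\mathbf{A}^{\bq}$ differs, which by the definition of $\mathbf{A}^{\bq}$ requires the sign of $\sum_{j' \in S \setminus S'} (w_{i,j'} - q_{j'}) - \sum_{j' \in S' \setminus S} (w_{i,j'} - q_{j'})$ to flip, defining a hyperplane in $\bq$-space alone. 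Each type contributes at most $n \cdot (2^m)^2$ hyperplanes per extended profile, so across all $N$ profiles we get at most $H := 2 N n \cdot 4^m$ hyperplanes in $\R^{2m}$.

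By Buck's theorem \cite{Buck43}, this arrangement induces at most $2m \cdot H^{2m}$ open regions. Shattering requires the $2^N$ pairs $(\bp^T, \bq^T)$ to lie in distinct regions, so
\[
2^N \;\leq\; 2m \cdot (2 N n \cdot 4^m)^{2m}.
\]
Taking logarithms gives $N \leq \log(2m) + 2m\log(2N) + 2m\log n + 4m^2$, and solving for $N$ yields $N = O(m^2 + m \log n)$, matching the bound claimed.

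The main obstacle will be rigorously handling the deterministic lexicographic fallback used to resolve residual ties in $\mathbf{A}^{\bq}$. Since the fallback is fixed and independent of $(\bp, \bq)$, it can only affect outcomes on the measure-zero boundary hyperplanes of type (b) already enumerated, and hence does not introduce any new separating surfaces in $(\bp, \bq)$-space. Formalizing this, analogously to how the generic \Cref{lem:vc} implicitly relies on genericity to avoid a third class of boundaries, is the only care needed beyond the direct dimensional analogue of the generic argument.
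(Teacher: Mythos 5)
Your proposal is correct and follows essentially the same approach as the paper's proof: enumerate two families of separating hyperplanes in the $2m$-dimensional $(\bp,\bq)$-space (one depending only on $\bp$, one only on $\bq$), invoke Buck's theorem to bound the number of regions by $O\bigl((Nn\,4^m)^{2m}\bigr)$, and compare to $2^N$. The only cosmetic difference is how the two cases are described — you split on whether the demand correspondence changes versus whether $\mathbf{A}^{\bq}$'s selection within an unchanged correspondence changes, while the paper says the points either straddle a $\bp$-hyperplane or lie on it and straddle a $\bq$-hyperplane — but these are the same dichotomy and generate the same hyperplane families. Your closing remark about the lexicographic fallback is handled in the paper the same way you sketch it: since the fallback is independent of $(\bp,\bq)$, a difference in outcome must produce a strict sign change in at least one of the two linear expressions, so no additional separating surfaces arise.
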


\begin{proof}
Let $(\bv^{(1)}, \bw^{(1)}), \ldots, (\bv^{(N)}, \bw^{(1)})$ be $N$ extended valuation profiles that are shattered. That is, for every set $T \subseteq [N]$, there is a price vector $\bp^T$ and a tie-breaking vector $\bq^T$ such that item $j$ gets sold on $(v^{(t)}, w^{(t)})$ if and only if $t \in T$.

Note that if on an extended valuation profile $(\bv, \bw)$, item $j$ gets sold under $(\bp, \bq)$ but doesn't get sold under price vector $(\bp', \bq')$, there have to be $i$, $S$, and $S'$ such that agent $i$ prefers $S$ under $(\bp, \bq)$ but $S'$ under $(\bp', \bq')$ because otherwise the outcomes would be identical. That is, $v_i(S) - \sum_{j' \in S} p_{j'} \geq v_i(S') - \sum_{j' \in S'} p_{j'}$ but $v_i(S) - \sum_{j' \in S} p'_{j'} \leq v_i(S') - \sum_{j' \in S'} p'_{j'}$ and one of the two inequalities is strict or $\sum_{j' \in S} (w_{i, j'} - p_{j'}) \geq \sum_{j' \in S'} (w_{i, j'} - p_{j'})$ but $\sum_{j' \in S} w_{i, j'} - p'_{j'}) \leq \sum_{j' \in S'} (w_{i, j} - p'_{j'})$ and one of the two inequalities is strict.

That is, if we consider the $2m$-dimensional space $\Rp^m \times \R^m$ of all price vectors and tie-breaking vectors, the vectors $(\bp, \bq)$ and $(\bp', \bq')$ either lie on different sides of the hyperplane defined by $v_i(S) - v_i(S') - \sum_{j' \in S \setminus S'} p^\ast_{j'} + \sum_{j' \in S' \setminus S} p^\ast_{j'} = 0$ or they lie both on the hyperplane but on different sides of the one defined by $\sum_{j' \in S} w_{i, j} - \sum_{j' \in S'} w_{i, j} - \sum_{j' \in S \setminus S'} q^\ast_{j'} + \sum_{j' \in S' \setminus S} q^\ast_{j'} = 0$.

The extended valuation profiles $(\bv^{(1)}, \bw^{(1)}), \ldots, (\bv^{(N)}, \bw^{(N)})$ define $N \cdot n \cdot (2^m)^2$ such hyperplanes of each kind, so $2 \cdot N \cdot n \cdot (2^m)^2$ hyperplane in total. This means there are no more than $2 m (2 \cdot N \cdot n \cdot (2^m)^2)^{2m}$ different regions in $\R^{2m}$. If $(\bv^{(1)}, \bw^{(1)}), \ldots, (\bv^{(N)}, \bw^{(N)})$ is shattered, each of the $2^N$ vectors $(\bp^T, \bq^T)$ for $T \subseteq [N]$ has to lie in a different region.

We conclude that $2^N \leq 2 m (N \cdot n \cdot (2^m)^2)^{2m}$.  Taking logs on both sides yields $N \leq \log m + m\log N + m\log n + 2m^2$, which implies $N = O(m^2 + m\log n)$.
\end{proof}

\section{Open Problems and Future Directions}

% \begin{enumerate}
%     \item Truthful online combinatorial auction with  a single sample?
%     \item Polytime truthful online combinatorial auction with polynomial samples?
% \end{enumerate}

{ In this work we showed that a single sample from each bidder's value distribution is sufficient to obtain an $O(1)$-approximation for the online combinatorial allocation problem with XOS bidders, while a polynomial number of samples per bidder suffices to get a $(2+\epsilon)$-approximate truthful mechanism. An exciting frontier opened by our work is whether the problem with XOS bidders admits a \emph{truthful} $O(1)$-approximate mechanism with a poly-logarithmic number of samples, or even only a single sample from each bidder's value distribution. Another natural next step, would be to explore whether the online combinatorial allocation problem with \emph{subadditive} bidders can be solved with poly-many samples, or even with a single sample. }

%\clearpage

\bigskip

\appendix

%%%%%OMITTED PROOFS%%%%%
\section{Omitted Proofs}\label{app:omitted-proofs}

\subsection{Proof of \Cref{lem:greedy-approx}}

\begin{proof}[Proof of \Cref{lem:greedy-approx}]
First observe that the prices in the buyer-wise greedy algorithm are non-decreasing. This can be shown by contradiction. If  we had $a_j(v_i,\alloc_i) < p_j^{(i)}$ for some $i \in \buyers$ and $j \in \alloc_i$, then buyer $i$ would strictly prefer bundle $\alloc_i\setminus\{j\}$ over $\alloc_i$, in contradiction to $\alloc_i = d_i(\bp^{(i)}) \in \demand(v_i,\bp^{(i)},\items)$. Indeed, if we had $a_j(v_i,\alloc_i) < p_j^{(i)}$, then
\[
v_i(\alloc_i\setminus \{j\}) - \sum_{j \in \alloc_i\setminus \{j\}} p_j^{(i)} \geq \sum_{j \in \alloc_i\setminus \{j\}} \left(a_j(v_i,A_i) -  p_j^{(i)} \right) > \sum_{j \in \alloc_i} \left(a_j(v_i,\alloc_i) -  p_j^{(i)} \right)
= v_i(\alloc_i) - \sum_{j \in \alloc_i} p_j^{(i)},\]
where we used that $a(v_i,\alloc_i)$ is an additive supporting function for the first and last step, and the strict inequality comes from the assumption.

Using this, we now show the approximation guarantee. We will first lower bound 
$\GRD(\bv) \geq \sum_{j \in \items} p_j^{(n+1)}$. Afterwards, we will show that $\OPT(\bv) \leq \sum_{j \in \items} p_j^{(n+1)} + \sum_{i\in \buyers} u_i(A_i,\bp^{(i)})$ and that $\sum_{i\in \buyers} u_i(A_i,\bp^{(i)}) \leq \GRD(\bv)$. Together this shows that $\OPT(\bv) \leq 2 \GRD(\bv)$, as claimed.

Let's begin with the lower bound on $\GRD(\bv)$. 
Since $\GRD_i(\bv) \subseteq \alloc_i$ for all $i \in \buyers$, $a_j(v_i,A_i) = p_j^{(n+1)}$ for all $i \in \buyers$ and $j \in \GRD_i(\bv)$, and $p_j^{(n+1)} = 0$ for all $j \in \items$ that remain unassigned, we obtain
\begin{align}
\GRD(\bv) = \sum_{i \in \buyers} v_i(\GRD_i(\bv)) \geq \sum_{i \in \buyers} \sum_{j \in \GRD_i(\bv)} p_j^{(n+1)} = \sum_{j \in \items} p_j^{(n+1)}.
\label{eq:bound-on-gv}
\end{align}

For the upper bound on $\OPT(\bv)$, we proceed in two steps. 
First, observe that for each $i \in \buyers$, since $A_i = d(\bp^{(i)}) \in \demand(v_i,\bp^{(i)},\items)$, and prices are monotonically increasing, it holds that
\begin{align}
v_i(\OPT_i(\bv)) - \sum_{j \in \OPT_i(\bv)} p_j^{(n+1)} \leq v_i(\OPT_i(\bv)) - \sum_{j \in \OPT_i(\bv)} p_j^{(i)} \leq v_i(A_i) - \sum_{j \in A_i} p_j^{(i)}.
\label{eq:lower-opt}
\end{align}

Note that  the right-hand side $v_i(A_i) - \sum_{j \in A_i} p_j^{(i)} = u_i(A_i,\bp^{(i)})$. We now use that in each step of the algorithm, the sum of the additive supporting valuations increases exactly by the updating buyer's utility. Using this, we obtain
\begin{align}
\GRD(\bv) \geq \sum_{i \in \buyers} \sum_{j \in \GRD_i(\bv)} a_j(v_i,A_i) = \sum_{i = 1}^{n} \left(v_i(A_i) - \sum_{j \in A_i} p_j^{(i)} \right) 
= \sum_{i \in \buyers} \bigg(v_i(A_i) - \sum_{j \in A_i} p_j^{(i)}\bigg).
\label{eq:declared-utility}
\end{align}

By summing Inequality \eqref{eq:lower-opt} over all $i \in \buyers$ and using Equality \eqref{eq:declared-utility}, we obtain
\[
\OPT(\bv)- \sum_{j \in \items} p_j^{(n+1)} 
\leq \sum_{i \in \buyers} \bigg( v_i(A_i) - \sum_{j \in A_i} p_j^{(i)} \bigg) \leq \GRD(\bv).
\]
Rearranging this, shows $\OPT(\bv) \leq \sum_{j \in \items} p_j^{(n+1)} + \GRD(\bv)$, as claimed.
\end{proof}

\subsection{Proof of \Cref{lem:modified-greedy-approx}}

\begin{proof}[Proof of \Cref{lem:modified-greedy-approx}]
The proof of the approximation guarantee follows from the following observations. First note that as in the unmodified algorithm, prices are monotonically increasing. We can then use that $a_j(v_i,A_i) \geq p^{(n+1)}_j/2$ to conclude, similar to Inequality~\eqref{eq:bound-on-gv}, that
$\GRDTWO(\bv) \geq \frac{1}{2} \sum_{j \in \items} p^{(n+1)}_j.$ 
Moreover, the arguments that lead to Inequality~\eqref{eq:lower-opt} also apply to the modified algorithm, so that
\begin{align}\label{eq:bound-on-opt}
\OPT(\bv) \leq \sum_{j \in \items} p_j^{(n+1)} + \sum_{i \in \buyers} u_i(A_i,\bp^{(i)})
\end{align}
The argument is completed by observing that in the modified algorithm, although the sum of the additive supporting valuations no longer increase by exactly the updating buyer's utility, it still grows by \emph{at least} this amount. With this we can proceed as in Inequality~\eqref{eq:declared-utility}, except that the first equality becomes an inequality, and conclude that 
\begin{align}
\GRDTWO(\bv) = \sum_{i \in \buyers} \sum_{j \in \GRDTWO(\bv)} a_j(v_i,A_i) \geq \sum_{i \in \buyers} u_i(A_i,\bp^{(i)}).
\end{align}
Putting everything together, we obtain $\OPT(\bv) \leq \sum_{j \in \items} p^{(n+1)}_j + \sum_{i \in \buyers} u_i(A_i,\bp^{(i)}) \leq 3 \GRDTWO(\bv)$ as claimed.
\end{proof}

\subsection{Proof of \Cref{lem:modified-greedy-approx-with-samples}}

\begin{proof}[Proof of \Cref{lem:modified-greedy-approx-with-samples}]
Denote the demand set of agent $i$ with respect to valuation $r_i$ when prices $\bp^{(i)}$ are set through $\bs_{<i}$ by $B_i$; and let's write $A_i$ for the demand set of agent $i$ with respect to valuation $s_i$ when prices $\bp^{(i)}$ are set through $\bs_{<i}$. Since $\br$ and $\bs$ are generated one-by-one by flipping an independent fair coin, for all $i\in \buyers$, it holds that
$\EXP{r_i(B_i)} = \EXP{s_i(A_i)}$ and $\EXP{\sum_{j \in B_i} p_j^{(i)}} = \EXP{\sum_{j \in A_i} p_j^{(i)}}$.

Summing this over all $i \in \buyers$ yields
\begin{align*}
\EXP{ \sum_{i \in \buyers} r_i(B_i) } &\geq \EXP{ \sum_{i \in \buyers}\left( r_i(B_i) - \sum_{j \in B_i} p_j^{(i)}\right) } \\
&= \EXP{\sum_{i \in \buyers} \left(s_i(A_i) - \sum_{j \in A_i} p_j^{(i)}\right)} \geq \EXP{\OPT(s) - \sum_{j \in [m]} p_j^{(n+1)}},
\end{align*}
where the inequality holds by the same arguments as those that lead to Inequality \eqref{eq:bound-on-opt}.

Moreover, we have that
\[
\EXP{\sum_{j \in \items} p_j^{(n+1)}} = \EXP{\sum_{i \in \buyers} \sum_{j \in A_i} \left(p_j^{(i+1)} - p_j^{(i)} \right)} \leq \EXP{\sum_{i \in \buyers} \sum_{j \in A_i} p_j^{(i+1)}} \leq \EXP{\sum_{i \in \buyers} 2 s_i(A_i)}.
\]

We conclude that
\[
\EXP{\sum_{i \in \buyers} r_i(B_i)} + \EXP{\sum_{i \in \buyers} 2 s_i(A_i)} \geq \EXP{\OPT(\bs)},
\]
which shows that $3 \EXP{\sum_{i \in \buyers} r_i(B_i)} \geq \EXP{\OPT(\bs)}$ as claimed.
\end{proof}

%%%%%MAKING DISTRIBUTIONS GENERIC%%%%%
\section{Making Distributions Generic}
\label{sec:appendix-generic}

\begin{proposition}
\label{prop:distrib.noise.generic}
Given any distribution $D$ over valuations and any $\epsilon > 0$, there exists a generic distribution $D'$ and a coupling $F$ between $D$ and $D'$ such that for all $(v,v') \sim F$, $|v(S) - v'(S)| < \epsilon$ for all $S \subseteq [m]$. 
\end{proposition}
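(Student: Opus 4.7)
The plan is to construct $D'$ by adding a small amount of additive, item-wise random noise to valuations drawn from $D$. Specifically, to sample $(v,v')$ from the coupling $F$, I would first draw $v \sim D$, then independently draw $w_j \sim U[0, \epsilon/(2m))$ for each $j \in [m]$, and finally set
\[
v'(S) \;=\; v(S) \;+\; \sum_{j \in S} w_j \qquad \text{for every } S \subseteq [m].
\]
Note $v'(\emptyset) = 0$ and $v'$ inherits monotonicity from $v$ since the noise terms are nonnegative, so $v'$ is a valid valuation function. Let $D'$ denote the marginal over $v'$.

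For the perturbation bound, directly from the construction we have
\[
|v(S) - v'(S)| \;=\; \sum_{j \in S} w_j \;\leq\; m \cdot \frac{\epsilon}{2m} \;=\; \frac{\epsilon}{2} \;<\; \epsilon,
\]
which gives the coupling guarantee uniformly over $S$.

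It remains to verify that $D'$ is generic. Fix any price vector $\bp \in \R_+^m$ and any $M \subseteq [m]$. I need to show that with probability $1$ (over $v'$ drawn from $D'$) there are no two distinct bundles $B_1, B_2 \subseteq M$ that both maximize utility. Since there are only finitely many pairs $(B_1, B_2)$, by a union bound it suffices to show that for every fixed distinct pair, the event
\[
v'(B_1) - \sum_{j \in B_1} p_j \;=\; v'(B_2) - \sum_{j \in B_2} p_j
\]
has probability zero. Substituting the definition of $v'$, this equation becomes
\[
\Bigl(v(B_1) - v(B_2)\Bigr) \;+\; \sum_{j \in B_1 \setminus B_2} (w_j - p_j) \;-\; \sum_{j \in B_2 \setminus B_1} (w_j - p_j) \;=\; 0.
\]
Since $B_1 \neq B_2$, the symmetric difference $B_1 \triangle B_2$ is nonempty; pick any $j^* \in B_1 \triangle B_2$. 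Condition on $v$ and on all $w_j$ for $j \neq j^*$. Then the left-hand side is an affine function of $w_{j^*}$ with coefficient $\pm 1$, and $w_{j^*}$ is drawn from a continuous distribution (uniform on an interval) independently of everything we conditioned on. Hence the event that the expression equals zero has conditional probability $0$, and integrating gives unconditional probability $0$. A union bound over the at most $2^{2m}$ ordered pairs of distinct bundles shows that indifference occurs with probability $0$, establishing genericness.

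The proof has no real obstacle: the only subtlety is making sure the noise scale is small enough to satisfy the $<\epsilon$ bound (which is why I use $\epsilon/(2m)$ rather than $\epsilon$), and choosing a continuous noise distribution so that the genericness argument reduces to the statement that a nontrivial affine function of a continuous random variable almost surely avoids any fixed value. Any continuous, independent, per-item noise with compact support of size less than $\epsilon/m$ would work equally well.
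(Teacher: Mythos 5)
Your proof is correct and follows essentially the same approach as the paper's: add small independent item-wise uniform noise, observe that indifference between any two fixed distinct bundles forces a nondegenerate continuous random variable to hit a fixed value (probability zero), and finish with a union bound over finitely many bundle pairs. One point in your favor: the paper's construction draws $w_j \sim U[0,\epsilon]$ per item, which only yields $|v(S)-v'(S)| \leq m\epsilon$ rather than the claimed $<\epsilon$; your rescaling to $U[0,\epsilon/(2m))$ correctly enforces the stated bound. Your conditioning (fix $v$ and all $w_j$ except one $w_{j^*}$ in the symmetric difference, then note the expression is affine with coefficient $\pm 1$ in $w_{j^*}$) is phrased slightly differently from the paper's (fix $v$, observe the whole signed sum over $S\triangle T$ is atomless), but the two are logically equivalent.
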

\begin{proof}
%\begin{claim}
Given any distribution $D$ over a class of valuations, consider the following distribution $D'$.  To draw $v' \sim D'$, first draw $v \sim D$, then draw $w_j \sim U[0,\epsilon]$ for each item $j$.  Then we define $v'(S) = v(S) + \sum_{j \in S}w_j$ for each $S \subseteq [m]$.

    We need to prove that $D'$ is generic.  Choose any price vector $\bp$ and any $M \subseteq [m]$.  Choose any $S,T \subseteq M$ with $S \neq T$.  We will show that the probability that $S$ and $T$ both lie in $\demand(v', \bp, M)$ is $0$.  
    To prove the claim, 
    note that in order for both $S$ and $T$ to be in $\demand(v', \bp, M)$, a necessary condition is that  $v'(S) - \sum_{j \in S}p_j = v'(T) - \sum_{j \in T}p_j$.  
    From the definition of $v'$, this occurs only if
    \[ \sum_{j \in S-T} w_j - \sum_{j \in T-S} w_j = (v(T)-v(S)) + (\sum_{j \in S}p_j - \sum_{j \in S}p_j).\]  
    Once we draw $v \sim F$ but before we draw the $w_j$ terms, the right-hand side is a fixed constant and the left-hand side is a random variable with no atoms (since $S \neq T$).  So the probability of equality is $0$.  Taking a union bound over all $S$ and $T$ yields the desired result.
\end{proof}

\section{Examples for Simple Algorithms} \label{sec:simpleAlgos}

We give examples demonstrating the difficulties arising for ‘simple' sample-based algorithms (for both the allocation and the auction problem), outlining the need for our more elaborate approach.

\medskip\noindent\textbf{The Natural Single-Choice Extension.}
Probably the most natural idea is to directly extend the single-choice SSPI that simply sets the largest observed sample value as a threshold.
As a first idea, one might attempt to accordingly set a price for each item that equals $\max_{v_s\in S}v_s(j)$. However, imagine an instance where every buyer has an additive value of $1$ for each item, except for buyer $1$, who has a value of $2$ for each nonempty subset of the items, i.e. also for each singleton: then, the algorithm never sells more than one item, resulting in a ratio of $\Omega(m)$.

\medskip\noindent\textbf{Item Prices as Contribution to Offline Optimum.}
Consider the following sequential posted pricing algorithm based on a single sample.
Given a single sample, compute the offline optimum allocation for that sample, then price each item according to its supporting price under that allocation (capturing its ``contribution'' to said optimum).
This approach doesn't work.  Suppose $n=m$ and consider a distribution $D$ over profiles of unit-demand valuations, such that player $n$ always has value $1$ for item $1$ and value $\epsilon$ for every other item.  Each other player and item has value $\epsilon/2$.  Then assuming $\epsilon \ll 1/n$, the optimum for any sample assigns item $1$ to player $1$ for a total value of $1 + O(n\epsilon)$.  So this approach sets $p_1 = 1$ and $p_j = \epsilon/2$ for all $j \neq 1$.  But under these prices, player $1$ never chooses item $1$, so the total value generated is $O(n\epsilon)$ (since a truthful mechanism must assign buyers their favourite bundle under the given prices). Without truthfulness, the algorithm could instead e.g. assign each player the largest-valued item that is still price-feasible. While this approach seems promising, one needs to find a way of showing that such an algorithm will not sell out high-valued items before the according buyers arrive.

\medskip\noindent\textbf{Balanced Prices from a Single Sample.} Motivated by balanced pricing, suppose we instead take a single sample, compute an offline optimal allocation, then set each item's price to be half of its supporting price.  This also does not work.  Suppose $n=m$ and consider a distribution $D$ over profiles of unit-demand valuations, such that player $n$ has value $1$ for one of the items (uniformly at random), and value $\epsilon$ for the remaining items.  Each other player has value $\epsilon$ for every item. Then this pricing approach will result in price $1/2$ for one of the items (uniformly at random), say item $j$, and price $\epsilon/2$ for all other items.  In the real valuation, the optimal allocation has value $1 + O(n\epsilon)$ by assigning to player $n$ whichever item is valued at $1$ (say item $i$).  But unless $j = i$, player $n$ will receive no items: all items other than $j$ will be taken by the first $n-1$ players, leaving only item $j$ which has price $1/2 > \epsilon$.  So the total value obtained will be $O(n\epsilon)$.

The construction above assumed a deterministic order in which agents choose their items.  Note that if we modify the example by taking $n \gg m$, then the analysis would extend to (a) an algorithm where the agent order is randomized for sequential posted pricing, and even (b) scenarios where there isn't deterministically a single agent with high value for an item, but all agent valuations are drawn iid from a distribution with probability $O(1/n)$ of having the ``high'' type (value $1$ for some item uniformly at random).

In conclusion, it is not possible to correctly estimate the value our algorithms ``should'' obtain from an item to compute prices. Instead, our algorithms' key property is using the fact that both samples and actual valuation stem from the same distribution to bound the probability that items are prematurely assigned.

\bigskip

%%%%%ACKNOWLEDGMENTS%%%%%
\section*{Acknowledgements}
We are thankful to Shuchi Chawla, Trung Dang, and anonymous reviewers 
for comments on earlier versions of the paper. 
The last author is thankful to Ellen Vitercik for explaining \cite{BalcanDDKSV21}.

\clearpage

{\small
\bibliographystyle{alpha}
%\bibliography{bib}

\begin{thebibliography}{GHT{\etalchar{+}}21}

\bibitem[AKS21]{AKS-SODA21}
Sepehr Assadi, Thomas Kesselheim, and Sahil Singla.
\newblock Improved truthful mechanisms for subadditive combinatorial auctions:
  Breaking the logarithmic barrier.
\newblock In {\em Proceedings of {SODA}}, pages 653--661, 2021.

\bibitem[AKW14]{AzarKW14}
Pablo~Daniel Azar, Robert Kleinberg, and S.~Matthew Weinberg.
\newblock Prophet inequalities with limited information.
\newblock In {\em Proceedings of SODA}, pages 1358--1377, 2014.

\bibitem[AS19]{AS-FOCS19}
Sepehr Assadi and Sahil Singla.
\newblock Improved truthful mechanisms for combinatorial auctions with
  submodular bidders.
\newblock In {\em Proceedings of FOCS}, 2019.

\bibitem[Bal20]{Balcan-Chp20}
Maria{-}Florina Balcan.
\newblock Data-driven algorithm design.
\newblock In Tim Roughgarden, editor, {\em Beyond the Worst-Case Analysis of
  Algorithms}, pages 626--645. Cambridge University Press, 2020.

\bibitem[BDD{\etalchar{+}}21]{BalcanDDKSV21}
Maria{-}Florina Balcan, Dan~F. DeBlasio, Travis Dick, Carl Kingsford, Tuomas
  Sandholm, and Ellen Vitercik.
\newblock How much data is sufficient to learn high-performing algorithms?
  generalization guarantees for data-driven algorithm design.
\newblock In {\em Proceedings of {STOC}}, pages 919--932, 2021.

\bibitem[BHK{\etalchar{+}}24]{Banihashem-HKKO24}
Kiarash Banihashem, MohammadTaghi Hajiaghayi, Dariusz~R. Kowalski, Piotr
  Krysta, and Jan Olkowski.
\newblock Power of posted-price mechanisms for prophet inequalities.
\newblock In {\em Proceedings of SODA}, pages 4580--4604, 2024.

\bibitem[BSV18]{BalcanSV18}
Maria{-}Florina Balcan, Tuomas Sandholm, and Ellen Vitercik.
\newblock A general theory of sample complexity for multi-item profit
  maximization.
\newblock In {\em Proceedings of {EC}}, pages 173--174, 2018.

\bibitem[Buc43]{Buck43}
R.~C. Buck.
\newblock Partition of space.
\newblock {\em The American Mathematical Monthly}, 50(9):541--544, 1943.

\bibitem[CC23]{CorreaC23}
Jos{\'{e}}~R. Correa and Andr{\'{e}}s Cristi.
\newblock A constant factor prophet inequality for online combinatorial
  auctions.
\newblock In {\em Proceedings of STOC}, pages 686--697, 2023.

\bibitem[CCES20]{CorreaCES20}
Jos{\'{e}}~R. Correa, Andr{\'{e}}s Cristi, Boris Epstein, and Jos{\'{e}}~A.
  Soto.
\newblock The two-sided game of googol and sample-based prophet inequalities.
\newblock In {\em Proceedings of SODA}, pages 2066--2081, 2020.

\bibitem[CDF{\etalchar{+}}22]{CDFFLLP-SODA22}
Constantine Caramanis, Paul D{\"{u}}tting, Matthew Faw, Federico Fusco, Philip
  Lazos, Stefano Leonardi, Orestis Papadigenopoulos, Emmanouil Pountourakis,
  and Rebecca Reiffenh{\"{a}}user.
\newblock Single-sample prophet inequalities via greedy-ordered selection.
\newblock In {\em Proceedings of SODA}, 2022.

\bibitem[CDFS19]{CorreaDFS19}
Jos{\'{e}}~R. Correa, Paul D{\"{u}}tting, Felix~A. Fischer, and Kevin Schewior.
\newblock Prophet inequalities for {I.I.D.} random variables from an unknown
  distribution.
\newblock In {\em Proceedings of EC}, pages 3--17. {ACM}, 2019.

\bibitem[CHMS10]{ChawlaHMS10}
Shuchi Chawla, Jason~D. Hartline, David~L. Malec, and Balasubramanian Sivan.
\newblock Multi-parameter mechanism design and sequential posted pricing.
\newblock In {\em Proceedings of STOC}, pages 311--320, 2010.

\bibitem[Cla71]{Clarke71}
Edward~H Clarke.
\newblock Multipart pricing of public goods.
\newblock {\em Public choice}, 11(1):17--33, 1971.

\bibitem[CZ24]{CristiZilliotto24}
Andr{\'{e}}s Cristi and Bruno Ziliotto.
\newblock Prophet inequalities require only a constant number of samples.
\newblock In {\em Proceedings of STOC}, pages 491--502, 2024.

\bibitem[DFKL17]{DuettingFKL17}
Paul D\"utting, Michal Feldman, Thomas Kesselheim, and Brendan Lucier.
\newblock Prophet inequalities made easy: Stochastic optimization by pricing
  non-stochastic inputs.
\newblock In {\em Proceedings of FOCS}, 2017.

\bibitem[DK19]{DuttingK19}
Paul D{\"{u}}tting and Thomas Kesselheim.
\newblock Posted pricing and prophet inequalities with inaccurate priors.
\newblock In {\em Proceedings of EC}, pages 111--129, 2019.

\bibitem[DKL20]{DuttingKL20}
Paul D{\"{u}}tting, Thomas Kesselheim, and Brendan Lucier.
\newblock An o(log log m) prophet inequality for subadditive combinatorial
  auctions.
\newblock In {\em Proceedings of FOCS}, pages 306--317, 2020.

\bibitem[DNS05]{DobzinskiNS05}
Shahar Dobzinski, Noam Nisan, and Michael Schapira.
\newblock Approximation algorithms for combinatorial auctions with
  complement-free bidders.
\newblock In {\em Proceedings of STOC}, pages 610--618, 2005.

\bibitem[Fei09]{Feige09}
Uriel Feige.
\newblock On maximizing welfare when utility functions are subadditive.
\newblock {\em {SIAM} J. Comput.}, 39(1):122--142, 2009.

\bibitem[FGL15]{FeldmanGL15}
Michal Feldman, Nick Gravin, and Brendan Lucier.
\newblock Combinatorial auctions via posted prices.
\newblock In {\em Proceedings of SODA}, 2015.

\bibitem[FSZ16]{FeldmanSZ16}
Moran Feldman, Ola Svensson, and Rico Zenklusen.
\newblock Online contention resolution schemes.
\newblock In {\em Proceedings of SODA}, pages 1014--1033, 2016.

\bibitem[G{\etalchar{+}}73]{Groves73}
Theodore Groves et~al.
\newblock Incentives in teams.
\newblock {\em Econometrica}, 41(4):617--631, 1973.

\bibitem[GHT{\etalchar{+}}21]{GuoEtAl21}
Chenghao Guo, Zhiyi Huang, Zhihao~Gavin Tang, , and Xinzhi Zhang.
\newblock Generalizing complex hypotheses on product distributions: Auctions,
  prophet inequalities, and pandora’s problem.
\newblock In {\em Proceedings of COLT}, pages 248--2288, 2021.

\bibitem[GS20]{GS-Book20}
Anupam Gupta and Sahil Singla.
\newblock Random-order models.
\newblock In Tim Roughgarden, editor, {\em Beyond the Worst-Case Analysis of
  Algorithms}, pages 234--258. Cambridge University Press, 2020.

\bibitem[HKS07]{HajiaghayiKS07}
Mohammad~Taghi Hajiaghayi, Robert~D. Kleinberg, and Tuomas Sandholm.
\newblock Automated online mechanism design and prophet inequalities.
\newblock In {\em Proceedings of AAAI}, pages 58--65, 2007.

\bibitem[KNR20]{KaplanNR20}
Haim Kaplan, David Naori, and Danny Raz.
\newblock Competitive analysis with a sample and the secretary problem.
\newblock In {\em Proceedings of SODA}, pages 2082--2095, 2020.

\bibitem[KNR22]{KaplanNR22}
Haim Kaplan, David Naori, and Danny Raz.
\newblock Online weighted matching with a sample.
\newblock In {\em Proceedings of SODA 2022}, pages 1247--1272, 2022.

\bibitem[KRTV13]{KRTV-ESA13}
Thomas Kesselheim, Klaus Radke, Andreas T{\"{o}}nnis, and Berthold
  V{\"{o}}cking.
\newblock An optimal online algorithm for weighted bipartite matching and
  extensions to combinatorial auctions.
\newblock In {\em Proceedings of {ESA}}, volume 8125, pages 589--600, 2013.

\bibitem[KS77]{KrengelS77}
U.~Krengel and L.~Sucheston.
\newblock Semiamarts and finite values.
\newblock {\em Bulletin of the American Mathematical Society}, 83:745–747,
  1977.

\bibitem[KS78]{KrengelS78}
U.~Krengel and L.~Sucheston.
\newblock On semiamarts, amarts, and processes with finite value.
\newblock {\em Advances in Probability and Related Topics}, 4:197–266, 1978.

\bibitem[KW12]{KleinbergW12}
R.~Kleinberg and S.~M. Weinberg.
\newblock Matroid prophet inequalities.
\newblock In {\em Proceedings of STOC}, page 123–136, 2012.

\bibitem[RWW20]{RubinsteinWW20}
Aviad Rubinstein, Jack~Z. Wang, and S.~Matthew Weinberg.
\newblock Optimal single-choice prophet inequalities from samples.
\newblock In {\em Proceedings of ITCS}, volume 151 of {\em LIPIcs}, pages
  60:1--60:10, 2020.

\bibitem[SC84]{SamuelCahn84}
Esther Samuel-Cahn.
\newblock Comparison of threshold stop rules and maximum for independent
  nonnegative random variables.
\newblock {\em The Annals of Probability}, 12:1213–1216, 1984.

\bibitem[Vic61]{Vickrey61}
William Vickrey.
\newblock Counterspeculation, auctions, and competitive sealed tenders.
\newblock {\em The Journal of finance}, 16(1):8--37, 1961.

\bibitem[Wai19]{wainwright_2019}
Martin~J. Wainwright.
\newblock {\em High-Dimensional Statistics: A Non-Asymptotic Viewpoint}.
\newblock Cambridge Series in Statistical and Probabilistic Mathematics.
  Cambridge University Press, 2019.

\end{thebibliography}
\newcommand{\etalchar}[1]{$^{#1}$}

}

\end{document}